\documentclass[11pt]{article}%
\usepackage{setspace}
\usepackage{amssymb}
\usepackage{enumerate}
\usepackage{graphicx}
\usepackage{amsmath}
\usepackage{mathrsfs}
\usepackage{amsfonts}
\usepackage{mathpazo}
\usepackage{mathtools}
\usepackage{subcaption}
\usepackage{pdflscape}

\usepackage{harvard}
\usepackage{comment}
\usepackage{url}
\usepackage{physics}
 \usepackage{autobreak}
 \usepackage[draft]{todonotes}
 \usepackage{booktabs}
 \usepackage{float}   

 \usepackage{enumitem}         
\setlist[enumerate]{nosep, topsep=2pt, parsep=2pt, partopsep=2pt}
\setlist[itemize]{nosep, topsep=2pt, parsep=2pt, partopsep=2pt}

\usepackage[multiple]{footmisc}
\usepackage[all,2cell,ps]{xy}
\usepackage{titlesec}
\usepackage{tikz}
\usepackage{xr}
\usepackage{titlesec}

\usepackage{etoolbox}

\AtBeginEnvironment{proposition}{%
  \setlength{\abovedisplayskip}{6pt}%
  \setlength{\belowdisplayskip}{6pt}%
  \setlength{\abovedisplayshortskip}{6pt}%
  \setlength{\belowdisplayshortskip}{6pt}%
}

\titleformat*{\section}{\normalsize\bfseries}
\titleformat*{\subsection}{\normalsize\bfseries}
\titleformat*{\subsubsection}{\normalsize\bfseries}
\titleformat*{\subparagraph}{\normalsize\itshape}

\newcommand\blfootnote[1]{%
	\begingroup
	\renewcommand\thefootnote{}\footnote{#1}%
	\addtocounter{footnote}{-1}%
	\endgroup
}

\titleformat{\paragraph}
{\normalfont\normalsize\bfseries}{\theparagraph}{1em}{}
\titlespacing*{\paragraph}{0pt}{3ex plus 1ex minus .2ex}{1.5ex plus .2ex}
\titlespacing*{\section}{0pt}{1.7ex plus .2ex minus .2ex}{0.6ex plus .1ex}
\titlespacing*{\subsection}{0pt}{1.1ex plus .2ex minus .2ex}{0.6ex plus .1ex}

\newtheorem{defn}{Definition}
\newtheorem{thm}{Theorem}
\newtheorem{prop}{Proposition}

\newtheorem{cor}{Corollary}

\newtheorem{lem}{Lemma}
\newtheorem{rem}{Remark}
\newcommand{\edge}[1]{\ar@{-}[#1]}

\graphicspath{ {"Graphics/"} }

\includecomment{comment}

\def\l{\ensuremath\left}
\def\r{\ensuremath\right}

\allowdisplaybreaks

\begin{document}
\title{Tax Salience: How Requiring Transparency\\Affects the Price of Equality}

\author{Ashley Craig\\
\textit{Australian National University}\\
$ $\\
Itai Sher \\
\textit{University of Massachusetts Amherst}}
\maketitle
\begin{abstract}
\noindent Less-salient taxes can ease the classic equality-efficiency trade-off by making people respond less to taxation. But deliberately obscuring taxes may be viewed as dishonest. This creates a three-way trade-off between equality, efficiency, and honesty. We analyze this trade-off in a simple setting with a linear income tax. We define and characterize the morally efficient frontier, trading off utilitarian welfare against honesty or transparency.  Complete honesty is Pareto inefficient but not morally inefficient.  More generally, any increase in honesty reduces utilitarian welfare.   When utilitarian welfare is decomposed into equality and efficiency, the cost of honesty falls most robustly on equality: higher salience always reduces equality, while the effect on efficiency is ambiguous. This asymmetry is explained by the fact that salience increases the \textit{price of equality}, which is the efficiency cost of a marginal increase in equality.  Our approach could be applied to other settings in which utilitarian and procedural or deontological values conflict.

\bigskip\bigskip
\noindent \textbf{JEL codes}: D30, D60, D63, H11 \\
\noindent \textbf{Keywords}: \emph{Optimal Taxation}; \emph{Salience}; \emph{Normative Economics}; \emph{Inequality}; \emph{Internalities}
\end{abstract}

\singlespacing
\blfootnote{\scriptsize \hskip -20pt \textbf{Contact}: Ashley Craig (ashley.craig@anu.edu.au); Itai Sher (isher@umass.edu).}
\blfootnote{\scriptsize \hskip -20pt \textbf{Declarations of interest}: None.}
\blfootnote{\scriptsize \hskip -20pt \textbf{Thanks}: We thank Matthew Weinzierl, Jacob Goldin, and seminar participants at the Australian National University, University of Massachusetts Amherst, University of Rochester, University of Connecticut, University at Buffalo, Normative Aspects of
Economic Policy workshop at the University of Zurich, Frontiers of Economics and Philosophy at the Paris School of Economics, Australian Public Economics Exchange, and National Tax Association  for valuable comments that have improved the paper. Artificial intelligence tools (ChatGPT and Claude) were used to assist with mathematical derivations, drafting and coding. The authors developed the model and proofs, conceived and organized the analysis, and verified all results.}

\onehalfspacing
\clearpage

\section{Introduction}

The classic trade-off in taxation is between equality and efficiency. Taxes can be used for redistribution, but they also have incentive effects which make that redistribution costly. This trade-off can be eased if the government can rely on taxes that are less salient, meaning that taxpayers partly ignore or do not pay full attention to them \citeaffixed{LiebmanZeckhauser2004,ChettyLooneyKroft2009,ChettyFriedmanSaez2013,AbelerJager2015,morrison2023rules,gebbiaGrayHancuchOrganTaxRateMisperception}{see e.g.,}. People will respond less to non-salient taxes, which reduces the efficiency costs of taxation.

Based on this logic, numerous authors have observed that non-salient or non-transparent taxes can increase social welfare. \citeasnoun{Galle2009HiddenTaxes} highlights that hidden taxes can mitigate behavioral responses of high-income individuals, allowing for more redistribution with smaller  efficiency costs. \citeasnoun{gamage2011three} actively argue for ``non-salient'' taxes to reduce the distortionary costs of taxation. \citeasnoun{craig2024tax} show that obfuscating incentives created by the tax system can be socially beneficial. \citeasnoun{Goldin2015} shows how welfare can be increased by combining salient subsidization with non-salient taxation.\footnote{In related work, \citeasnoun{agersnap2023} study the potential for tax complexity to be used by governments to screen more or less responsive taxpayers if more knowledgeable taxpayers are also more responsive.}

However, misleading taxpayers to pursue social goals is likely to be regarded as dishonest, or as violating norms that government policy ought to be transparent to citizens. This is reflected in the \citeasnoun{irsmission} mission statement, which requires that it help people understand their tax responsibilities and enforce the law with integrity. If one takes such moral considerations seriously, then taxation involves not just the standard two-way trade-off between efficiency and equality, but rather a three-way trade-off between equality, efficiency and honesty. In this paper, we explore that trade-off formally.\footnote{\citeasnoun{Goldin2015} is clearly cognizant of the deontological and political economy hazards of policies that harness non-salience, but these are not in the formal analysis. Similarly, \citeasnoun{Galle2009HiddenTaxes} discuss political risks of deliberately reducing tax salience. \citeasnoun{gamage2011three} and \citeasnoun{Goldin2012} also discuss such concerns conceptually.} We use simple economics to shed light on how an honesty or transparency requirement affects welfare and systematically shifts the balance between equality and efficiency.

We study a simple setting with a linear income tax and a constant labor supply elasticity. Salience is the degree to which agents under-perceive the tax rate.  One interpretation of the model is that there is a choice between a salient income tax and a less-salient consumption tax, with the overall level of salience determined by the mix of the two taxes. There are also other ways governments could choose to make taxes more or less salient. For example, phasing out transfers may be less salient than explicit taxation. Alternatively, simplifying the tax system may make it easier for households to optimally respond to it.

Our paper builds a non-welfarist component on top of an otherwise standard economic framework. We consider two ethical objectives: (1) utilitarian social welfare and (2) honesty or transparency, and assume that taxes that are \textit{deliberately} less salient are less honest or transparent.  Rather than specifying a trade-off between the two, we define notions of \textit{moral dominance} and \textit{moral efficiency}, analogous to Pareto dominance and efficiency, but applied to these objectives rather than to individual preferences. 
An advantage of our approach in comparison to generalized social welfare weights \cite{SaezStantcheva2016} is that procedural values like honesty cannot easily be captured by reweighting individual benefits. Such weights are also generally inconsistent with maximization of a well-defined social objective \cite{Sher2024GSMWW}.

Our first step is to show that utilitarian social welfare is decreasing in salience at the optimal tax rate, which implies a trade-off between welfare and honesty (Theorem \ref{salience vs welfare proposition}).  Moreover, starting from a fully salient tax system and holding the tax rate fixed, reducing salience has no first-order effect on utility through behavior, but increases labor supply and thus the tax rebate received by all agents, resulting in a Pareto improvement (Proposition \ref{prop:pareto}).

We next show that salience shifts the balance between equality and efficiency.  To study this, we use a standard decomposition, due to \citeasnoun{atkinson1970measurement} of the form:
\begin{align}\label{equality-efficiency}
\textup{Utilitarian Welfare} =   \textup{Equality} \times \textup{Efficiency}.
\end{align}
An advantage of this approach is that it appeals to the notion of equality that is inherent in the utilitarian social welfare function.

The fact that, at the optimal tax rate,  increasing tax salience lowers utilitarian welfare implies that doing so must have a cost to efficiency, equality, or both. One might expect higher salience to reduce efficiency, because more salient taxes effectively raise the elasticity of taxable income. However, the sign of the impact of salience on efficiency at the optimum turns out to be ambiguous. Instead, we show that increasing salience always increases inequality---a result we prove formally in Theorem \ref{equality theorem}. In this sense, the robust conflict is actually between honesty or transparency and the desire to reduce inequality.

A key concept that we introduce to obtain these stark results is the \textit{price of equality}.  Figure \ref{stylized}(a) presents the choice of equality ($E$) and efficiency ($\mu$). For any level of salience, there is a frontier of combinations of $E$ and $\mu$ that can be achieved by varying the marginal tax rate; this is illustrated by the concave red line.  The price of equality at any point on the frontier is the cost in terms of efficiency that must be paid to attain an additional unit of equality. It corresponds to the slope of the blue line.  We establish that \textit{the price of equality rises with salience}.  

With this result in hand, the effects of salience on equality and efficiency can be explained using the simple economics of income and substitution effects.  Figure \ref{stylized}(b) shows the impacts of lowering tax salience. The green curve is the equality-efficiency frontier at a high level of salience and the red curve is the frontier at a low level. The multiplicative decomposition of utilitarian welfare (equation~(\ref{equality-efficiency})) implies convex indifference curves, drawn in black. The highest feasible welfare is obtained at the point where the equality-efficiency frontier is tangent to the highest attainable indifference curve.  When salience falls, the equality-efficiency frontier tilts and shifts upward, leading to both substitution and income effects. The substitution effect is shown by the ``compensated" change from the original green frontier to the orange frontier and corresponds to the movement from optimum A to optimum B. The substitution effect leads to more equality and less efficiency when salience falls, due to the decline in the price of equality. The income effect  corresponds to a vertical upward shift in the equality-efficiency frontier, which pushes toward more of both equality and efficiency, as they are both ``normal" goods.

\singlespacing
\begin{figure}[h!]
\begin{center}
\vspace{-1em}
\caption{Effect of Reducing Transparency \label{stylized}} 
\vspace{-0.6em}
\subfloat[Price of Equality]
{\includegraphics[
  width=0.49\textwidth,
  trim=0 0 0 10pt
]{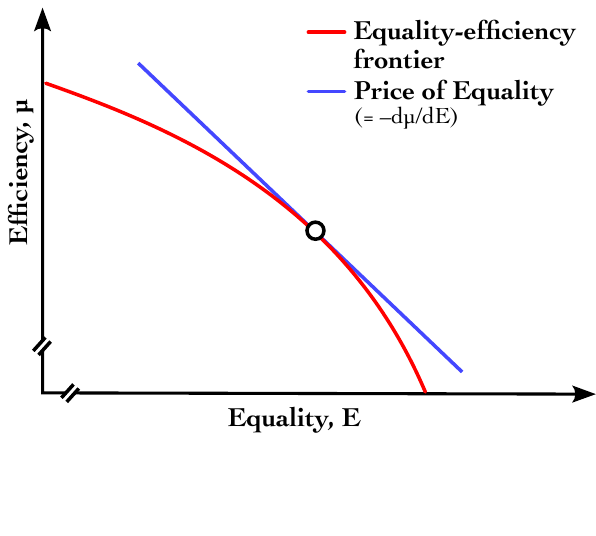}} \hspace{0.05em}
\subfloat[Income and Substitution Effects] 
{\includegraphics[
  width=0.49\textwidth,
  trim=0 0 0 10pt
]{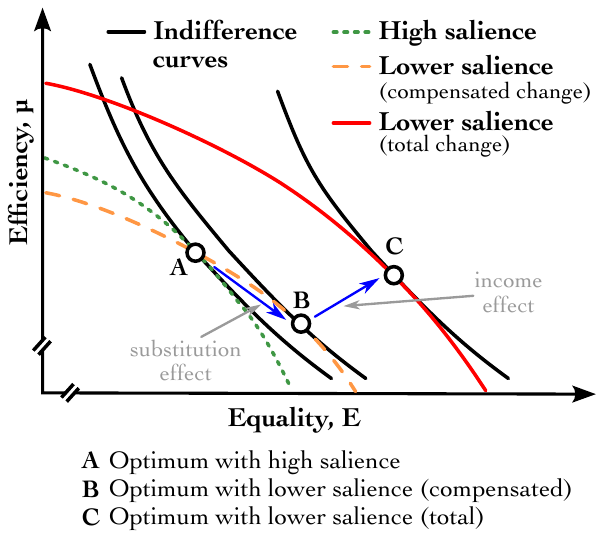}}
\end{center}
\vspace{-0.8em}
{\footnotesize\textit{Note.} This figure illustrates the utilitarian optimal tax problem formulated in terms of equality and efficiency at different levels of salience. Panel (a) displays the price of equality.  Panel (b) presents the effect of lowering salience on the optimum, with the overall impact divided into substitution and income effects.}
\end{figure}
\onehalfspacing

 The final optimum is at C. Equality is unambiguously higher with lower salience, while the impact on efficiency is ambiguous.  The reason for this asymmetry is that, with equality, the income and substitution effects move in the same direction and reinforce, whereas with efficiency they move in opposite directions and offset. This explains why the fundamental conflict is between transparency and equality, and not necessarily efficiency.

 The normative approach of this paper could be applied more broadly to modeling trade-offs between utilitarian and procedural values in economic settings. Typically, welfare economics is concerned with efficiency and aggregate welfare, with little attention to deontological criteria, such as rights and responsibilities. Honesty is one such criterion. There are many situations in which being less than fully honest or transparent might be useful from the standpoint of welfare, and yet people may still object to dishonesty or obfuscation. There are also other deontological objections that people may have to welfare maximizing policies, attaching to paternalism, liberty, rights, or discrimination, where a similar approach may be appropriate.
 
\subsection{Related Literature} 
In addition to the literature on tax salience cited above, our paper contributes  to the broader literature on optimal policy in the presence of taxpayer biases \cite{Gerritsen2016b,TaubinskyReesJones2018,ReesJonesTaubinsky2020,FarhiGabaix2020,MooreSlemrod2021}, including how biases depend on the tax system \cite{TaubinskyReesJones2018,MooreSlemrod2021}. \citeasnoun{BoccanfusoFerey2024} study a setting in which misperceptions of the tax system self-correct.  \citeasnoun{craig2024tax} present a general model of optimal taxation when individuals have imperfect understanding of the tax system. Our analysis focuses on the trade-off between transparency and welfare, as well as on how salience affects efficiency and equality.

The contrast between utilitarianism and deontology is central to philosophical ethics; and normative criteria with a deontological component have drawn some attention from economists \cite{Sen1970,GaertnerPattanaikSuzumura1992,KanburPirttilaaTuomala2006,ZamirMedina2010,Sugden2018,Roemer2019}.  

There is also a literature on multi-criterion optimization at the boundary of computer science and economics, studying trade-offs between criteria such as welfare, simplicity, accuracy, and fairness in algorithmic settings \cite{KleinbergMullainathan2019,hu2020fair,wei2022fairness}. \citeasnoun{liang2025algorithm}
study the trade-off between accuracy and group-based fairness by considering the fairness-accuracy frontier.  They find that policies on the frontier may be Pareto inefficient, which is analogous to our Proposition \ref{prop:pareto} in their setting. In a different literature, \citeasnoun{AcemogluEtAl2021} study the frontier of economic losses and excess deaths in the context of COVID lockdowns, rather than invoking a specific value of life. This is similar in spirit to our approach of studying the frontier between social objectives rather than specifying the trade-off.

\vspace{-0.1em}
\section{A model of tax misperception}\label{sec: model}

We assume a continuum of agents $i$ uniformly distributed on the interval $\l[0,1\r]$. Agents have quasilinear utility over consumption $c_i$ and pretax income $z_i \geq 0$:
\vspace{-0.4em}\begin{equation*}\vspace{-0.2em}
U_i\l(c_i,z_i\r) = u\l(c_i - v_i\l(z_i\r)\r).
\end{equation*}
The function $v_i\l(z_i\r)$ measures the cost of earning income $z_i$.  Where we do not impose specific functional forms, we assume that the relevant functions are continuous and sufficiently smooth so that all derivatives used in the analysis are well-defined and continuous. Utility is cardinal and interpersonally comparable.  The curvature of the outer utility function $u\l(\cdot\r)$ encodes the degree of inequality aversion.  The function $u\l(\cdot\r)$ does not affect the agent's decision problem but it does determine their \textbf{welfare weight}:
\vspace{-0.3em}\begin{equation*}\vspace{-0.6em}
g_i\l(c_i,z_i\r) = \pdv{c_i}U_i\l(c_i,z_i\r)=u'\l(c_i-v_i\l(z_i\r)\r).
\end{equation*}
It is also useful to define the average welfare weight $\bar{g}=\int_0^1 g_i \dd i$.

\vspace{-0.1em}
\subsection{Taxes and salience}\label{sec: taxes and salience}

The government sets a marginal income tax rate $\tau$. The per-capita revenue collected is $\tau \bar{z}$, where $\bar{z}=\int_0^1 z_i \dd i$, and is rebated  lump-sum to all agents.  Thus, agent $i$'s consumption is $c_i = \l(1-\tau\r)z_i + \tau \bar{z}$, and $i$'s utility is $u \l( z_i\l(1-\tau\r) -v_i\l(z_i\r) + \tau \bar{z}\r)$.

There is a salience parameter $s$ in $\l(0,1\r]$, which captures the degree to which agents under-perceive taxes.  Thus, agents perceive marginal tax rate $\tilde{\tau}=s \tau$, and they select income $z_i$ to maximize $z_i\l(1-\tilde{\tau}\r) - v_i\l(z_i\r)$.  Because utility is quasi-linear, the anticipated rebate does not affect the choice of $z_i$.  Decisions are determined by the (mis-)\textit{perceived} tax rate $\tilde{\tau}$, and these decisions combine with the \textit{actual} tax rate $\tau$ to determine consumption and utility. 

If taxes were fully salient, then the first-order condition for $z_i$ would be $1-\tau - v'_i\l(z_i\r) =0$.  But with imperfect salience, it becomes: $1-\tilde{\tau} - v'_i\l(z_i\r) =0$, which is equivalent to:
\vspace{-0.1em}\begin{equation}\vspace{-0.1em}\label{first order condition internality}
1- \tau  - v'\l(z_i\r) =\tilde{\tau}-\tau. 
\end{equation}
Note that $\tilde{\tau}-\tau< 0$.  This implies that agents choose $z_i$ past the peak of the objective they would maximize if they fully perceived taxes, working more than is privately optimal. 

We focus on $s$ in $\l(0,1\r]$ because that is the region in which there is a trade-off between honesty and utilitarian welfare; when $s> 1$, moving $s$ towards 1 improves accuracy of beliefs and increases welfare. We also exclude the degenerate (and unrealistic) case when $s=0$ because that simplifies some formulations below.

In what follows, we assume that the cost of earning income $v_i\l(z_i\r)$ takes the form
\vspace{-0.1em}\begin{equation}\vspace{-0.1em}\label{form cost of earning income}
v_i\l(z_i\r) = \frac{\l(\frac{z_i}{w_i}\r)^{1+\frac{1}{\varepsilon}}}{1+\frac{1}{\varepsilon}},
\end{equation}
 where $\varepsilon >0$ and where $w_i$ is $i$'s wage so that $\ell_i=\frac{z_i}{w_i}$ is $i$'s labor supply.  We assume that the distribution of wages is exogenously given and that the function $i \mapsto w_i$ is strictly increasing in $i$, with $w_i > 0$, for all $i$. Agents are heterogeneous only in their wages. 

It then follows from the agent's optimization problem that: 
\vspace{-0.1em}\begin{equation}\vspace{-0.1em}\label{zi expression}
z_i =w_i^{\varepsilon + 1}\l(1-\tilde{\tau}\r)^{\varepsilon}.
\end{equation}
So income is increasing in the wage, decreasing in the tax rate $\tau$, and decreasing in salience $s$.

 \subsection{Behavioral responses}\label{sec: behavioral responses}

With the above functional form, $\varepsilon$ is the elasticity of income with respect to the \textit{perceived} retention rate, $1-\tilde{\tau}$, for all agents and at all income levels. It is convenient to record the derivatives of $z_i$ and $\bar{z}$ with respect to $1-\tilde{\tau}$ as follows:
$\dv{z_i}{\l(1-\tilde{\tau}\r)} = \frac{\varepsilon z_i  }{1-\tilde{\tau}}$ and $\dv{\bar{z}}{\l(1-\tilde{\tau}\r)} = \frac{\varepsilon \bar{z}  }{1-\tilde{\tau}}$. We can also derive the responses of $z_i$ and $\bar{z}$ to the \textit{actual} retention rate. Using $1-\tilde{\tau}=1-s\tau$ and the chain rule, 
\vspace{-0.1em}\begin{align}\vspace{-0.1em}\label{tax derivatives}\pdv{z_i}{\l(1-\tau\r)} =& \frac{s\varepsilon z_i}{1-s\tau}, &\pdv{\bar{z}}{\l(1-\tau\r)} = \frac{s\varepsilon \bar{z}}{1-s\tau}.
\end{align}
This implies elasticities:
\vspace{-0.4em}\begin{equation*}\vspace{-0.6em}
\pdv{z_i}{\l(1-\tau\r)} \frac{1-\tau}{z_i} = \pdv{\bar{z}}{\l(1-\tau\r)}\frac{1-\tau}{\bar{z}} =  \underbrace{\frac{s\l(1-\tau\r)}{1-s\tau}}_{ \alpha\l(\tau,s\r)}\varepsilon. \end{equation*}
For any fixed $\tau$, $\alpha\l(\tau,s\r)$ increases from $0$ to $1$ as $s$ increases from $0$ to $1$.  It follows that, for any fixed $\tau$, \textit{reducing salience is behaviorally equivalent to reducing the elasticity of income with respect to the retention rate}.\footnote{In this sense, changing tax salience is similar to choosing the elasticity of taxable income, $\varepsilon$. \citeasnoun{SlemrodKopczuk2002} focus on the use of costly tools such as enforcement to manipulate $\varepsilon$, which---unlike in our paper---does not involve an internality. They study the choice of $\varepsilon$, depending on the degree of inequality aversion of the planner.  In our setting, the choice of effective elasticity, through the salience level and tax rate, would ultimately depend on how one trades off transparency and utilitarian welfare.} 
 
Finally, the responses of $z_i$ and $\bar{z}$ to salience are given by:
\begin{align}\label{dvzi}
\pdv{z_i}{s} = \dv{z_i}{\l(1-\tilde{\tau}\r)}\pdv{\l(1-\tilde{\tau}\r)}{s} &= - \frac{ \varepsilon \tau z_i}{1-s\tau}&\pdv{\bar{z}}{s} &= - \frac{ \varepsilon \tau \bar{z}}{1-s\tau}.
\end{align}
The rate at which income decreases in response to an increase in $s$ or $\tau$, holding the other fixed, is proportional to income $z_i$. 

\subsection{Interpretation and scope of the model}\label{alternative interpretation section}

We do not take a stance on the extent to which the government can alter tax salience, or how it does so. Our analysis is relevant if the government can marginally influence the salience level. One  interpretation is that our model represents a choice of how much to rely on a salient income tax, or a less-salient consumption tax. It is well-documented that individuals pay less-than-complete attention to commodity taxes when making their purchase decisions \cite{ChettyLooneyKroft2009,goldin2013smoke,feldman2018raising,TaubinskyReesJones2018,kroft2024salience}. It is less clear whether commodity taxes are also less salient than income taxes when people choose their labor supply, but experimental evidence suggests this is the case \cite{BlumkinRuffleGanun2012EER}.\footnote{It is hard to assess whether people correctly choose their labor supply in response to the overall level of commodity taxes. While not a labor supply response, \citeasnoun{rozema2017taxing} also show that households with smokers in them are sufficiently forward-looking to increase their take-up of food stamps when cigarette taxes increase.}

We show in Appendix \ref{app: alternative interpretation section} that this choice over a mix of two taxes is equivalent to a choice of $s$ and $\tau$. Specifically, take a pair $\l(\tau,s\r)$ in our model.  The government can achieve precisely the same incentives and allocations by choosing the right mix of a salient income tax $\tau^L$ and an exogenously less-salient consumption tax $\tau^C$. If we require both $\tau^L$ and $\tau^C$ to be non-negative, this translation works provided the target salience is not too low relative to the salience $s^C$ of $\tau^C$.\footnote{The exact threshold depends on the target effective tax rate $\tau$ in addition to $s^C$.}  

Governments can manipulate salience in other ways as well. Phasing out a transfer received by households may be less salient than imposing a marginal tax rate, even if both produce equivalent budget sets. Thus, choosing different combinations of taxes and phase-out rates can alter salience while retaining the same true overall effective tax rate.\footnote{An example of misperception about transfers is documented by \citeasnoun{ChettyFriedmanSaez2013}, who show that many households in the United States appear to be unaware of incentives created by the Earned Income Tax Credit (EITC). \citeasnoun{craig2024tax} show that this creates an asymmetric incentive: in the phase-in region, where income is subsidized, informing households improves welfare; in the phase-out region, where households face a higher marginal tax rate as the EITC is phased out, there is an incentive to leave households uninformed. This would be an example of the welfare-honesty trade-off we study.} Tax complexity could also undermine understanding of incentives.  \citeasnoun{MCCAFFERY2003230} show that disaggregation---splitting a tax into separate components---affects perceptions of fairness and tax burden. If the same failure to aggregate across components affects tax salience, then choosing a more complex, disaggregated, tax system could be used to deliberately reduce tax salience.

One might expect that non-salience is temporary, and taxpayers would eventually learn to correctly perceive the tax schedule they face.\footnote{\citeasnoun{BoccanfusoFerey2024} study a setting in which, when misperceptions are exploited, they adjust toward the truth. A utilitarian government would want to commit to limiting the exploitation of misperceptions.  This is a complementary reason to avoid deception that is not present in our model.}  However, the evidence discussed above suggests that tax misperceptions are persistent and people do not fully adjust. Moreover, the tax system is complex and changing, which can interfere with adjustment. Our analysis can be interpreted as capturing either short-term misperceptions before learning has occurred, or residual equilibrium misperceptions that persist after learning.

In some cases, non-salience is intended to influence a political decision such as voting rather than an economic decision such as labor supply. For example, some countries deliberately choose not to index their tax brackets, because this allows tax rates to rise over time in real terms without any politically unpalatable active choice to increase them.  This type of deliberate political obfuscation has both similarities to and differences from non-salience used to increase efficiency \cite{gamage2011three}. We focus on the latter.

\section{Ethical criteria}\label{ethical criteria section}
We evaluate tax policy using two criteria:
\begin{enumerate}
\item utilitarian welfare; and 
\item honesty or transparency.
\end{enumerate}

First we consider utilitarian welfare.  A state is a pair $\l(\tau,s\r)$, where $\tau \in \l[0,1\r]$ is the marginal tax rate and $s \in \l(0,1\r]$ is the salience level.  We assume for simplicity that the salience level $s$ is the same for all agents. Utilitarian social welfare is given by:
\vspace{-0.3em}\begin{equation*}\vspace{-0.3em}
W\l(\tau,s\r) = \int_{\l[0,1\r]} u \l( z_i\l(1-\tau\r) -v_i\l(z_i\r) + \tau \bar{z}\r) \dd i.
\end{equation*}
where income $z_i$ is chosen to respond optimally to the perceived tax rate $s\tau$.

The second value we consider is honesty or transparency, measured by $s$.  A lack of salience corresponds to dishonesty insofar as the government purposefully takes actions with the intent of reducing salience. If the government adopted a non-salient tax but had no reason to believe that it was non-salient, the lack of salience would simply be a behavioral bias, and would not imply any dishonesty on the part of the government. As we are interested in the policy decision to reduce salience, here we assume that changes in salience result from deliberate decisions.\footnote{We do not need to assume that every level of salience is achievable.  We are interested in the desirability of marginal changes.  We could, for example, assume that salience were only alterable within a small window $\l[s_0,s_1\r]$ and our main results would not change.}

Honesty and transparency are closely related but distinct values.  Dishonesty is defined as purposefully moving someone's beliefs away from the truth; non-transparency is defined as obscuring the facts and making them difficult to perceive.  Governments may be thought to have duties both to be honest and transparent.  Deliberate non-salience may be viewed as flouting either value.  For our purposes, it is not essential to finely specify the underlying value and our analysis is consistent with either interpretation.

\begin{defn}
Say that state $\l(\tau,s\r)$ \textbf{morally dominates} state $\l(\tau',s'\r)$ if either:
\begin{enumerate}
\item $s \geq s'$ and $W\l(\tau,s\r) > W\l(\tau',s'\r)$, or 
\item  $W\l(\tau,s\r) \geq W\l(\tau',s'\r)$ and $s > s'$.
\end{enumerate}
A state is \textbf{morally efficient} if there does not exist another state that morally dominates it.  
\end{defn} 

It is possible that one state
 $\l(\tau,s\r)$ Pareto dominates another state $\l(\tau',s'\r)$, 
 but $\l(\tau,s\r)$ does \textit{not} morally dominate $\l(\tau',s'\r)$ (since $s' > s$).  Indeed, we will present such a situation below (see Remark~\ref{policy interpretation remark}).  In such a situation, every individual is better off in $\l(\tau,s\r)$ but $\l(\tau,s\r)$ is less honest, so $\l(\tau,s\r)$ is not necessarily judged as better.  Whether this is the case depends on how one trades off utilitarian welfare and honesty/transparency.  

Many people regard transparency and honesty as important in tax policy and elsewhere.  There are two classes of potential reasons---\textit{deontological} and \textit{farsighted utilitarian}. Here we outline these reasons without endorsing them. 

 \noindent\textbf{Deontological reasons.} People should understand the laws and policies they face; it is bad to knowingly exploit people's misunderstandings, especially by the government.  These reasons appeal to ethical factors that are often absent in economic analysis: that intentions matter and that it is wrong to deceive; that transparency or honesty may be values in their own right; and that there may be special democratic duties of a government to be transparent and honest to its citizens. These are intricate values that we capture in a blunt way via the imperative to maximize $s$.    

 \noindent \textbf{Farsighted utilitarian reasons.} Acts such as exploiting non-salient taxes may erode trust in government and have farther reaching consequences.  Thus deception or policies perceived as manipulative may be bad for instrumental as well as intrinsic reasons.  

Under the farsighted interpretation, the conflict between honesty and welfare arises only because the standard optimal tax objective omits long-run welfare consequences; it is not a fundamental conflict.  Under the deontological interpretation, the conflict is genuine: a Pareto gain achieved through deceptive or nontransparent means may not be worth it.  The farsighted and deontological interpretations are not mutually exclusive; both types of reasons may be at play.  

One might be tempted to reduce deontological moral concerns to a matter of ethical preferences---that is, to incorporate a preference for honesty into agents' utility. However, violating a moral principle is conceptually distinct from not satisfying someone's preference that the principle be satisfied. Moreover, treating moral principles as preferences raises controversial and unresolved questions about whether ethical preferences should count towards individual welfare \cite{Sen1977RationalFools,Dworkin1990DoubleCounting,Milgrom1993Sympathy,KaplowShavell2002Fairness,Sher2020PerspectiveBased}.

People may differ in how they weigh welfare against honesty or transparency---from refusing to exploit salience regardless of welfare gains, to dismissing honesty concerns entirely, to trading the two off.  We do not take a stand.  Our framework accommodates any of these attitudes, allowing us to discuss the trade-offs without advocating for a specific point on the moral frontier.

\vspace{-0.1em}
\section{The trade-off between honesty and welfare}

We now characterize the morally efficient frontier.  

\vspace{-0.1em}
\begin{defn}\vspace{-0.1em}
For any $s \in \l(0,1\r]$, say that $\tau^*$ is $s$\textbf{-optimal} if $\tau^*$ maximizes $W\l(\tau,s\r)$.
\end{defn}
We denote the $s$-optimal marginal tax rate by $\tau\l(s\r)$; its uniqueness is shown in the Appendix (Proposition \ref{prop: uniqueness}). A formula characterizing the $s$-optimal tax rate is given in equation (\ref{useful optimal tax formula}) in Section \ref{sec: optimal tax rates} below, where we also discuss its properties further.

\vspace{-0.1em}\begin{prop}\vspace{-0.1em}\label{one way moral efficiency proposition}
If $\l(\tau,s\r)$ is morally efficient, then $\tau$ is $s$-optimal.
\end{prop}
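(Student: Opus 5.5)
The plan is to argue by contraposition: if $\tau$ is not $s$-optimal, then we exhibit a state that morally dominates $\l(\tau,s\r)$. The natural candidate keeps the salience level fixed and changes only the tax rate.

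First, we check that an $s$-optimal rate exists for every $s\in\l(0,1\r]$. The domain of $\tau$ is the compact interval $\l[0,1\r]$. Also, $W\l(\cdot,s\r)$ is continuous: $z_i=w_i^{\varepsilon+1}\l(1-s\tau\r)^{\varepsilon}$ is continuous in $\tau$, and with $u$ and $v_i$ continuous the integrand is continuous in $\tau$. Dominated convergence, or the maintained smoothness assumptions, then gives continuity of the integral. So a maximizer $\tau\l(s\r)$ exists; Proposition \ref{prop: uniqueness} in the Appendix also shows it is unique.

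Now suppose $\tau$ is not $s$-optimal. Then $\tau$ does not maximize $W\l(\cdot,s\r)$ over $\l[0,1\r]$, so there is some $\tau'$ with $W\l(\tau',s\r)>W\l(\tau,s\r)$; one can take $\tau'=\tau\l(s\r)$. Consider the state $\l(\tau',s\r)$. Its salience equals $s$, so $s\geq s$ holds, and its welfare is strictly higher. Clause 1 of the definition of moral dominance therefore holds, so $\l(\tau',s\r)$ morally dominates $\l(\tau,s\r)$. Hence $\l(\tau,s\r)$ is not morally efficient, which proves the contrapositive.

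The only step needing care is the first one. Strictly speaking, the contrapositive uses only the existence of a $\tau'$ with strictly higher welfare, which follows directly from the definition of "maximizes". Existence of the maximum is needed only to guarantee that $\tau\l(s\r)$ is well defined, and that comes from the compactness and continuity argument above. The rest is immediate from the definitions.
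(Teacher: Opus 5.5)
Your proof is correct and is the paper's own argument: if $\tau$ is not $s$-optimal, some $\tau'$ gives strictly higher welfare at the same $s$, so $(\tau',s)$ morally dominates $(\tau,s)$ by clause 1. As you note, the existence and compactness step is not needed and can be dropped; it is also the only shaky part, because $W(\cdot,s)$ need not be finite everywhere on $[0,1]$ (for instance $\hat{c}_i=0$ at $s=\tau=1$, where the isoelastic $u$ is unbounded when $\rho\geq 1$).
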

\textit{Proof.}  If $\tau$ were not $s$-optimal, then it would be possible to increase utilitarian social welfare without lowering $s$ by modifying $\tau$. $\square$

We next ask whether the converse holds---whether $s$-optimality of the tax rate is sufficient for moral efficiency. This would be the case only if utilitarian welfare and salience are in strict opposition, so that any increase in honesty necessarily reduces welfare.  Otherwise, starting at salience level $s$ and an $s$-optimal tax rate, we might be able to increase salience and welfare simultaneously.  Theorem \ref{salience vs welfare proposition} below shows that this is impossible and that salience and welfare are indeed diametrically opposed.

Define $i$'s \textbf{consumption equivalent} to be  
\vspace{-0.3em}\begin{equation}\vspace{-0.2em}\label{money-metric utility}
\hat{c}_i = z_i\l(1-\tau\r) - v_i\l(z_i\r) + \tau \bar{z}. 
\end{equation}
Equivalently, $\hat{c}_i$ is the level of consumption that, combined with zero labor, would give $i$ the same utility as the bundle  $\l(c_i,z_i\r)$, where $c_i =\l(1-\tau\r)z_i + \tau \bar{z}$. Since $u\l(\cdot\r)$ is strictly increasing, $\hat{c}_i$ and $u\l(\hat{c}_i\r)$ provide ordinally equivalent representations of $i$'s utility.

Using (\ref{first order condition internality}) and (\ref{dvzi}), we have:

\vspace{-2em}
\begin{align}\label{ci derivative s}
\begin{split}
\pdv{\hat{c}_i}{s}  
= \frac{\varepsilon \tau^2}{1-s \tau}\l[z_i  \l(1-s\r)- \bar{z} \r].
\end{split}\vspace{-0.1em}
\end{align} We start with the case of perfect salience. When $s =1$,  $\pdv{\hat{c}_i}{s} =- \frac{\varepsilon \tau^2}{1-s \tau}\bar{z} < 0$, for all  $i$.  At full salience, agents are optimizing subject to the true tax rate, so reducing salience has no first order effect on the utility of agents through the induced change in their own behavior.  However, when salience is reduced, everyone works a little more, leading to  a larger rebate, which benefits everyone.  As the effect on the rebate is the same for everyone, at $s=1$, $\pdv{\hat{c}_i}{s}$ does not depend on $i$.  Compactness and smoothness properties now imply Proposition \ref{prop:pareto}.\footnote{The set $\l[0,1\r]$ of agents $i$ is compact, and, for each $\tau$, $\hat{c}_i(\tau,s)$ is continuously differentiable in $(i,s)$. Since $\l.\pdv{\hat{c}_i}{s}\r|_{s=1}<0$ for all $i$, it follows that, for all $\tau$, there exists $s^* \in \l(0,1\r)$ such that $\hat{c}_i$ is decreasing in $s$ for all $i$ and all $s \in \l(s^*,1\r]$. The smallest possible such $s^*$---which turns out to be independent of $\tau$---is identified below.}
\vspace{-0.1em}\begin{prop}\vspace{-0.1em}\label{prop:pareto}
 When $\tau >0$ and salience is close to $1$, reducing salience slightly while holding the tax rate fixed is a \textit{Pareto improvement.}
 \end{prop}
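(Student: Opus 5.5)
The plan is to work directly with the derivative formula (\ref{ci derivative s}) and show that, for fixed $\tau>0$, $\pdv{\hat{c}_i}{s}<0$ for every agent $i$ and every $s$ in an interval $(s^*,1]$. Then integrate this in $s$: for $s'<s$ both in that interval, $\hat{c}_i(\tau,s')-\hat{c}_i(\tau,s)=-\int_{s'}^{s}\pdv{\hat{c}_i}{s}\,\dd\sigma>0$ for all $i$. Since $u$ is strictly increasing, each agent's utility $u(\hat{c}_i)$ strictly rises when salience is lowered from $s$ to $s'$ with $\tau$ held fixed. That is a Pareto improvement (indeed a strict one for every agent).

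The first step is to derive (\ref{ci derivative s}) itself. Differentiate $\hat{c}_i$ with respect to $s$ at fixed $\tau$, which gives
\begin{equation*}
\pdv{\hat{c}_i}{s}=\bigl[(1-\tau)-v_i'(z_i)\bigr]\pdv{z_i}{s}+\tau\pdv{\bar{z}}{s}.
\end{equation*}
By (\ref{first order condition internality}), the bracket equals $\tau-s\tau=\tau(1-s)$. Substituting (\ref{dvzi}) then yields
\begin{equation*}
\pdv{\hat{c}_i}{s}=\frac{\varepsilon\tau^2}{1-s\tau}\bigl[(1-s)z_i-\bar{z}\bigr].
\end{equation*}

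The main step is to sign the bracket $(1-s)z_i-\bar{z}$ uniformly in $i$. By (\ref{zi expression}), $z_i=w_i^{\varepsilon+1}(1-s\tau)^{\varepsilon}$, so the common factor $(1-s\tau)^\varepsilon>0$ cancels from $z_i/\bar{z}$. The bracket is negative for all $i$ if and only if
\begin{equation*}
(1-s)\max_i w_i^{\varepsilon+1}<\int_0^1 w_j^{\varepsilon+1}\,\dd j.
\end{equation*}
Because $w$ is strictly increasing, the maximum is $w_1^{\varepsilon+1}$. The condition therefore becomes $s>s^*\equiv 1-\int_0^1 w_j^{\varepsilon+1}\dd j\,/\,w_1^{\varepsilon+1}$, which is independent of $\tau$. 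Since $w_1$ is finite and the integral is positive, we have $s^*<1$. If $s^*\le 0$, then every $s\in(0,1]$ works. The prefactor $\varepsilon\tau^2/(1-s\tau)$ is strictly positive whenever $\tau>0$ and $s\tau<1$.

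The one thing to watch is the edge case $\tau=1$, $s=1$, where $1-s\tau=0$. There $z_i=0$ and the formula degenerates. I would handle it by restricting to $\tau\in(0,1)$, or by noting that for $\tau=1$ and any $s<1$ everything is well-defined, and arguing directly on $(s^*,1)$. A fallback that avoids computing $s^*$ is the compactness argument sketched in the footnote. At $s=1$ the derivative equals $-\varepsilon\tau^2\bar{z}/(1-\tau)<0$ uniformly in $i$. Continuity of $\pdv{\hat{c}_i}{s}$ on the compact set $[0,1]\times[s_0,1]$ then gives a neighborhood of $s=1$ on which it stays negative for all $i$. This suffices for the proposition as stated.
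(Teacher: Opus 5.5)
Your proposal is correct and is essentially the paper's argument. The paper signs (\ref{ci derivative s}) at $s=1$, extends negativity to a neighborhood of $s=1$ for all $i$ by the compactness/continuity argument you give as a fallback, and then states the same $\tau$-independent threshold $s^*=1-\int_0^1 w_j^{1+\varepsilon}\dd j/w_1^{1+\varepsilon}$ that you derive directly. Your direct route has the small advantage of not needing continuity of $w$ in $i$. Also, $s^*>0$ because $w$ is strictly increasing, so your case $s^*\le 0$ never arises.

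One slip to fix: by (\ref{first order condition internality}) the bracket $(1-\tau)-v_i'(z_i)$ equals $s\tau-\tau=-(1-s)\tau$, not $+(1-s)\tau$, because agents work past their true optimum. It is this negative sign times $\pdv{z_i}{s}<0$ that produces the $+(1-s)z_i$ term in your final formula, which is itself correct.
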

 \vspace{-0.9em}\begin{rem}\vspace{-0.1em}\label{policy interpretation remark} \textbf{\textup{(Policy interpretation)}}
Suppose the government has two tax instruments which would be equivalent under full salience, one of which is fully salient and the other less than fully salient (see Section \ref{alternative interpretation section} for a discussion of such cases). Then starting from a situation where only the fully salient tax is used, it is Pareto improving to reduce the salient tax a little and increase the non-salient tax so that the effective marginal tax rate remains the same.
 \end{rem}
As discussed in Section \ref{ethical criteria section}, the existence of a Pareto improvement does not necessarily imply that we should reduce salience, as doing so would conflict with the honesty criterion. 

Reducing salience (at a fixed tax rate) leads to a Pareto improvement, not only at $s=1$, but as long as $s > s^*:=\frac{z_1-\bar{z}}{z_1} = 1- (\l.\int_0^1 w_i^{1+\varepsilon} \dd i\r/w_1^{1+\varepsilon})$, a threshold that is independent of the tax rate.

When $s$ is below $s^*$, the effect of a decrease in salience on agent $i$'s welfare is negative if and only if $z_i > \frac{\bar{z}}{\l(1-s\r)}$, that is, if $i$'s income is sufficiently above the average. In this case, lowering salience makes some agents better off and others worse off, raising the question of whether aggregate utilitarian welfare rises or falls as salience declines. We address this question next.

Let $\hat{c}_i\l(\tau,s\r)$ be $i$'s equivalent consumption when the tax rate is $\tau$ and the salience level is $s$.  Say that a marginal tax rate $\tau$ is \textbf{order-preserving} at $s$ if $\hat{c}_i\l(\tau,s\r)$ is strictly increasing in $i$. As $w_i$ is increasing in $i$, this means that higher wage agents are better off. Misperception of the tax rate has a larger effect on the utility of high wage workers, and could in principle cause utility to decline with the wage. The following lemma (proved in Appendix \ref{app: lemma 1}) shows that this cannot happen when the tax rate is chosen optimally.

\vspace{-0.1em}\begin{lem}\vspace{-0.1em}\label{increasing normal lemma}
The $s$-optimal tax rate must be order-preserving at $s$.\end{lem}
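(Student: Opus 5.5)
The plan is to compute the consumption equivalent in closed form, reduce order preservation to a threshold condition on $\tau$, and then show that any $\tau$ violating the threshold can be improved by lowering it.

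\textbf{Step 1: closed form for $\hat c_i$.} Under the functional form (\ref{form cost of earning income}), the agent's first-order condition gives $v_i'(z_i)=1-s\tau$. Homogeneity of $v_i$ then yields $v_i(z_i)=\frac{\varepsilon}{1+\varepsilon}z_i v_i'(z_i)=\frac{\varepsilon}{1+\varepsilon}(1-s\tau)z_i$. Substituting into (\ref{money-metric utility}),
\begin{equation*}
\hat c_i = a(\tau,s)\, z_i + \tau\bar z, \qquad a(\tau,s) := (1-\tau)-\frac{\varepsilon(1-s\tau)}{1+\varepsilon}.
\end{equation*}
By (\ref{zi expression}), $z_i$ is strictly increasing in $i$. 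So $\tau$ is order-preserving at $s$ if and only if $a(\tau,s)>0$, which rearranges to $\tau<\bar\tau(s):=1/\bigl(1+\varepsilon(1-s)\bigr)$. The lemma therefore reduces to showing that no $\tau\ge\bar\tau(s)$ is $s$-optimal. Note that $\tau=0$ is always order-preserving.

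\textbf{Step 2: the welfare derivative when $a\le 0$.} Suppose $\tau\ge\bar\tau(s)$, so $a\le 0$. Then $\hat c_i$ is weakly decreasing in $i$, and since $u$ is concave, $g_i=u'(\hat c_i)$ is weakly increasing in $i$. Using (\ref{tax derivatives}) and $\partial z_i/\partial\tau=-s\varepsilon z_i/(1-s\tau)$, differentiate $\hat c_i=a z_i+\tau\bar z$:
\begin{equation*}
\frac{\partial \hat c_i}{\partial\tau}=k\, z_i + \frac{d(\tau\bar z)}{d\tau}, \qquad k := \frac{\partial a}{\partial\tau}-\frac{a s\varepsilon}{1-s\tau} = -1+\frac{s\varepsilon\tau(1-s)}{1-s\tau}.
\end{equation*}
Hence $W_\tau=k\int g_i z_i\,di+\bar g\,\frac{d(\tau\bar z)}{d\tau}$. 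Because $g_i$ and $z_i$ are both increasing in $i$, Chebyshev's integral inequality gives $\int g_i z_i\,di\ge \bar g\bar z$.

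If $k\le 0$, this bound yields $W_\tau\le \bar g\,\frac{d}{d\tau}\bigl[(a+\tau)\bar z\bigr]$, where $(a+\tau)\bar z$ is mean consumption-equivalent. A direct computation gives
\begin{equation*}
\frac{d}{d\tau}\bigl[(a+\tau)\bar z\bigr]=s\varepsilon\bar z\Bigl[1-\frac{1}{1-s\tau}\Bigr]<0 .
\end{equation*}
So $W_\tau<0$, and lowering $\tau$ strictly raises welfare. This contradicts $s$-optimality unless $\tau=0$, which is order-preserving anyway.

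\textbf{Step 3: the case $k>0$ (main obstacle).} Here the Chebyshev bound runs the wrong way, so this is where I expect the difficulty. I would first check whether $k>0$ is compatible with $\tau\ge\bar\tau(s)$ at all, using $\varepsilon(1-s)\tau\ge 1-\tau$ together with $\tau\le1$. If it is not, Step 2 suffices.

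If it is compatible, I would switch to a global comparison instead of a local derivative. Compare the candidate $\tau$ with $\tau'=\bar\tau(s)$, or with $\tau'=0$, and show that the utility profile at $\tau'$ majorizes the one at $\tau$: it should have a higher mean and a distribution that is "more equal" in the sense of second-order dominance. Concavity of $u$ would then give $W(\tau',s)>W(\tau,s)$. To organize this, I would use the fact that for $a\le0$ the profile $\hat c_i=a z_i+\tau\bar z$ is an affine, decreasing transformation of the fixed wage profile $w_i^{1+\varepsilon}$. Comparisons between two such profiles then reduce to comparing slopes and means, a one-dimensional calculation in $\tau$.
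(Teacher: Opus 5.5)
Your Steps 1 and 2 are correct, and Step 2 is essentially the paper's local argument. The paper writes $\partial W/\partial\tau=\bigl(1-\frac{s\tau\varepsilon}{1-s\tau}\bigr)\int g_i(\bar z-z_i)\,di-\frac{s^2\tau\varepsilon}{1-s\tau}\int g_i z_i\,di$ and uses Chebyshev exactly as you do.

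The gap is Step 3, which you leave open, and the case it must handle is not empty. Since $k>0$ iff $\tau>\frac{1}{s(1+\varepsilon(1-s))}$, and this threshold lies below $1$ whenever $s<1$ and $\varepsilon s>1$, there are non-order-preserving rates with $k>0$. For example, $\varepsilon=3$ and $s=\tfrac12$ give $\bar\tau(s)=0.4$, yet $k>0$ for every $\tau\in(0.8,1]$. So the compatibility check you propose comes out the wrong way, and Step 2 alone does not prove the lemma. The ``majorization'' comparison is only announced, not carried out.

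The paper closes this case with an ingredient your argument lacks: revenue efficiency. Proposition \ref{revenue efficiency Proposition} shows that any rate $\tau_1$ above the Laffer rate $\tau^{\max}=\frac{1}{s(1+\varepsilon)}$ is Pareto dominated by the rate $\tau_0<\tau^{\max}$ that raises the same revenue. The reason is that the rebate is unchanged while every agent's non-lump-sum term $(1-s\tau)^\varepsilon\tau\,\psi(\tau)\,w_i^{1+\varepsilon}$ is larger at $\tau_0$, because $\psi$ is strictly decreasing. Since $\frac{1}{s(1+\varepsilon(1-s))}\ge\tau^{\max}$, every revenue-efficient rate satisfies $\frac{s\tau\varepsilon}{1-s\tau}\le 1$, hence $k\le -s<0$. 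Your Step 2 then covers all remaining candidates.

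Alternatively, you can finish your own Step 3 in one line by taking $\tau'=\bar\tau(s)$:
\begin{itemize}
\item There $a=0$, so every agent gets $\hat c_i=\mu(\bar\tau(s),s)$.
\item Your formula $\partial\mu/\partial\tau=-\frac{s^2\tau\varepsilon\bar z}{1-s\tau}<0$ holds for all $\tau\in(0,1/s)$, not only when $k\le 0$.
\item Jensen's inequality then gives $W(\tau,s)\le u(\mu(\tau,s))<u(\mu(\bar\tau(s),s))=W(\bar\tau(s),s)$ for every $\tau>\bar\tau(s)$.
\item At $\tau=\bar\tau(s)$ itself the weights are constant, so $\partial W/\partial\tau=\bar g\,\partial\mu/\partial\tau<0$.
\end{itemize}
This route avoids both the Laffer-curve result and your case split on $k$. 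The comparison with $\tau'=0$ would not work, since there the profile is increasing and no dominance ordering is apparent. The paper's route instead reuses revenue efficiency, which it needs anyway for Proposition \ref{prop: partials} and for the strict concavity of $W$.
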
  
Define $W\l(s\r)= W\l(\tau\l(s\r),s\r)$, where $\tau\l(s\r)$ is the $s$-optimal tax rate. Thus $W\l(s\r)$ is utilitarian social welfare when the marginal tax rate is chosen optimally for the level of salience.  

\vspace{-0.1em}\begin{thm}\vspace{-0.1em}\label{salience vs welfare proposition}
Decreasing salience increases utilitarian social welfare if the marginal tax rate is chosen optimally in response to $s$. Formally, $W'\l(s\r)<0\:$ for all $s$.   
\end{thm}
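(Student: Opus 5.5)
The plan is an envelope-theorem argument combined with the first-order condition for the $s$-optimal tax rate and a covariance sign supplied by Lemma~\ref{increasing normal lemma}.

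\textbf{Step 1 (envelope theorem).} Since $\tau\l(s\r)$ is the unique maximizer of $W\l(\cdot,s\r)$ (Proposition~\ref{prop: uniqueness}), we have
\begin{equation*}
W'\l(s\r)=\pdv{W}{s}\l(\tau\l(s\r),s\r)=\int_0^1 g_i\,\pdv{\hat{c}_i}{s}\dd i .
\end{equation*}
This holds provided $\tau\l(s\r)$ is interior or satisfies the first-order condition. Substituting (\ref{ci derivative s}) gives
\begin{equation*}
W'\l(s\r)=\frac{\varepsilon \tau^2}{1-s\tau}\l[\l(1-s\r)A-B\r],
\qquad A:=\int_0^1 g_i z_i \dd i,\quad B:=\bar{g}\,\bar{z}.
\end{equation*}
It therefore suffices to show two things: $\tau\l(s\r)>0$, and $\l(1-s\r)A-B<0$.

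\textbf{Step 2 (first-order condition in $\tau$).} Differentiate $\hat{c}_i$ with respect to $\tau$. Using (\ref{first order condition internality}), the behavioral term is $\l(1-\tau-v_i'\r)\pdv{z_i}{\tau}$, where $1-\tau-v_i'=-\l(1-s\r)\tau$. Using (\ref{tax derivatives}), $\pdv{z_i}{\tau}=-s\varepsilon z_i/\l(1-s\tau\r)$. Integrating against $g_i$, the condition $\pdv{W}{\tau}=0$ becomes
\begin{equation*}
\l(B-A\r)+\frac{s\varepsilon\tau}{1-s\tau}\l[\l(1-s\r)A-B\r]=0,
\end{equation*}
so that
\begin{equation*}
\l(1-s\r)A-B=\frac{\l(1-s\tau\r)}{s\varepsilon\tau}\l(A-B\r).
\end{equation*}
Note that $A-B=\mathrm{Cov}\l(g_i,z_i\r)$.

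\textbf{Step 3 (sign of the covariance).} By Lemma~\ref{increasing normal lemma}, $\hat{c}_i$ is strictly increasing in $i$ at $\tau\l(s\r)$. Hence, with $u$ strictly concave (inequality aversion), $g_i=u'\l(\hat{c}_i\r)$ is strictly decreasing in $i$. Meanwhile $z_i$ is strictly increasing in $i$ by (\ref{zi expression}). Chebyshev's integral inequality then gives $A-B<0$, and by Step 2, $\l(1-s\r)A-B<0$.

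\textbf{Step 4 (the optimal rate is interior).} We must rule out $\tau\l(s\r)=0$, where the argument above would only give $W'=0$. At $\tau=0$ we have $\hat{c}_i=z_i-v_i\l(z_i\r)$, which is strictly increasing in $i$. Moreover
\begin{equation*}
\pdv{W}{\tau}\Big|_{\tau=0}=B-A>0,
\end{equation*}
again by Chebyshev, so $\tau=0$ is not optimal. At $\tau=1$ all income and hence all consumption vanish, while any interior $\tau$ gives positive consumption equivalents; so $\tau=1$ is not optimal either. Thus $\tau\l(s\r)\in\l(0,1\r)$, the first-order condition of Step 2 holds, and $W'\l(s\r)<0$.

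\textbf{Main obstacle.} I expect the main obstacle to be justifying Steps 3--4 rigorously. This requires that $u$ be strictly concave: with linear $u$, $W'$ would be zero. It also requires that the envelope step be valid, which needs differentiability of $\tau\l(s\r)$ or a direct argument through $\max_\tau W$. I would handle the latter using uniqueness and smoothness of $W$, together with the envelope theorem for value functions (Milgrom--Segal style), so that implicit-function regularity of $\tau\l(s\r)$ is not needed.
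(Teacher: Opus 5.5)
Your core argument is the paper's: the envelope theorem, formula (\ref{ci derivative s}), and Chebyshev's inequality applied through Lemma~\ref{increasing normal lemma} to get $A=\int g_i z_i\,di<\bar g\bar z=B$. Your Step~2, however, is an unnecessary detour. Once $A<B$, you have $(1-s)A-B=(A-B)-sA<0$ directly, since $sA>0$. This is how the paper concludes: it bounds $(1-s)A-B\le(1-s)\bar g\bar z-\bar g\bar z=-s\bar g\bar z$. So the first-order condition is not needed in this proof. The only ingredients are $\tau(s)>0$, which makes the prefactor $\varepsilon\tau^2/(1-s\tau)$ strictly positive, and order-preservation at $\tau(s)$.

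The detour pushed you into Step~4, where one claim is false. For $s<1$, at $\tau=1$ the perceived rate is $s<1$, so $z_i=w_i^{1+\varepsilon}(1-s)^{\varepsilon}>0$ and $c_i=\bar z>0$; income does not vanish. It is also not true that every interior $\tau$ gives positive equivalent consumption; that holds only for order-preserving rates (Proposition~\ref{prop: positivity}).

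The conclusion $\tau(s)<1$ is still correct, and it follows from Lemma~\ref{increasing normal lemma}, which you already use. At $\tau=1$, (\ref{simplification of an expression s}) gives $\hat c_i=\bar z-\frac{\varepsilon(1-s)}{1+\varepsilon}z_i$. This is constant (zero) when $s=1$ and strictly decreasing in $i$ when $s<1$, so $\tau=1$ is never order-preserving. Equivalently, Proposition~\ref{prop: order preserving condition} requires $\tau<1/(1+\varepsilon(1-s))\le 1$.

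Your treatment of $\tau=0$ is correct. So is your observation that strict concavity of $u$ is essential: with linear $u$ the optimum is $\tau=0$ and $W'=0$.
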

\textit{Proof.}  Observe that $W\l(\tau,s\r)= \int u\l(\hat{c}_i\l(\tau,s\r)\r) \dd i$.  Recall that $g_i = \pdv{U_i}{c_i}$ is $i$'s welfare weight and that $\bar{g}= \int_0^1 g_i \dd i$ is the average welfare weight. Let $\tau$ be any order-preserving tax rate at $s$. Then: 

\vspace{-1.7em}
\begin{align}\label{Welfare salience inequality}
\begin{split}
\pdv{W}{s} =& \int g_i \pdv{\hat{c}_i}{s} \dd i
=  \frac{\varepsilon \tau^2}{1-s \tau}\int g_i \l[z_i  \l(1-s\r)- \bar{z} \r] \dd i\\
= &\frac{\varepsilon \tau^2}{1-s \tau} \l[\l(1-s\r) \int g_i z_i \dd i -\bar{g}\bar{z}\r]\\
< & \frac{\varepsilon \tau^2}{1-s \tau} \l[\l(1-s\r) \bar{g}\bar{z}-\bar{g}\bar{z}\r]=-\frac{\varepsilon \tau^2}{1-s \tau}\bar{g} \bar{z} s < 0,
\end{split}
\end{align}
where the second equality follows from (\ref{ci derivative s}), the first inequality follows from the fact that $\tau$ is order-preserving at $s$ and the Chebyshev integral inequality since when $\hat{c}_i$ is increasing in $i$, $g_i$ is decreasing in $i$.  Noting that, by Lemma \ref{increasing normal lemma}, $\tau\l(s\r)$ is order-preserving in $s$, it follows from the envelope theorem and (\ref{Welfare salience inequality}) that $W'\l(s\r) =\l.\pdv{s}W\l(\tau,s\r)\r|_{\tau=\tau\l(s\r)} < 0$. $\square$

The proof of the theorem also shows that welfare is decreasing in salience holding \textit{fixed} any order-preserving tax rate.

Theorem \ref{salience vs welfare proposition} implies a strict opposition between salience and utilitarian social welfare: any increase in salience reduces social welfare both when the tax rate adjusts optimally and when it is held fixed. It follows that moral efficiency coincides with $s$-optimality, strengthening Proposition~\ref{one way moral efficiency proposition} from a one-way implication to a biconditional that characterizes moral efficiency.
 
\begin{cor}\label{moral efficiency corollary}
$\l(\tau,s\r)$ is morally efficient if and only if $\tau$ is $s$-optimal.
\end{cor}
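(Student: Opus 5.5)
The plan is to prove the two directions separately. The forward direction is already Proposition \ref{one way moral efficiency proposition}: if $\l(\tau,s\r)$ is morally efficient, then $\tau$ is $s$-optimal. So the work is in the converse. Suppose $\tau$ is $s$-optimal. By the uniqueness result in the Appendix (Proposition \ref{prop: uniqueness}), $\tau=\tau\l(s\r)$, so $W\l(\tau,s\r)=W\l(s\r)$. I will argue by contradiction: assume some state $\l(\tau',s'\r)$ morally dominates $\l(\tau,s\r)$ and derive that this is impossible. I split according to the two clauses in the definition of moral dominance.

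Case 1: $s'\geq s$ and $W\l(\tau',s'\r)>W\l(\tau,s\r)$. By definition of $W\l(s'\r)$ as the maximum over tax rates at salience $s'$,
\[
W\l(\tau',s'\r)\leq W\l(s'\r).
\]
Theorem \ref{salience vs welfare proposition} gives $W'\l(\cdot\r)<0$, so $W$ is strictly decreasing on $\l(0,1\r]$. Hence $W\l(s'\r)\leq W\l(s\r)=W\l(\tau,s\r)$. Chaining these yields $W\l(\tau',s'\r)\leq W\l(\tau,s\r)$, which contradicts the assumed strict inequality.

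Case 2: $s'>s$ and $W\l(\tau',s'\r)\geq W\l(\tau,s\r)$. The same chain now gives a strict inequality at the monotonicity step:
\[
W\l(\tau',s'\r)\leq W\l(s'\r)<W\l(s\r)=W\l(\tau,s\r),
\]
which contradicts the assumed weak inequality. Both cases are impossible, so no state morally dominates $\l(\tau,s\r)$, and the state is morally efficient.

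The one step that needs care is passing from the pointwise derivative statement $W'\l(s\r)<0$ to strict monotonicity of $W$ on the interval $\l(0,1\r]$. This requires $W\l(s\r)$ to be differentiable, which is implicit in the theorem's envelope-theorem argument. Given that, the mean value theorem on $\l[s,s'\r]$ yields $W\l(s'\r)<W\l(s\r)$ whenever $s'>s$. If differentiability at the endpoint $s=1$ is a concern, I would use continuity of $W$ there together with strict decrease on the interior. With strict monotonicity in hand, the rest of the argument is immediate.
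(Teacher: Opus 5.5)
Your proof is correct and follows the paper's own reasoning. The forward direction is Proposition~\ref{one way moral efficiency proposition}, and the converse combines $W(\tau',s')\leq W(s')$ with the strict decrease of $W(s)$ from Theorem~\ref{salience vs welfare proposition}; the paper only asserts that this ``follows,'' whereas you spell out the two dominance cases and the passage from $W'<0$ to strict monotonicity (invoking uniqueness is harmless but unnecessary, since $W(\tau,s)=W(s)$ holds by the definition of $s$-optimality).
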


\section{Optimal tax rates}\label{sec: optimal tax rates}

We now characterize the $s$-optimal marginal tax rates that determine the morally efficient frontier.   At a fixed level of salience $s$, the utilitarian optimization problem is to choose $\tau$ to maximize:
\begin{align}\label{utilitarian social welfare objective}
W=\int_i u\l(z_i\l(1-\tau\r)-v_i\l(z_i\r)+\bar{z}\tau\r) \dd i,
\end{align}
given that each agent $i$ chooses $z_i$ optimally in response to perceived tax rate $s \tau$. The first order condition is:
\begin{align}\label{social welfare first order condition}
\pdv{W}{\tau} = \underbrace{\int g_i \l(\bar{z}-z_i\r)\dd i}_{\textup{redistribution}}\underbrace{+ \l(1-s\r)
\frac{s \tau \varepsilon}{1-s\tau }\int g_i z_i \dd i}_{\textup{internality}} \underbrace{- \frac{s\tau   \varepsilon }{1-s\tau }  \bar{g}\bar{z}}_{\textup{fiscal externality}} = 0.
\end{align}
The first-order condition decomposes the effect of raising taxes into three components:
\begin{enumerate}[label=(\roman*), leftmargin=*, nosep]
\item \emph{redistribution}: the welfare gain from shifting income from high to low-wage agents, holding behavior fixed;
\item \emph{internality}: the corrective effect on mis-optimization due to under-perception of taxes; and 
\item \emph{fiscal-externality}: the revenue loss due to behavioral responses.
\end{enumerate}
Using equation (\ref{social welfare first order condition}) and appealing to Corollary \ref{moral efficiency corollary}, we obtain the following proposition. 

\begin{prop}\label{optimal tax formula proposition}
If $\l(\tau,s\r)$ is morally efficient, then it satisfies 
\begin{align}\label{useful optimal tax formula}
\frac{s\tau \varepsilon}{1-s\tau} = \frac{1-\bar{h}}{1-\l(1-s\r)\bar{h}},
\end{align}
where: $\bar{h}= \frac{\int g_i z_i \dd i}{\bar{g} \bar{z}}$; and $g_i, z_i, \bar{g}, \bar{z}$ are the values of income and welfare weights induced by $\l(\tau,s\r)$.  
\end{prop}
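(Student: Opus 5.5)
By Corollary~\ref{moral efficiency corollary}, moral efficiency of $\l(\tau,s\r)$ means that $\tau$ is $s$-optimal. The plan is therefore to derive the first-order condition (\ref{social welfare first order condition}) and then rearrange it algebraically into (\ref{useful optimal tax formula}).

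\textbf{Step 1: interiority.} Before using the first-order condition, I will check that the $s$-optimal $\tau$ is interior.
\begin{itemize}
\item At $\tau=0$, the redistribution term is $\int g_i\l(\bar z - z_i\r)\dd i = \bar g\bar z - \int g_i z_i \dd i$. Since $\hat c_i$ is increasing in $i$ (order-preserving, Lemma~\ref{increasing normal lemma}), $g_i$ is decreasing while $z_i$ is increasing. By the Chebyshev integral inequality, $\int g_i z_i\dd i<\bar g\bar z$, so this term is strictly positive (assuming $u$ is strictly concave). Hence $\tau=0$ is not optimal.
\item At $\tau=1$ the perceived retention rate is $1-s$. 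For $s<1$ labor is still positive but revenue considerations rule this corner out. For $s=1$, $z_i=0$. I will handle this corner briefly, or appeal to the uniqueness and interiority established in Proposition~\ref{prop: uniqueness}.
\end{itemize}

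\textbf{Step 2: the first-order condition.} I will differentiate $W$ in $\tau$, using $\hat c_i$ and the envelope-type expression $1-\tau-v_i'(z_i)=-(1-s)\tau$ from (\ref{first order condition internality}). This gives
\[
\pdv{\hat c_i}{\tau} = -z_i + \bar z + \tau\pdv{\bar z}{\tau} + \l(1-s\r)\tau\pdv{z_i}{\tau}.
\]
Using (\ref{tax derivatives}), $\pdv{z_i}{\tau}=-\frac{s\varepsilon z_i}{1-s\tau}$. Integrating against $g_i$ reproduces (\ref{social welfare first order condition}).

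\textbf{Step 3: rearrangement.} Write $A=\frac{s\tau\varepsilon}{1-s\tau}$ and divide (\ref{social welfare first order condition}) by $\bar g\bar z>0$. With $\bar h=\int g_i z_i \dd i/(\bar g\bar z)$, the condition becomes
\[
1-\bar h + (1-s)A\bar h - A = 0,
\]
so $A\bigl(1-(1-s)\bar h\bigr)=1-\bar h$.

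It remains to check that $1-(1-s)\bar h\neq 0$ so that we can divide. By Chebyshev, $0<\bar h<1$, hence $(1-s)\bar h<1$. Dividing then yields (\ref{useful optimal tax formula}).

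\textbf{Main obstacle.} The main obstacle is justifying that the optimum is interior, so that the first-order condition holds with equality, rather than any part of the algebra. I would lean on Proposition~\ref{prop: uniqueness} and the positivity of the marginal welfare effect at $\tau=0$ to settle this.
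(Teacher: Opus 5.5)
Your proposal is correct and is essentially the paper's proof: reduce moral efficiency to $s$-optimality, note that the optimum is interior, write the first-order condition, divide by $\bar g\bar z$, and solve for $\frac{s\tau\varepsilon}{1-s\tau}$. For the upper corner, the work is done by order-preservation (Lemma~\ref{increasing normal lemma}, which forces $\tau<\frac{1}{1+\varepsilon(1-s)}\leq 1$) rather than by revenue efficiency, together with positivity from Proposition~\ref{prop: uniqueness}.

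There is one small slip. Since $1-\tau-v_i'(z_i)=-(1-s)\tau$, the internality term in your Step~2 expression for $\pdv{\hat c_i}{\tau}$ should be $-(1-s)\tau\pdv{z_i}{\tau}$, not $+(1-s)\tau\pdv{z_i}{\tau}$. The equation you rearrange in Step~3 already has the correct sign, so the conclusion is unaffected.
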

Note that $\bar{h}\in\l(0,1\r)$.\footnote{$\bar{h} > 0$ because $g_i > 0$ and $\bar{z} > 0$. For $\bar{h} < 1$: by Lemma~\ref{increasing normal lemma} the optimal tax rate is always order-preserving, so $\hat{c}_i$ is strictly increasing in $i$ and hence $g_i =
  u'(\hat{c}_i)$ is strictly decreasing in $i$, while $z_i$ is strictly increasing in $i$; the Chebyshev integral inequality then gives $\int g_i z_i \dd i < \bar{g}\bar{z}$.} When $s=1$,  equation (\ref{useful optimal tax formula}) reduces to the classic formula for the optimal linear tax: $\frac{\tau}{1-\tau}= \frac{1-\bar{h}}{\varepsilon}$.

\section{Equality-efficiency decomposition}\label{Sec: Equality-efficiency decomposition}
\subsection{Derivation of the decomposition}\label{sec: decomposition derivation}
Theorem \ref{salience vs welfare proposition} showed that increasing salience reduces utilitarian social welfare.  As we now show, utilitarian social welfare can be decomposed into equality and efficiency.  This implies that salience always has a cost to either efficiency or equality (or both).  We shall explore which it is. 

Following a tradition initiated by \citeasnoun{atkinson1970measurement}, we use the equality measure that is inherent in the social welfare function.  Given any state $\l(\tau,s\r)$, we define the  \textbf{equally distributed equivalent} $\xi\l(\tau,s\r)$ by:
\begin{equation*}
\xi\l(\tau, s\r)= u^{-1}\l(W\l(\tau, s\r)\r).
\end{equation*}
$\xi\l(\tau,s\r)$ is the level of consumption $\hat{c}$ that, if all agents were to consume $\hat{c}$ with zero labor supply, would yield the same utilitarian welfare as setting marginal tax rate $\tau$ under salience level $s$. This equally distributed level of consumption is always positive if the tax rate is order-preserving, because this ensures that $\hat{c}_i >0, \forall i$ (see Proposition \ref{prop: positivity} in the Appendix). And, as Lemma \ref{increasing normal lemma} shows, the optimal tax rate is always order-preserving. Finally, since the function $u$ is strictly increasing, $\xi\l(\tau, s\r)$ and $W\l(\tau, s\r)$ are ordinally equivalent social welfare functions. 

Next, define the \textbf{efficiency} of the pair $\l(\tau,s\r)$ by:
\vspace{-0.1em}\begin{equation*}\vspace{-0.1em}
\mu\l(\tau,s\r)= \int_0^1 \hat{c}_i\l(\tau,s\r) \dd i.
\end{equation*}
Efficiency is measured by the average equivalent consumption of all agents in $\l(\tau,s\r)$, without correcting for diminishing marginal utility via $u\l(\cdot\r)$.

The concavity of $u$ implies that $\xi\l(\tau,s\r) \leq \mu\l(\tau,s\r)$: If equivalent consumption is equally distributed, less of it is necessary to attain the same value of utilitarian welfare. 
As in \citeasnoun{atkinson1970measurement}, 
define \textbf{equality} at $\l(\tau,s\r)$ as:
\begin{equation}\label{equality definition}
E\l(\tau,s\r)= \frac{\xi\l(\tau,s\r)}{\mu\l(\tau,s\r)}.
\end{equation} 
This equality measure gives the proportion of total equivalent consumption required to achieve the current level of social welfare with equal distribution.  The complementary inequality index, $I=1-E$, can be thought of as telling us what proportion of equivalent consumption we could discard and achieve the same welfare by distributing what remains equally. Under perfect equality $\xi=\mu$, so that $E=1$; at the other extreme, when almost all money-metric utility is concentrated near the top, a much smaller aggregate level of equivalent consumption, distributed equally, would suffice to generate the same social welfare, so that $E$ would be small.

It follows from the above definitions that utilitarian welfare can be decomposed into equality and efficiency:
\vspace{-0.3em}\begin{equation}\label{welfare decomposition}
\underbrace{\xi\l(\tau,s\r)}_{\textup{utilitarian welfare}}= \underbrace{E\l(\tau,s\r)}_{\textup{equality}} \times\underbrace{\mu\l(\tau,s\r)}_{\textup{efficiency}}.  
\end{equation} 
This is the Atkinson decomposition.

For a fixed salience level $s$, the utilitarian optimal marginal tax rate $\tau$ is chosen to trade off efficiency against equality.  A higher marginal tax rate reduces efficiency but, via the induced rebate, increases equality.  This is formally equivalent to the problem of choosing the marginal tax rate $\tau$ to maximize $W$ in (\ref{utilitarian social welfare objective}) above.    

To proceed with the analysis, we again follow \citeasnoun{atkinson1970measurement} by specializing the outer utility function to an isoelastic form:
 \begin{align*}u\l(\hat{c}\r)= \frac{\hat{c}^{1-\rho}}{1-\rho},\;\;\; \rho > 0,\end{align*}
with $u\l(\hat{c}\r)=\log\l(\hat{c}\r),$ when $\rho=1$. Given this specification, equality as defined by (\ref{equality definition}) reduces to the standard scale-invariant Atkinson formula:
 \begin{align}\label{Atkinson inequality index}
 E = \frac{1}{\mu} \l(\int_{\l[0,1\r]} \hat{c}^{1-\rho}_i \dd i\r)^{\frac{1}{1-\rho}}.
 \end{align}
The parameter $\rho$ measures inequality aversion, with larger $\rho$ indicating more concern for equality.

\subsection{Partial effects of salience and taxes}

Ultimately, we are interested in how salience affects efficiency and equality \textit{at the optimal marginal tax rate}.  As a preliminary to that analysis, we now consider the partial effects: how each of salience and the marginal tax rate affect equality and efficiency, holding the other fixed.  

\begin{table}[htbp!]
\centering
\caption{Partial derivatives of efficiency and equality with respect to salience and the tax rate}
\label{tab:partials}
\begin{tabular}{lcc}
\toprule
 & \textbf{Efficiency $\mu$} & \textbf{Equality $E$} \\
\midrule
\textbf{Salience $s$} & $\partial \mu / \partial s < 0$ & $\partial E / \partial s < 0$ \\
\addlinespace[0.4em]  
\textbf{Tax rate $\tau$} & $\partial \mu / \partial \tau < 0$ & $\partial E / \partial \tau > 0$ \\
\bottomrule
\end{tabular}
\end{table}

Holding the tax rate fixed, increasing salience reduces both efficiency and equality.  The negative effect on efficiency stems from a reduction in labor effort, which moves us away from the efficient level. The reduction in equality arises in part because more salience reduces government revenue, which shrinks the equalizing lump-sum transfer, and in part because the resulting reduction in the internality favors higher income agents.  Unlike salience, changing the tax rate has opposite effects on efficiency and equality.  The trade-off when selecting marginal tax rates is precisely between efficiency and equality: Raising the marginal tax rate increases equality at the expense of efficiency.

We now formally state the conditions under which the inequalities in the table hold.  Recall that a marginal tax rate $\tau$ is order-preserving at $s$ if it makes higher wage types better off.  Say that a marginal tax rate is \textbf{revenue efficient} at $s$ if it is less than or equal to the revenue maximizing tax rate at $s$. We call a tax rate \textbf{admissible} at $s$ if it is order-preserving and revenue efficient at $s$.  Any $s$-optimal tax rate must be admissible at $s$.\footnote{The result that the optimal tax rate must be revenue efficient is standard---see Proposition \ref{revenue efficiency Proposition} in the Appendix---and Lemma \ref{increasing normal lemma} shows that it must be order-preserving as well.}

\begin{prop}\label{prop: partials}Assume that $\tau$ is admissible at $s$ and that $\tau >0$. Then the signs of the partial derivatives reported in Table \ref{tab:partials} hold.  Specifically, efficiency decreases in both salience and the tax rate, while equality decreases in salience and increases in the tax rate.\footnote{Different aspects of admissibility are required for different results: the salience effects require order-preservation and the tax rate effects require revenue efficiency.}
\end{prop}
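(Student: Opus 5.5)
The plan is to reduce all four signs to explicit formulas. The isoelastic cost (\ref{form cost of earning income}) and the first-order condition (\ref{first order condition internality}) give $v_i(z_i)=\frac{\varepsilon}{1+\varepsilon}(1-s\tau)z_i$. Substituting into (\ref{money-metric utility}) shows that equivalent consumption is affine in income:
\[
\hat{c}_i = k z_i + \tau\bar{z},\qquad k := 1-\tau-\tfrac{\varepsilon}{1+\varepsilon}(1-s\tau).
\]
Since $z_i$ is strictly increasing in $i$, order-preservation at $s$ is equivalent to $k>0$. This affine structure is what I would lean on throughout.

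\emph{Efficiency.} Integrate (\ref{ci derivative s}) over $i$ to get
\[
\partial\mu/\partial s=\frac{\varepsilon\tau^2}{1-s\tau}\left[(1-s)\bar z-\bar z\right]=-\frac{s\varepsilon\tau^2\bar z}{1-s\tau},
\]
which is negative for $\tau>0$. For the tax rate, differentiate $\hat{c}_i$ in $\tau$. Use $1-\tau-v_i'(z_i)=-(1-s)\tau$ from (\ref{first order condition internality}) and $\partial z_i/\partial\tau=-s\varepsilon z_i/(1-s\tau)$ from (\ref{tax derivatives}). This gives
\[
\partial\hat{c}_i/\partial\tau = A z_i + B,\qquad A=-1+\frac{(1-s)s\varepsilon\tau}{1-s\tau},\quad B=\bar z\Bigl(1-\frac{s\varepsilon\tau}{1-s\tau}\Bigr).
\]
Integrating, the redistribution terms cancel and $\partial\mu/\partial\tau=-s^2\varepsilon\tau\bar z/(1-s\tau)<0$. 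So neither efficiency sign needs admissibility.

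\emph{Equality.} Work with $\log E=\log\xi-\log\mu$. Since $\xi^{1-\rho}=\int\hat c_i^{1-\rho}\,\dd i$ (with the obvious log version when $\rho=1$), for a parameter $x\in\{s,\tau\}$ we get
\[
\partial_x\log E=\int(\omega_i-\mu^{-1})\,\partial_x\hat c_i\,\dd i,\qquad \omega_i:=\hat c_i^{-\rho}\Big/\textstyle\int\hat c_j^{1-\rho}\,\dd j.
\]
I would rewrite this as a comparison of two probability measures on $[0,1]$, with densities $p_i=\omega_i\hat c_i$ and $q_i=\hat c_i/\mu$. Both integrate to one, and
\[
\partial_x\log E=\int (p_i-q_i)\,r_i\,\dd i,\qquad r_i:=\partial_x\hat c_i/\hat c_i.
\]
The likelihood ratio $p_i/q_i\propto\hat c_i^{-\rho}$ is strictly decreasing in $i$ under order-preservation. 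So $p$ is first-order stochastically dominated by $q$, and by the Chebyshev integral inequality (as in Theorem \ref{salience vs welfare proposition}) the sign of $\partial_x\log E$ is the opposite of the direction of monotonicity of $r_i$. Since $\hat c_i$ is strictly increasing in $i$ and $p/q$ is non-constant, any strictly monotone $r_i$ gives a strict inequality.

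It therefore remains to sign the monotonicity of a ratio of affine functions of $z_i$. For salience, by (\ref{ci derivative s}),
\[
r_i=\frac{a z_i+b}{k z_i+\tau\bar z},\qquad a=\frac{\varepsilon\tau^2(1-s)}{1-s\tau}\ge0,\quad b=-\frac{\varepsilon\tau^2\bar z}{1-s\tau}<0.
\]
Its derivative in $z$ has the sign of $a\tau\bar z-bk$, which is positive when $k>0$. Hence $r_i$ is increasing and $\partial E/\partial s<0$; this uses only order-preservation. For the tax rate, $r_i=(Az_i+B)/(kz_i+\tau\bar z)$, and the sign is that of $A\tau\bar z-Bk$, so I need $A<0\le B$. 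Revenue efficiency means $\partial(\tau\bar z)/\partial\tau=B\ge0$, i.e.\ $s\varepsilon\tau/(1-s\tau)\le1$. That in turn gives $(1-s)s\varepsilon\tau/(1-s\tau)\le1-s<1$, so $A<0$. Hence $r_i$ is strictly decreasing and $\partial E/\partial\tau>0$.

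I expect the main obstacle to be the equality signs: setting up the change-of-measure/Chebyshev argument cleanly, including the $\rho=1$ case and strictness, and making sure revenue efficiency delivers exactly the needed sign of $A$ and $B$. The efficiency derivatives are routine.
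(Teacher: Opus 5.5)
Your proposal is correct. Its efficiency half is the same as the paper's. So is its final step: signing the monotonicity of the growth rate $r_i=\partial_x\hat c_i/\hat c_i$ from the affine form $\hat c_i=kz_i+\tau\bar z$ with $k>0$, $b<0$, and $A<0\le B$. The paper writes $r_i$ as a signed combination of the increasing ratio $z_i/\hat c_i$ and the decreasing ratio $\bar z/\hat c_i$, which carries the same content as your linear-fractional derivative test.

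Where you differ is the bridge from a monotone growth rate to the sign of $\partial E$. The paper goes through Lorenz curves in three steps:
\begin{itemize}
\item a monotone $r_i$ shifts the entire Lorenz curve in one direction (Lemma \ref{Lorenz lemma});
\item a pointwise Lorenz shift at constant mean lowers utilitarian welfare, by integration by parts (Lemma \ref{effect on welfare proposition});
\item scale invariance of the Atkinson index removes the constant-mean restriction (Lemma \ref{useful inequality lemma}).
\end{itemize}
You instead differentiate $\log E$ directly. You read $\partial_x\log E=\int(p_i-q_i)r_i\,\mathrm{d}i$ as a comparison of two densities whose likelihood ratio is proportional to $\hat c_i^{-\rho}$, and then apply the Chebyshev integral inequality already used for Theorem \ref{salience vs welfare proposition}.

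Your route is shorter and self-contained. It treats $\rho=1$ uniformly, since $p$ becomes the uniform density, and it delivers strictness immediately from strict monotonicity of both $p/q$ and $r$. It also gives exactly what the paper later uses, by contraposition, in the iso-equality ODE argument: a strictly monotone growth rate forces $\partial\log E\neq 0$.

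The paper's route establishes a stronger intermediate fact, a uniform Lorenz shift. That makes the conclusion hold for any scale-invariant equality index consistent with Lorenz dominance, not only the isoelastic Atkinson index, and it connects the result to the standard inequality literature.
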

\textit{Proof Sketch.} For efficiency, equation \ref{ci derivative s} implies that $\pdv{s} \mu\l(\tau,s\r) = \int \pdv{\hat{c}_i}{s} \dd i = \frac{ \tau^2\varepsilon}{1-s \tau} \int \l[z_i  \l(1-s\r)- \bar{z} \r] \dd i = - \frac{s \tau^2 \varepsilon \bar{z}}{1-s\tau} < 0.$  A similar calculation shows that $\pdv{\tau} \mu\l(\tau,s\r) =- \frac{  s^2 \tau \varepsilon\bar{z}}{1-s \tau} < 0$. For equality, the results follow from the behavior of the growth rates $\pdv{\hat{c}_i}{s}/\hat{c}_i$ and $\pdv{\hat{c}_i}{\tau}/\hat{c}_i$  which are respectively increasing and decreasing in $i$ (i.e., as we move to higher types). See Appendix \ref{proof of prop 4} for the formal argument.

\section{The effect of salience on the equality-efficiency frontier and the price of equality}\label{Sec: The price of equality}

We now turn to our main question: How does a change in salience affect efficiency and equality when the marginal tax rate is chosen optimally in response to the salience level? 

Define $E\l(s\r) = E\l(\tau\l(s\r),s\r)$ and $\mu\l(s\r)=\mu\l(\tau\l(s\r),s\r)$.  That is, $E\l(s\r)$ and $\mu\l(s\r)$ are, respectively, the levels of equality and efficiency that result at salience level $s$ when the marginal tax rate is chosen to be $s$-optimal.

Our main result is as follows:

\begin{thm}\label{equality theorem}
Equality is decreasing in salience when the marginal tax rate is chosen to be $s$-optimal.  Formally, $E'\l(s\r) < 0\:$ for all $s$.
\end{thm}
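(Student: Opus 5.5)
We reduce the statement to a comparative-statics question about the equality--efficiency frontier and its slope, the price of equality. Fix $s$. By Proposition~\ref{prop: partials}, $E(\tau,s)$ is strictly increasing in $\tau$ on the admissible range, so we can invert $E=E(\tau,s)$ to obtain $\tau=T(E,s)$. This defines the frontier
\[
\mu=F(E,s):=\mu\bigl(T(E,s),s\bigr).
\]
By the decomposition (\ref{welfare decomposition}), choosing an $s$-optimal $\tau$ is the same as choosing $E$ to maximize $\log E+\log F(E,s)$. The first-order condition is
\[
\frac{1}{E}+\frac{F_E(E,s)}{F(E,s)}=0 ,
\]
that is, the price of equality $p(E,s):=-F_E/F$ (measured relative to efficiency) equals $1/E$. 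Implicit differentiation of this condition gives
\[
E'(s)=-\frac{\partial_s\bigl(F_E/F\bigr)}{\partial_E\bigl(1/E+F_E/F\bigr)} .
\]
The denominator is the second-order condition, which we take to be negative by the uniqueness result (Proposition~\ref{prop: uniqueness}) together with a local strict-concavity check at the optimum. Hence $E'(s)<0$ is equivalent to
\[
\partial_s p(E,s)>0 \quad\text{holding $E$ fixed,}
\]
i.e.\ the statement that \emph{the price of equality rises with salience}. This is the claim we will prove.

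Next we express $p$ through the primitive partials. We have $F_E=\mu_\tau/E_\tau$ and $\partial_s T=-E_s/E_\tau$. From the proof sketch of Proposition~\ref{prop: partials},
\[
\mu_\tau=-\frac{s^2\tau\varepsilon\bar z}{1-s\tau},\qquad \mu_s=-\frac{s\tau^2\varepsilon\bar z}{1-s\tau}=\frac{\tau}{s}\,\mu_\tau .
\]
For the equality partials we use (\ref{Atkinson inequality index}): $\partial\log E=\int\omega_i\,\partial\log\hat c_i\,\dd i-\partial\log\mu$, with weights $\omega_i=\hat c_i^{1-\rho}/\int\hat c_j^{1-\rho}$. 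We write $\partial\hat c_i/\partial\tau$ and $\partial\hat c_i/\partial s$ explicitly; the latter is (\ref{ci derivative s}), and the former follows by the same calculation using (\ref{first order condition internality}) and (\ref{tax derivatives}). This yields $p=-\mu_\tau/(\mu E_\tau)$ as a function of $(\tau,s)$.

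We then compute the total derivative
\[
\frac{d}{ds}p\bigl(T(E,s),s\bigr)=p_s-p_\tau\,\frac{E_s}{E_\tau}.
\]
Since $E_s<0<E_\tau$, the factor $-E_s/E_\tau$ is positive. So the goal is to sign the combination $E_\tau p_s-E_s p_\tau$, and it would suffice to show $p_s>0$ and $p_\tau\ge 0$; if one of them fails, we sign the whole combination directly.

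The main obstacle is signing these cross terms, because they involve second derivatives of $\log\hat c_i$ integrated against the weights $\omega_i$. The plan is to reduce each term to a covariance, under the measure $\omega$, between two functions of $i$: the growth rates $\partial_\tau\hat c_i/\hat c_i$ and $\partial_s\hat c_i/\hat c_i$, and their derivatives. Proposition~\ref{prop: partials} already shows these growth rates are monotone in $i$, in opposite directions, with $z_i$ and $\hat c_i$ increasing in $i$ by Lemma~\ref{increasing normal lemma}. Because $z_i\propto w_i^{1+\varepsilon}$, all $\tau$- and $s$-dependence of $z_i$ enters through the common factor $(1-s\tau)^\varepsilon$. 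This should let each mixed term be written as a common scalar times expressions in $z_i/\bar z$ and $\hat c_i$, after which the Chebyshev integral inequality, applied as in the proof of Theorem~\ref{salience vs welfare proposition}, gives the sign. Where a direct Chebyshev argument does not close, we will substitute the optimality condition (\ref{useful optimal tax formula}) to eliminate $\tau$ at the optimum before signing.
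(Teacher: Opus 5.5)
\textbf{Verdict: genuine gap.} Your reduction is correct, but the step that carries the theorem is never carried out, and the route you sketch for it does not close.

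\textbf{What works.} Maximizing $\log E+\log F(E,s)$ and differentiating the first-order condition shows that $E'(s)<0$ is equivalent to the relative price $-F_E/F$ rising in $s$ at fixed $E$. This needs the strict second-order condition, which the paper gets from Lemma~\ref{strictly concave W} and Proposition~\ref{prop: E-version properties}, not from uniqueness alone. Since $\partial_s\log(-F_E/F)=\partial_s p_E/p_E-F_s/F$, your single condition bundles the paper's substitution effect ($\partial_s p_E>0$) and income effect ($F_s=\partial\check{\mu}/\partial s<0$). That is a legitimate and somewhat more economical packaging than the relaxed-problem decomposition.

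\textbf{Where it fails.} You never actually sign $\frac{d}{ds}p\bigl(T(E,s),s\bigr)$. The whole difficulty sits in the factor $E_\tau$, whose derivative along an iso-equality path is $E_{\tau\tau}T_s+E_{\tau s}$. Neither piece has a definite sign. So your sufficient condition ``$p_s>0$ and $p_\tau\ge 0$'' is not available in general. A brute-force Chebyshev/covariance treatment of second derivatives of the Atkinson index gives no handle on the combination. Falling back on the optimality condition (\ref{useful optimal tax formula}) is a hope, not an argument.

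\textbf{The missing idea: the structure of iso-equality paths.}
\begin{itemize}
\item Because $z_i=k_i\bar z$ with $k_i$ independent of $(\tau,s)$, you have $\hat c_i=\bar z\,\bigl(k_i a(\tau,s)+\tau\bigr)$, where $a/\tau=\frac{1}{1+\varepsilon}\bigl(\frac{1}{\tau}+\varepsilon s\bigr)-1$.
\item By scale invariance, $E$ depends on $(\tau,s)$ only through the single index $q=1/\tau+\varepsilon s$, say $E=\Psi(q)$ with $\Psi'<0$. The paper reaches the same conclusion by showing, via Lemma~\ref{Lorenz lemma}, that all $\hat c_i$ must move proportionately along an iso-equality path.
\item This explains why your partial split fails: $E_{\tau s}=-\varepsilon\Psi''/\tau^2$ and $E_{\tau\tau}$ both carry the unsigned curvature $\Psi''$. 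That curvature cancels only in the path combination, because $q$ is constant along the path.
\item Constancy of $q$ gives the ODE $\check{\tau}_s=\varepsilon\check{\tau}^2$, and therefore
\[
\frac{d}{ds}\,\frac{1}{E_\tau\bigl(\check{\tau}(E,s),s\bigr)}=\partial_s\check{\tau}_E=\partial_E\check{\tau}_s=2\varepsilon\check{\tau}\,\check{\tau}_E>0 .
\]
So $E_\tau$ falls along the path without ever computing $E_{\tau\tau}$ or $E_{\tau s}$ separately.
\end{itemize}

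\textbf{How this closes your own argument.} Combine the above with:
\begin{itemize}
\item $-\mu_{s\tau}>0$, $-\mu_{\tau\tau}>0$ and $\check{\tau}_s>0$ (Lemma~\ref{lem: effect on marginal efficiency cost of taxes}), so $-\mu_\tau$ rises along the path;
\item $\partial_s\check{\mu}<0$, so $1/\mu$ rises along the path.
\end{itemize}
Then every factor of your $p=-\mu_\tau/(\mu E_\tau)$ moves the right way along the path, which is exactly what your reduction requires. Without this single-index observation (or the equivalent ODE), the proposal does not prove the theorem.
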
  

In contrast, the effect of salience on efficiency is ambiguous when the marginal tax rate is chosen to be $s$-optimal.  For some specifications of the parameters and values of $s$, $\mu'\l(s\r) < 0$ and, for others, $\mu'\l(s\r) > 0$. Section \ref{efficiency ambiguous} discusses this further, and Appendix \ref{app:simulations} illustrates the ambiguity of the sign of $\mu'\l(s\r)$ with numerical simulations.

To prove Theorem \ref{equality theorem}, we begin by recasting the welfare maximization problem so that the decision variable is equality rather than the marginal tax rate.  Under this formulation, salience determines an equality-efficiency frontier and welfare maximization corresponds to choosing the optimal point on that frontier.  Changes in salience shift and tilt the frontier, and the resulting changes in equality and efficiency can be analyzed using standard income-substitution logic.  For equality, both the substitution and income effects push in the same direction: higher salience raises the ``price of equality" and contracts the feasible set. For efficiency, by contrast, the effects work in opposite directions.  This explains why the effect of salience on equality can be signed, while the effect on efficiency is ambiguous.  The following sections develop this logic in detail.

\subsection{A change of variables}\label{sec: change of variables}
Recalling that $\xi\l(\tau,s\r)$ is ordinally equivalent to $W\l(\tau,s\r)$, the problem of finding the $s$-optimal marginal tax rate can be written as:

\vspace{0.6em}
\noindent \textbf{Welfare Maximization ($\tau$ version) { }} Choose $\tau$ to maximize $\xi\l(\tau,s\r)=E\l(\tau,s\r) \times \mu\l(\tau,s\r)$. 
\vspace{0.6em}

\noindent     
Because $\pdv{E}{\tau} > 0$ for admissible tax rates (see Table \ref{tab:partials}), for any \textbf{admissible equality level} $\check{E}$ at $s$---that is, one compatible with a tax rate that is admissible at $s$---there is a unique tax rate $\check{\tau}\l(\check{E},s\r)$, admissible at $s$, that brings about $\check{E}$. Formally, $\check{\tau}\l(\check{E},s\r)$ is defined by: 
\begin{align}\label{definition of tilde tau}
E\l(\check{\tau}\l(\check{E},s\r),s\r) = \check{E}.
\end{align}
Define $\check{\mu}\l(\check{E},s\r)= \mu\l(\check{\tau}\l(\check{E},s\r),s\r)$ to be the corresponding efficiency level.

Since, at any $s$, there is a one-to-one correspondence  between admissible tax rates and admissible equality levels, we can equivalently reformulate the welfare maximization problem so that equality rather than the marginal tax rate is the decision variable. 

\vspace{0.6em}
\noindent \textbf{Welfare Maximization ($E$ version) { }} Choose $E$ to maximize $\check{\xi}\l(E,s\r):=E \times \check{\mu}\l(E,s\r)$.  
\vspace{0.6em}

\noindent Fixing $s$, we can think of this as the problem of choosing the welfare-optimal point on the \textbf{equality-efficiency frontier} $\l\{\l(E,\check{\mu}\l(E,s\r)\r): E \textup{ is admissible at }s\r\}$, as illustrated in Figure \ref{fig:eefrontier}.

\vspace{-1.5em}
\singlespacing
\begin{figure}[H]
    \begin{center}
    \includegraphics[width=0.5\textwidth, trim=0 45 0 0, clip]{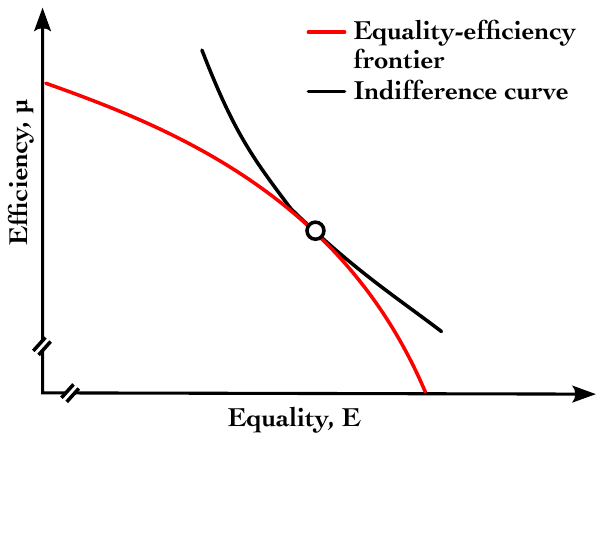}
    \vspace{-0.5em}
    \caption{Equality–efficiency frontier}
    \label{fig:eefrontier}
    \end{center}
    \vspace{-0.75em}
{\footnotesize\textit{Note.} This figure shows the welfare-maximizing choice of equality and efficiency. The optimum is the white dot, which corresponds to the highest-welfare combination of equality and efficiency that is feasible. This is where the equality-efficiency frontier is tangent to the utilitarian indifference curve.}
\end{figure}
\onehalfspacing
 Since $\tau =0$ cannot be an $s$-optimal tax rate (See Proposition \ref{prop: uniqueness} in the Appendix), the discussion below restricts attention to admissible equality levels for which $\check{\tau}\l(E,s\r)> 0$.

\subsection{Decomposing the effect of salience into substitution and income effects}\label{Sec: Decomposing}

When salience changes, the equality–efficiency frontier is altered in two ways. First, the level of the feasible set shifts, corresponding to a change in the level of efficiency that can be attained for a given level of equality. Second, the slope of the frontier changes, altering the marginal efficiency cost of additional equality. As in standard consumer theory, the change in optimal equality decomposes into a substitution effect (response to the change in the slope) and an income effect (response to the shift in the frontier). The two effects are illustrated in Figure \ref{fig:substitution_income}.

\vspace{-1.5em}
\singlespacing
\begin{figure}[H]
    \begin{center}
    \includegraphics[width=0.5\textwidth]{figure2.pdf}
    \vspace{-0.5em}
    \caption{Substitution and income effects}
    \label{fig:substitution_income}
    \end{center}
    \vspace{-0.75em}
    {\footnotesize\textit{Note.} This figure shows how the welfare-maximizing choice of efficiency and equality changes when salience is reduced. The overall impact is divided into the substitution effect (\textbf{A}$\rightarrow$\textbf{B}) and the income effect (\textbf{B}$\rightarrow$\textbf{C}).}
\end{figure}
\onehalfspacing

\vspace{-0.5em}
To formalize the decomposition, we introduce a relaxed problem with a slack variable $\delta$ that shifts the frontier as shown in Figure \ref{fig:shift_frontier}.

\vspace{0.5em}
\noindent \textbf{Welfare Maximization (relaxed version) { }} Choose $E$ and $\mu$ to maximize $\xi= E \times \mu$ subject to $\mu \leq \check{\mu}\l(E,s\r)+\delta$.
\vspace{0.5em}

\vspace{-1.5em}
\singlespacing
\begin{figure}[H]
    \begin{center}
    \includegraphics[width=0.5\textwidth]{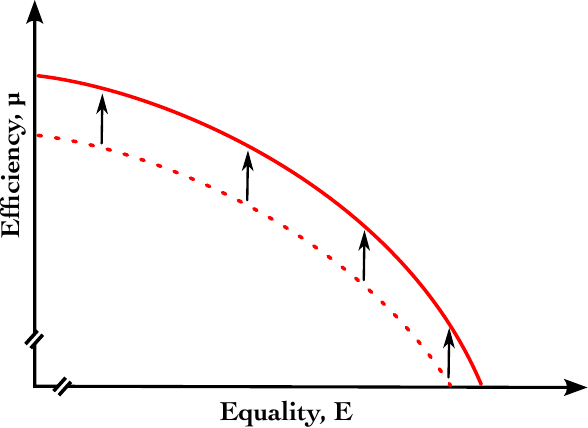}
    \vspace{-0.5em}
    \caption{Shifting the equality-efficiency frontier}
    \label{fig:shift_frontier}
    \end{center}
    \vspace{-0.75em}
    {\footnotesize\textit{Note.} This figure shows the ``relaxed'' problem which shifts the feasible set up vertically so that more efficiency is possible for any given level of equality.}
\end{figure}
\onehalfspacing

\noindent Let $E^{\mathrm{rel}}\l(s,\delta\r)$ be the optimal level of equality for the relaxed problem.

\noindent Now consider an initial salience level $s_0$ and a new salience level $s_1$. Define:
\begin{align}\label{exact compensation}
\delta\l(s_0,s_1\r):= \check{\mu}\l(E\l(s_0\r),s_0\r) - \check{\mu}\l(E\l(s_0\r),s_1\r)
\end{align}
to be the perturbation making the $s_0$-optimal equality-efficiency pair $\l(E\l(s_0\r),\mu\l(s_0\r)\r)$ just feasible in the relaxed problem at $s_1$.  Define $E^{\mathrm{c}}\l(s_0,s_1\r)$, which we refer to as the \textbf{compensated demand for equality}, to be the optimal level of equality at salience level $s_1$, when the welfare maximization problem is relaxed so as to make the $s_0$-optimum just feasible.  Formally:
\begin{align}\label{compensated definition}
E^{\mathrm{c}}\l(s_0,s_1\r) = E^{\mathrm{rel}}\l(s_1,\delta\l(s_0,s_1\r)\r).
\end{align} 
The following Lemma follows easily from our definitions, and formally captures the substitution and income effects.
\begin{lem}\label{lem:decomposition}Let $E\l(s\r)$ be the optimal level of equality at $s$.  Then the change in optimal equality can be decomposed as:
$\displaystyle \l. \dv{E}{s}\r|_{s =s_0}=\underbrace{\l.\pdv{E^\mathrm{c}}{s_1}\r|_{s_1=s_0}}_{\text{substitution effect}} +\underbrace{\l.\pdv{E^{\mathrm{rel}}}{\delta}\r|_{\substack{s = s_0 \\ \delta = 0}}\times \l.\pdv{\check{\mu}}{s}\r|_{\substack{E= E\l(s_0\r) \\ s=s_0}}}_{\text{income effect}}.$
\end{lem}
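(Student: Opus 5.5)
The plan is to write the optimal equality as a composite function and differentiate it by the chain rule. The first step is to observe that the unrelaxed problem at salience $s$ is exactly the relaxed problem at $(s,\delta)$ with $\delta=0$; at the optimum the constraint $\mu\le\check{\mu}(E,s)+\delta$ binds, since $\xi=E\mu$ is strictly increasing in $\mu$ for $E>0$. Hence $E(s)=E^{\mathrm{rel}}(s,0)$.

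The next step is to express the compensated demand as a function of two arguments. By \eqref{compensated definition},
\[
E^{\mathrm{c}}(s_0,s_1)=E^{\mathrm{rel}}\bigl(s_1,\delta(s_0,s_1)\bigr),
\]
and \eqref{exact compensation} gives $\delta(s_0,s_0)=0$. Setting $s_1=s_0$ therefore yields $E^{\mathrm{c}}(s_0,s_0)=E(s_0)$, which is the identity tying the two objects together.

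The decomposition itself comes from applying the identity $E(s)=E^{\mathrm{rel}}(s,0)$ and writing
\[
E^{\mathrm{rel}}(s,0)=E^{\mathrm{rel}}\bigl(s,\delta(s_0,s)\bigr)+\Bigl[E^{\mathrm{rel}}(s,0)-E^{\mathrm{rel}}\bigl(s,\delta(s_0,s)\bigr)\Bigr].
\]
The first term is $E^{\mathrm{c}}(s_0,s)$, so differentiating it in $s$ at $s=s_0$ gives the substitution effect $\partial E^{\mathrm{c}}/\partial s_1$. For the bracketed term, the chain rule together with $\delta(s_0,s_0)=0$ gives a derivative at $s_0$ equal to
\[
-\left.\pdv{E^{\mathrm{rel}}}{\delta}\right|_{s_0,0}\times\left.\pdv{\delta(s_0,s_1)}{s_1}\right|_{s_1=s_0}.
\]
From \eqref{exact compensation}, $\partial\delta/\partial s_1=-\partial\check{\mu}(E(s_0),s)/\partial s$ evaluated at $s=s_0$. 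The two minus signs cancel, which produces precisely the income-effect term in the statement.

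The main obstacle is justifying differentiability of $E^{\mathrm{rel}}$ in $(s,\delta)$ and of $\check{\mu}$ in $(E,s)$. For $\check{\mu}$, I would first invoke the implicit function theorem on \eqref{definition of tilde tau}, using $\partial E/\partial\tau>0$ from Proposition \ref{prop: partials}, to get smoothness of $\check{\tau}$ and hence of $\check{\mu}$. For $E^{\mathrm{rel}}$, I would apply the implicit function theorem to the first-order condition
\[
\check{\mu}(E,s)+\delta+E\,\partial_E\check{\mu}(E,s)=0,
\]
assuming a nonzero second-order term at the optimum; this nondegeneracy is consistent with the uniqueness in Proposition \ref{prop: uniqueness} and the standing smoothness assumptions. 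Once these derivatives exist, the decomposition is pure bookkeeping.
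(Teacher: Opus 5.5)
Your proposal is correct and is essentially the paper's argument: the paper differentiates $E^{\mathrm{c}}(s_0,s_1)=E^{\mathrm{rel}}(s_1,\delta(s_0,s_1))$ by the chain rule, uses $E(s)=E^{\mathrm{rel}}(s,0)$ and $\partial\delta/\partial s_1=-\partial\check{\mu}/\partial s$, and then rearranges; your add-and-subtract computation is the same calculation. The nondegeneracy you assume for the implicit function theorem is proved in Proposition \ref{prop: E-version properties} (a strict second-order condition at the optimum), so you can cite it there instead of assuming it.
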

To understand the income effect, note that, by (\ref{exact compensation}),  $\l.\pdv{\check{\mu}}{s}\r|_{\substack{E= E\l(s_0\r) \\ s=s_0}}=-\l.\pdv{\delta}{s_1}\r|_{s_1=s_0}$. 

Next, we sign these effects.

\subsection{Signing the income effect for equality}\label{sec:income effect} 

The income effect is $\displaystyle \pdv{E^{\mathrm{rel}}}{\delta} \times \pdv{\check{\mu}}{s}$.  Since the objective $E \times \mu$ is of the Cobb-Douglas form, equality acts as a normal good and hence expanding the frontier leads to more equality: $\displaystyle \l.\pdv{E^{\mathrm{rel}}}{\delta}\r|_{\delta =0} >0$. 

 Next we sign, $\pdv{\check{\mu}}{s}\l(E,s\r)$, which determines whether an increase in salience causes the frontier to expand or contract. Observe that if we increase salience $s$ and hold the marginal tax rate $\tau$ fixed, both efficiency and equality must decline (see Table \ref{tab:partials}).  To hold equality fixed, we must correspondingly increase $\tau$ as we increase $s$ (again Table \ref{tab:partials}); this would bring efficiency down further (Table \ref{tab:partials}).  So, to hold equality fixed when we increase $s$, we must reduce efficiency.  Hence $\pdv{\check{\mu}}{s} < 0$, which implies that the income effect for equality is negative, so the income effect moves equality in the opposite direction of salience.

\subsection{The price of equality}

The substitution effect captures how the optimal level of equality responds to a change in the slope of the equality–efficiency frontier.  Define the \textbf{price of equality} to be
\begin{align}\label{definition of pE}
p_E\l(E,s\r) := -\pdv{\check{\mu}}{E}\l(E,s\r).
\end{align}
The price of equality is the negative of the slope of the equality-efficiency frontier for salience level $s$ at the point $\l(E, \check{\mu}\l(E,s\r)\r)$ as shown in Figure \ref{fig:price_equality}.

The price of equality tells us how many units of efficiency, measured in terms of consumption equivalents, one would have to give up to attain an extra unit of equality.  The following lemma relates the substitution effect to the change in the price of equality. 

\vspace{-1.5em}
\singlespacing
\begin{figure}[H]
    \begin{center}
    \includegraphics[width=0.5\textwidth, trim=0 45 0 0, clip]{figure1.pdf}
    \vspace{-0.5em}
    \caption{The price of equality}
    \label{fig:price_equality}
    \end{center}
    \vspace{-0.75em}
    {\footnotesize\textit{Note.} This figure shows the price of equality. This is the slope of the frontier of the feasible set of efficiency and equality combinations that can be attained.}
\end{figure}
\onehalfspacing

\begin{lem}\label{lem:substitution}
The sign of the substitution effect on equality is equal to the sign of $-\displaystyle \pdv{p_E}{s}\l(E\l(s\r),s\r)$.
\end{lem}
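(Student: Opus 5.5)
Work directly with the relaxed problem. At salience $s_1$ and slack $\delta(s_0,s_1)$, the optimum $E^{\mathrm{c}}(s_0,s_1)$ satisfies the first-order condition
\begin{align*}
\Phi(E,s_1) := \check{\mu}(E,s_1)+\delta(s_0,s_1) - E\, p_E(E,s_1) = 0 .
\end{align*}
This is obtained by maximizing $E\,(\check{\mu}(E,s_1)+\delta)$ over $E$. At $s_1=s_0$ we have $\delta=0$, so the condition reduces to the first-order condition of the $E$-version problem, and its solution is $E(s_0)$. The plan is to apply the implicit function theorem to $\Phi$ at $(E(s_0),s_0)$ and read off $\partial E^{\mathrm c}/\partial s_1$.

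First compute $\partial \Phi/\partial s_1$ at $s_1=s_0$, $E=E(s_0)$. By (\ref{exact compensation}), $\partial\delta/\partial s_1 = -\partial\check{\mu}/\partial s$ at $(E(s_0),s_0)$. This exactly cancels the direct term $\partial\check{\mu}/\partial s$, which is the point of the compensation. What remains is $\partial\Phi/\partial s_1 = -E(s_0)\,\partial p_E/\partial s$. Second, $\partial\Phi/\partial E = \partial^2 [E\,\check{\mu}]/\partial E^2$, the second derivative of the objective along the frontier, which should be negative at the maximum. The implicit function theorem then gives
\begin{align*}
\left.\pdv{E^{\mathrm c}}{s_1}\right|_{s_1=s_0} = -\frac{\partial\Phi/\partial s_1}{\partial\Phi/\partial E} = \frac{E(s_0)\,\partial p_E/\partial s}{\partial^2 (E\check{\mu})/\partial E^2}.
\end{align*}
Since $E(s_0)>0$ and the denominator is negative, this has the sign of $-\partial p_E/\partial s$, as the statement claims.

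The main obstacle is making the second-order condition strict, since the implicit function theorem needs $\partial^2(E\check{\mu})/\partial E^2<0$ rather than merely $\le 0$. I would handle this in two steps. First, use the uniqueness of the $s$-optimal tax rate (Proposition \ref{prop: uniqueness}). Second, use the smooth one-to-one change of variables between $\tau$ and $E$, which has $\partial E/\partial\tau>0$ by Proposition \ref{prop: partials}. Together these transfer the second-order condition in $\tau$ to one in $E$. If only weak concavity is available, I would add the standard regularity assumption that the optimum is nondegenerate, and note that a degenerate case would also make $E'(s)$ in Lemma \ref{lem:decomposition} ill-defined. I would also check that $E(s_0)$ is interior to the admissible set, so that the first-order condition holds with equality. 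This follows because $\tau(s)>0$ and $\tau(s)$ is admissible, and the relaxed optimum stays nearby for $s_1$ close to $s_0$ by continuity.
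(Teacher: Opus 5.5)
Your proposal is correct and follows the paper's proof: the paper also applies the implicit function theorem to the first-order condition of the compensated problem $E\times(\check{\mu}(E,s_1)+\delta(s_0,s_1))$, uses the compensation to cancel $\partial\check{\mu}/\partial s+\partial\delta/\partial s_1$, and is left with the numerator $E(s_0)\times(-\partial p_E/\partial s)$. On the strict second-order condition, uniqueness of the optimum would not be enough by itself, but no extra regularity assumption is needed: the paper's Proposition \ref{prop: E-version properties} transfers the strict concavity of $W$ in $\tau$ (Lemma \ref{strictly concave W}) to $\check{\xi}$ in $E$, using the first-order condition $\partial W/\partial\tau=0$ at the optimum to eliminate the extra term that the change of variables creates.
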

That is to say, the substitution effect is negative or positive, depending on whether the price of equality rises or falls with salience.  The following lemma settles which it is.
\begin{lem}\label{price of equality lemma}
The price of equality is increasing in salience: $\displaystyle \pdv{p_E}{s} > 0$.  
\end{lem}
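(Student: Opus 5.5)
The plan is to exploit the constant-elasticity structure. With it, equality depends on $(\tau,s)$ only through a single scalar index. Holding $E$ fixed then means holding that index fixed, and the price of equality reduces to an explicit function of $(\tau,s)$ that can be signed directly.

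First, using (\ref{zi expression}) write $z_i=a_i(1-s\tau)^{\varepsilon}$ with $a_i=w_i^{1+\varepsilon}$ and $\bar a=\int a_i\,\dd i$. From the first-order condition, $v_i(z_i)=\frac{\varepsilon}{1+\varepsilon}z_i(1-s\tau)$. Hence
\begin{align*}
\hat c_i=(1-s\tau)^{\varepsilon}\l[a_i A(\tau,s)+\tau\bar a\r],\qquad A(\tau,s)=\frac{1-\tau(1+\varepsilon-\varepsilon s)}{1+\varepsilon}.
\end{align*}
On the admissible region we have $A>0$; I expect this to follow from order-preservation. The Atkinson index (\ref{Atkinson inequality index}) is scale invariant, so $E(\tau,s)=\Phi(\theta)$ with $\theta:=\tau\bar a/A(\tau,s)$. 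Here $\Phi(\theta)=E$ of the distribution $a_i+\theta\bar a$, and $\Phi$ is strictly increasing: adding a larger equal lump sum relative to a fixed unequal base raises the Atkinson equality of a scale-free distribution. This can be checked by differentiating and applying Chebyshev as in Proposition \ref{prop: partials}. Efficiency is explicit:
\begin{align*}
\mu(\tau,s)=(1-s\tau)^{\varepsilon}\,\bar a\,\l(A(\tau,s)+\tau\r).
\end{align*}

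Second, along the frontier at fixed $s$ we have $p_E=-\frac{\mu_\tau}{E_\tau}=-\frac{\mu_\tau}{\Phi'(\theta)\,\theta_\tau}$. Holding $E$ fixed is equivalent to holding $\theta$ fixed, and then $\Phi'(\theta)$ is also fixed. So $\operatorname{sign}\pdv{p_E}{s}$ equals the sign of the derivative, along the curve $\{\theta=\text{const}\}$, of
\begin{align*}
q(\tau,s):=-\frac{\mu_\tau(\tau,s)}{\theta_\tau(\tau,s)}.
\end{align*}
That curve is explicit: $\theta$ constant means $1/\tau=(1+\varepsilon)\bar a/\theta+(1+\varepsilon-\varepsilon s)$. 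Hence $\tau$ rises with $s$, consistent with Table \ref{tab:partials}, and $\dd\tau/\dd s=\varepsilon\tau^2$ along the curve.

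Third, I will compute $\mu_\tau$ and $\theta_\tau$ in closed form; the proof sketch of Proposition \ref{prop: partials} gives $\mu_\tau=-s^2\tau\varepsilon\bar z/(1-s\tau)$, and $\theta_\tau=\bar a/\l[(1+\varepsilon)A^2\r]$. Then substitute $\dd\tau/\dd s=\varepsilon\tau^2$ and differentiate $\log q$ along the curve. This should give an expression in $\tau,s,\varepsilon$ alone, with the distribution of $a_i$ dropping out entirely. The result should be positive whenever $\tau>0$ and $\tau$ is below the revenue-maximizing rate; that rate is explicit, $\tau^{\max}$ solving $1-s\tau=s\varepsilon\tau$.

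The main obstacle is this last sign check. The term $(1-s\tau)^{\varepsilon}$ and the factor $s^2$ in $\mu_\tau$ both move with $s$, and $\tau$ rises along the curve. Each increases $|\mu_\tau|$, but $\theta_\tau$ also changes through $A$. I would first try to show that each piece of $\dd\log q/\dd s$ is separately nonnegative using $A>0$ and $s\tau(1+\varepsilon)\le 1$. If that fails, I would multiply through by the positive denominators and verify positivity of the resulting polynomial on the admissible set.
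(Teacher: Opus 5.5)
Your proposal is correct and takes a genuinely different route from the paper. The paper keeps $p_E=(-\mu_\tau)/E_\tau$ as a ratio and signs the numerator and denominator separately along the iso-equality path. The marginal efficiency cost rises because $-\mu_{s\tau}>0$, $-\mu_{\tau\tau}>0$ and $\check\tau_s>0$. The marginal equality benefit falls because of the ODE $\check\tau_s=\varepsilon\check\tau^2$: the paper gets it from a Lorenz-curve contrapositive argument (along an iso-equality path every $\hat c_i$ must change proportionally) and combines it with $\partial_s\check\tau_E=\partial_E\check\tau_s=2\varepsilon\check\tau\check\tau_E>0$. You instead use scale invariance to make $E$ a strictly increasing function of the single index $\theta$. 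That puts the iso-equality curves in closed form ($1/\tau+\varepsilon s$ constant), gives the same ODE in one line, and freezes $\Phi'(\theta)$ along the path, so the whole ratio can be signed at once. The step you left open does go through. Along the curve $1-\tau(1+\varepsilon-\varepsilon s)$ is proportional to $\tau$, so $q\propto s^2\tau^3(1-s\tau)^{\varepsilon-1}$, and with $x=s\tau$,
\[
s(1-s\tau)\,\frac{d\log q}{ds}=2+(2\varepsilon-1)x-\varepsilon(\varepsilon+2)x^2 .
\]
This is a concave quadratic equal to $2$ at $x=0$ and to $(1+3\varepsilon+3\varepsilon^2)/(1+\varepsilon)^2$ at the Laffer bound $x=1/(1+\varepsilon)$, so it is positive on the revenue-efficient range. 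Your first plan, signing the three terms $2/s$, $3\varepsilon\tau$ and $(1-\varepsilon)\tau(1+\varepsilon s\tau)/(1-s\tau)$ one by one, fails for $\varepsilon>1$. The bound $x\le 1/(1+\varepsilon)$ is really used, since the quadratic tends to $1-\varepsilon^2$ as $x\to 1$. A cleaner split is $\log q=\log(-\mu_\tau)-\log\theta_\tau$. The term $-\log\theta_\tau$ contributes $2\varepsilon\tau>0$, and after multiplying by $s(1-s\tau)$ the first term becomes $2-x-\varepsilon^2x^2>0$; this is exactly the paper's two-part decomposition. Your route buys explicitness and shows directly that the wage distribution is irrelevant to the sign. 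The paper's route buys the economic reading (rising marginal efficiency cost, falling marginal equality benefit) and sign facts such as $\mu_{\tau\tau}<0$, which it reuses for the concavity of $W$. One cosmetic slip: with $\theta=\tau\bar a/A$ the normalized distribution is $a_i+\theta$, not $a_i+\theta\bar a$.
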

This implies that the substitution effect is negative when salience rises, leading to a reduction in equality.  We establish this lemma in the subsequent sections. 

\subsection{Iso-equality paths}

To establish the sign of $ \pdv{p_E}{s}$, it is useful to derive a more explicit expression for $p_E$.  Differentiating the relation $\check{\mu}\l(E,s\r) = \mu\l(\check{\tau}\l(E,s\r),s\r)$ with respect to $E$, appealing to the definition (\ref{definition of pE}) of $p_E$ and the inverse derivative rule, we have:
\begin{align*}
p_E\l(E,s\r) = \frac{-\mu_\tau\l(\check{\tau}\l(E,s\r),s\r)}{E_\tau\l(\check{\tau}\l(E,s\r),s\r)}.
\end{align*}
Thus, the price of equality is equal to the ratio of the marginal efficiency cost of the tax rate to its marginal equality benefit. Hence, 
\begin{align*}
\pdv{p_E}{s}\l(E,s\r)=\dv{s} \frac{-\mu_\tau\l(\check{\tau}\l(E,s\r),s\r)}{E_\tau\l(\check{\tau}\l(E,s\r),s\r)}.
\end{align*} 
Since both the numerator $-\mu_\tau$ and denominator $E_\tau$ of the right-hand-side are positive (Table 1), we have the following lemma.
\begin{lem}\label{paths lemma}
If $\dv{s} \l[-\mu_\tau\l(\check{\tau}\l(E,s\r),s\r)\r] > 0$ and $\dv{s} E_\tau\l(\check{\tau}\l(E,s\r),s\r)<0$, then $\pdv{p_E}{s}\l(E,s\r) > 0$.
\end{lem}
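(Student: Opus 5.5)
This is an elementary quotient-rule argument, carried out along an iso-equality path. Fix $E$ and write $N(s) := -\mu_\tau\l(\check{\tau}\l(E,s\r),s\r)$ and $D(s) := E_\tau\l(\check{\tau}\l(E,s\r),s\r)$. As the paper has just recorded, $p_E\l(E,s\r)=N(s)/D(s)$, so
\begin{align*}
\pdv{p_E}{s}\l(E,s\r)=\frac{N'(s)D(s)-N(s)D'(s)}{D(s)^2}.
\end{align*}

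First I check the signs of the levels. Take $\check{\tau}\l(E,s\r)>0$, which is admissible at $s$ by construction. Proposition \ref{prop: partials} and Table \ref{tab:partials} then give $N(s)>0$ and $D(s)>0$. These inequalities require admissibility together with $\tau>0$, and both are exactly the maintained restrictions on the frontier.

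Next I combine the signs in the numerator. The hypotheses of the lemma give $N'(s)>0$ and $D'(s)<0$. Therefore $N'(s)D(s)>0$ and $-N(s)D'(s)>0$, so the numerator $N'D-ND'$ is strictly positive. The denominator $D^2$ is also positive, so $\pdv{p_E}{s}>0$. An equivalent route is to write $\log p_E=\log N-\log D$ and differentiate: $\frac{N'}{N}>0$ and $-\frac{D'}{D}>0$.

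The only real technical point is justifying that the total derivatives along the path $s\mapsto \check{\tau}\l(E,s\r)$ exist. I would get this from the implicit function theorem applied to (\ref{definition of tilde tau}). The condition needed is $E_\tau\neq0$, which holds because $E_\tau>0$ at admissible rates. The required smoothness of $E$ and $\mu$ comes from the paper's standing smoothness assumptions. Beyond this, the argument is immediate.
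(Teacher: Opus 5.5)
Your proposal is correct and matches the paper's argument: the paper writes $p_E=-\mu_\tau/E_\tau$ along the iso-equality path and concludes via the quotient rule from the positivity of numerator and denominator (Table~\ref{tab:partials}), exactly as you do. Your use of the implicit function theorem for differentiability of $\check{\tau}$ in $s$ is a minor detail that the paper leaves implicit.
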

To interpret the two $s$-derivatives involving $\mu_\tau$ and $E_\tau$, we want to think about how the arguments $\l(\check{\tau}\l(E,s\r),s\r)
$ vary with $s$. Refer to a locus of tax rate-salience pairs $\l(\check{\tau}\l(E,s\r),s\r)$ that hold equality fixed at some level $E$ as salience varies as an \textbf{iso-equality path}:

\vspace{-1.5em}
\singlespacing
\begin{figure}[H]
    \begin{center}
    \includegraphics[width=0.5\textwidth]{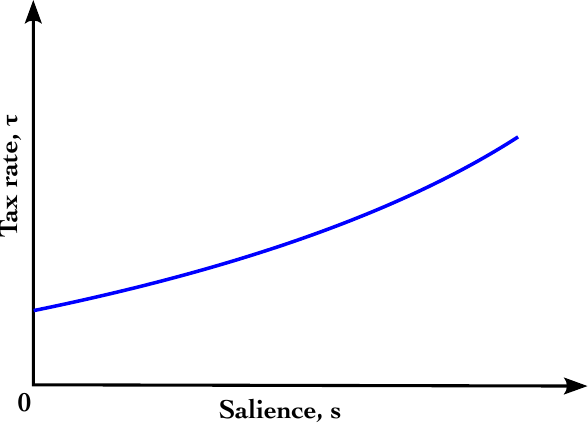}
    \vspace{-0.5em}
    \caption{An iso-equality path}
    \label{fig:iso-equality path}
    \end{center}
    \vspace{-0.75em}
    {\footnotesize\textit{Note.} This figure presents a stylized iso-equality path, which is the set of ($s,\tau$) combinations that achieve a given level of equality. This set is upward sloping, reflecting that higher salience reduces equality so that it must be accompanied with a higher tax rate if the level of equality is to be preserved.}
\end{figure}
\onehalfspacing

\vspace{-0.5em}

\noindent The slope is positive because an increase in salience, holding the tax rate fixed, reduces the equality, and so must be countered by an increase in the marginal tax rate to keep equality constant.

We can now interpret Lemma \ref{paths lemma}: to establish that the price of equality rises, it is sufficient to show that, along an iso-equality path in the direction of greater salience, the marginal efficiency cost of the tax rate $-\mu_\tau$ rises while the marginal equality benefit of the tax rate $E_\tau$ falls.

\subsection{The marginal efficiency cost of the tax rate rises}\label{sec:marginal efficiency cost}
As we move along an iso-equality path in the direction of greater salience, there are two effects on the marginal efficiency cost of taxes $-\mu_\tau$: a direct effect through salience $-\mu_{s \tau}$ and an indirect effect $-\mu_{\tau \tau} \check{\tau}_s$ through the increase in tax rates required to maintain equality.  Formally,
\begin{equation*}
\dv{s}\l[-\mu_\tau\l(\check{\tau}\l(E,s\r),s\r)\r]=  - \mu_{\tau\tau}\l(\check{\tau}\l(E,s\r),s\r)\check{\tau}_s\l(E,s\r)-\mu_{s\tau}\l(\check{\tau}\l(E,s\r),s\r).  
\end{equation*}
When salience rises, the marginal efficiency cost of taxation increases, because (i) income falls, and hence, due to effort cost convexity, so does the marginal cost-savings of tax-induced income decline, and (ii) income becomes more responsive to taxes, increasing the magnitude of behavioral responses. So,    
\begin{equation}\label{mustau}
-\mu_{s \tau} >0.
\end{equation}
As explained above, $\check{\tau}_s > 0$, that is, the tax rate required to keep equality fixed increases with salience (see Figure \ref{fig:iso-equality path}); and it is also the case that increasing the marginal tax rate increases the marginal efficiency cost of further tax rate increases:
\begin{equation}\label{mutautau}
-\mu_{\tau \tau} >0.
\end{equation}
Inequalities (\ref{mustau}) and (\ref{mutautau}), which are formally established in the Appendix, imply that the marginal efficiency cost of the tax rate rises along an iso-equality path as salience increases.

\subsection{The marginal equality benefit of the tax rate falls}\label{sec:marginal equality benefit}

The next step is to argue that, along an iso-equality path, the marginal equality benefit of the tax rate falls as salience increases.  

We know from Section \ref{sec:income effect} that increasing salience causes the equality-efficiency frontier to contract, so that any level of equality is compatible with a lower level of efficiency, or equivalently of average equivalent consumption.  The key to the argument is to show more specifically that, along an iso-equality path, as salience rises, equality is held fixed by adjusting the tax rate in such a way that the equivalent consumption of all agents falls proportionately.  

The requirement that equivalent consumption falls proportionately implies the following ordinary differential equation (for any fixed $E$):
\begin{align*}
\check{\tau}_s\l(E,s\r) = \varepsilon \l[\check{\tau}\l(E,s\r)\r]^2. 
\end{align*}
This is established in the Appendix.  It follows that 
\begin{align*}
 \dv{s}\frac{1}{E_\tau\l(\check{\tau}\l(E,s\r),s\r)}=\pdv{s} \check{\tau}_E\l(E,s\r)=\pdv{E}\check{\tau}_{s}\l(E,s\r) = 2\varepsilon \check{\tau}\l(E,s\r) \check{\tau}_E\l(E,s\r) >0,
\end{align*}
where the inequality follows from $\pdv{E}{\tau}>0$ in Table \ref{tab:partials}. Hence, $\dv{s}E_\tau\l(\check{\tau}\l(E,s\r),s\r)< 0$, which is what we wanted to show. 

\subsection{Putting it all together: Salience reduces optimal equality}

In Sections \ref{sec:marginal efficiency cost} and \ref{sec:marginal equality benefit}, we established that the marginal efficiency cost of taxation rises with salience $\dv{s} \l[-\mu_\tau\l(\check{\tau}\l(E,s\r),s\r)\r] > 0$ and that the marginal equality benefit falls with salience $\dv{s} E_\tau\l(\check{\tau}\l(E,s\r),s\r)<0$.  By Lemma \ref{paths lemma}, this implies that $\pdv{p_E}{s}\l(E,s\r) > 0$ (establishing Lemma \ref{price of equality lemma}) and, by Lemma \ref{lem:substitution}, this means the substitution effect is negative. In Section \ref{sec:income effect} we established that the income effect is negative.  Thus both the substitution effect and the income effect move equality in the same direction. Optimal equality must therefore fall as salience rises (see Lemma \ref{lem:decomposition}). This completes the proof of Theorem \ref{equality theorem}.

\subsection{Ambiguity of the effect on efficiency} \label{efficiency ambiguous}
\noindent In contrast to equality, the effect of salience on efficiency at the optimal tax rate is ambiguous because the income and substitution effects point in opposite directions.  A rise in salience reduces efficiency through the income effect. However, it increases efficiency through the substitution effect because the relative price of efficiency falls. We illustrate the ambiguity of the overall impact of salience using numerical exercises in Appendix \ref{app:simulations}.

The relationship between efficiency and the perceived tax rate provides a useful lens.  We can write $\mu = \int \l[z_i - v_i\l(z_i\r)\r] \dd i $, because redistributive transfers cancel out when considering efficiency.  Observe that $z_i - v_i\l(z_i\r)$ is strictly concave in $z_i$, and maximized when the perceived tax rate is zero. Moreover $z_i$ is decreasing in the perceived tax rate.  It follows that there is a one-to-one relationship between efficiency $\mu$ and the perceived tax rate $\tilde{\tau}$ and that efficiency falls as the perceived tax rate rises.\footnote{The perceived tax rate does not alone determine equality because, if we hold the perceived tax rate fixed and raise the actual tax rate, equality rises.} This also implies that \textit{iso-efficiency paths}, where efficiency is held fixed, take the simple form $\tau = C/s$ for some constant $C$.

 The correspondence between efficiency and $\tilde{\tau}$ implies that the ambiguity of the effect of salience on efficiency translates into an ambiguity of the effect of salience on the perceived tax rate.  Increasing salience reduces efficiency when $\dv{s} s \tau\l(s\r) = \tau+s\dv{\tau}{s} > 0$, or equivalently, when 
\begin{align}\label{elasticity of tau with respect to s}
\dv{\tau}{s}\frac{s}{\tau} > -1.
\end{align}
Thus, we can express the condition that salience reduces efficiency in terms of the elasticity of the optimal tax rate with respect to salience.  

A necessary, but not sufficient, condition for (\ref{elasticity of tau with respect to s}) to fail---that is, for salience to increase efficiency---is $\dv{\tau}{s} < 0$, or in other words, that the optimal tax rate falls when salience rises.  One might think that $\dv{\tau}{s} < 0$ must always hold. Proposition \ref{rho less than 2 proposition} of the Appendix shows that, when $s=1$ and $\rho \leq 2$ (moderate inequality aversion), $\dv{\tau}{s} < 0$.  Outside of $s=1$, there are simple counterexamples with $\dv{\tau}{s} > 0$.\footnote{The result would also cease to be true if we removed the assumption that $\rho \leq 2$: We have found counter-examples with $\rho > 2$ and $s=1$, but those we have found involve extreme income distributions.}  Note that $\dv{\tau}{s} > 0$ is consistent with our result that equality falls with salience because a rise in salience also has a direct effect, $\pdv{s}E\l(\tau,s\r) < 0$, that reduces equality.

\section{Conclusion}

Exploiting tax salience to improve welfare is something that many economists would instinctively regard as troubling, yet this unease has no formal expression in standard welfare economics. This paper provides an approach to taking it seriously. We model the conflict between welfare and honesty using a multi-criterion framework rather than a single social objective, because we do not have a basis for specifying the weight that should be placed on these competing considerations. Ultimately, the relative weight of welfare and honesty is a matter to be settled through ethical deliberation and the political process. However, insofar as such choices are made, it would be possible to infer the implicit weights from observed policy decisions. What we have been able to study formally is how different positions on this trade-off affect the balance between efficiency and equality for a given level of inequality aversion.

Exploiting salience is not the only instance in which economists and the public may feel moral unease about neglecting considerations that do not fit neatly into the standard welfare framework. A similar multi-criterion approach could be applied to balancing competing considerations in other such settings, such as paternalism, privacy, discrimination and desert.

Several extensions merit further study. We studied a simple setting with a linear income tax and uniform salience, and thought of changes in salience as stemming from deliberate government decisions. However, perfect salience is not always the default. The baseline may involve imperfect understanding, and there may be an important moral distinction between deliberate obfuscation and a failure to inform. Salience likely varies along the income distribution, and governments may deliberately target specific subpopulations, which may be especially attractive from a utilitarian standpoint if the tax system is nonlinear and contains complicated provisions. Higher-income taxpayers may hire advisors to help them navigate their taxes, though for our purposes what matters is how misperception affects labor effort, and such effects may survive the presence of tax advice at the end of the year.   In settings with heterogeneous salience, salience could no longer be summarized in a single dimension. Still, any given reduction in salience comes at a potential moral cost, and it would be interesting to explore how perceptions of that cost vary when a lack of salience is targeted at different parts of the income distribution for different purposes. Our analysis also relies on quasilinear preferences, which eliminate income effects. Introducing income effects would complicate the model in several ways: Misperceptions about total tax liability, not just marginal tax rates, would matter, and the measurement of equality and efficiency would itself become more involved.

A central analytical tool in this paper is the price of equality.  This requires a decomposition of  utilitarian welfare into equality and efficiency and a change of variables where equality, rather than the tax rate, becomes the choice variable.  This approach could also be valuable for the extensions mentioned above.  For example, one could ask about how a targeted attempt to reduce salience at one part of the income distribution would affect the price of equality.  Moreover, the price of equality may have purchase in other settings in which government policy shifts the equality-efficiency frontier, whether through salience, tax enforcement, or other channels.

\bigskip\bigskip
\appendix

\numberwithin{equation}{section}
\numberwithin{lem}{section}
\numberwithin{prop}{section}
\numberwithin{cor}{section}
\numberwithin{thm}{section}
\numberwithin{rem}{section}

\section*{Appendix}
\section{Mathematical proofs}

\subsection{Preliminaries}

\subsubsection{Domain of tax rates}

\begin{rem}\label{domain remark}  Throughout the appendix, at any salience level $s$, we restrict attention to tax rates in the interval $\l[0,\frac{1}{s}\r)$.  When $\tau \geq \frac{1}{s}$, the perceived tax rate is at least 1, leading to zero labor supply.  Such policies are Pareto dominated by setting the tax rate equal to zero.
\end{rem}

\subsubsection{Comparative statics of income}\label{sec: comparative statics of income}

This section collects some useful facts related to the comparative statics of income. These facts will be useful specifically for Lemma \ref{lem: effect on marginal efficiency cost of taxes} below. 

It follows from the agent's first order condition that $v_i'\l(z_i\r) =1- s \tau > 0$ (see Remark \ref{domain remark}). It follows that
\begin{align*}
v_i''\l(z_i\r) = \dv[2]{z_i}\frac{\l(\frac{z_i}{w_i}\r)^{1+\frac{1}{\varepsilon}}}{1+\frac{1}{\varepsilon}} 
= \dv{z_i} w_i^{-\l({1+\frac{1}{\varepsilon}}\r)}z_i^{\frac{1}{\varepsilon}}
= \frac{1}{\varepsilon} w_i^{-\l({1+\frac{1}{\varepsilon}}\r)}z_i^{\frac{1}{\varepsilon}-1}
= \frac{1}{\varepsilon}\frac{v_i'(z_i)}{z_i}
= \frac{1}{\varepsilon}\frac{1-s\tau}{z_i} >0.
\end{align*}

Recall from Section \ref{sec: behavioral responses} that $\pdv{z_i}{s}= -\frac{\varepsilon\tau}{1-s\tau}z_i$ and $ \pdv{z_i}{\tau} = -\frac{\varepsilon s}{1-s\tau}z_i$.  If $\tau > 0$, then $\pdv{z_i}{s}< 0$, and if $s > 0$, as we assume throughout (see Section \ref{sec: taxes and salience}), then $\pdv{z_i}{\tau}< 0$.  We have:
\begin{align}\label{cross partial parts}
\pdv{z_i}{s}{\tau} = \pdv{s} \pdv{z_i}{\tau}= \pdv{s}\l[-\frac{\varepsilon s}{1-s\tau}z_i\r] = -\varepsilon\l[\l( \pdv{s} \frac{s}{1-s\tau}\r) z_i +  \frac{s}{1-s\tau} \pdv{z_i}{s}\r]
 \end{align}
 Moreover,
 \begin{align*}
\pdv{s}\frac{s}{1-s\tau} = \frac{\l(1-s\tau\r)-s\l(-\tau\r)}{\l(1-s\tau\r)^2} = \frac{1}{\l(1-s\tau\r)^2}.
 \end{align*}
 Hence, appealing to (\ref{cross partial parts}), and the expression for $\pdv{z_i}{s}$ above,
 \begin{align}\label{second cross partial expression}
\pdv{z_i}{s}{\tau} = -\varepsilon \l[\frac{1}{\l(1-s\tau\r)^2}z_i +\frac{s}{1-s\tau}\l(-\frac{\tau \varepsilon}{1-s\tau}z_i\r) \r] = -\frac{\varepsilon \l(1-s\tau\varepsilon\r)}{\l(1-s\tau\r)^2}z_i. 
\end{align}
Revenue efficiency is equivalent to $\frac{s \tau \varepsilon}{1-s \tau} \leq 1 $ (see Proposition \ref{prop: revenue maximizing not optimal}), and hence implies $s\tau \varepsilon < 1$. This in turn implies, via (\ref{second cross partial expression}), that, if $\tau$ is revenue efficient, then $\pdv{z_i}{s}{\tau} < 0$.
We also have:
\begin{align*}
\pdv[2]{z_i}{\tau} = &\;\pdv{\tau}\pdv{z_i}{\tau} = \pdv{\tau}\l[-\frac{\varepsilon s}{1-s\tau}z_i\r]= -\varepsilon s\l[\l(\pdv{\tau}\frac{1}{1-s\tau}\r)z_i + \frac{1}{1-s\tau} \pdv{z_i}{\tau} \r]\\
=&\; -\varepsilon s\l[\frac{s}{\l(1-s\tau\r)^2}z_i + \frac{1}{1-s\tau}\l(-\frac{s \varepsilon}{1-s\tau}z_i\r) \r] = -\frac{\varepsilon \l(1-\varepsilon\r)s^2 }{\l(1-s\tau\r)^2}z_i,
\end{align*}
from which it follows that
\begin{align*}
    \pdv[2]{z_i}{\tau}= -\frac{\varepsilon(1-\varepsilon)s^2}{(1-s\tau)^2}\,z_i
\;\begin{cases}
<0 & \text{if } 0<\varepsilon<1,\\
=0 & \text{if } \varepsilon=1,\\
>0 & \text{if } \varepsilon>1
\end{cases}.
\end{align*}

\subsubsection{Relation between $\hat{c}_i$ and $z_i$}\label{relation c z subsection}
It is useful for some of the arguments below to develop a relation between $\hat{c}_i$ and $z_i$. Recall from (\ref{zi expression}) that optimal income for agent $i$ is given by: 
\begin{align}\label{optimal income s}
z_i = w_i^{\varepsilon+1}\l(1-s\tau\r)^\varepsilon.
\end{align}
This follows immediately from the first order condition with our specification (\ref{form cost of earning income}) for the cost function: $1-s\tau = \l.z_i^{\frac{1}{\varepsilon}}\r/w_i^{1+\frac{1}{\varepsilon}}$.  It follows from (\ref{optimal income s}) that 
\begin{align*}
\frac{\l(\frac{z_i}{w_i}\r)^{1+\frac{1}{\varepsilon}}}{1+\frac{1}{\varepsilon}}= \frac{\varepsilon}{1+\varepsilon} \l[w_i\l(1-s\tau\r)\r]^{1+\varepsilon}= \frac{\varepsilon}{1+\varepsilon} z_i\l(1-s\tau\r).
\end{align*}
Hence, using (\ref{money-metric utility}) and (\ref{form cost of earning income}), 
\begin{align}\label{simplification of an expression s}
\begin{split}
\hat{c}_i=&\; z_i\l(1-\tau\r) - \frac{\l(\frac{z_i}{w_i}\r)^{1+\frac{1}{\varepsilon}}}{1+\frac{1}{\varepsilon}}+ \tau \bar{z}\;\\
=&\;  \l[\l(1-\tau\r)-\frac{\varepsilon}{1+\varepsilon}\l(1-s\tau\r)\r] z_i +\tau \bar{z}\\
=&\; \frac{1-\tau\l(1+\varepsilon\l(1-s\r)\r)}{1+\varepsilon} z_i +\tau \bar{z}.
\end{split}
\end{align}
Noting that $z_i$ is strictly increasing in $i$, it follows that: 
\begin{prop}\label{prop: order preserving condition}
A tax rate $\tau$ is order-preserving at $s$ if and only if $\displaystyle \tau < \frac{1}{1+\varepsilon\l(1-s\r)}$.
\end{prop}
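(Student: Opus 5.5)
The plan is to read the result off the expression (\ref{simplification of an expression s}) derived just above. That expression writes equivalent consumption as an affine function of income,
\begin{align*}
\hat{c}_i = A(\tau,s)\, z_i + \tau\bar{z}, \qquad A(\tau,s) := \frac{1-\tau\l(1+\varepsilon\l(1-s\r)\r)}{1+\varepsilon}.
\end{align*}
The intercept $\tau\bar{z}$ is common to all agents, since it is an aggregate and does not depend on $i$. So $\hat{c}_i$ varies across agents only through the term $A(\tau,s) z_i$. Whether $\hat{c}_i$ is strictly increasing in $i$ therefore reduces to the sign of $A(\tau,s)$, provided $z_i$ is strictly increasing in $i$.

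The second step is to confirm that $z_i$ is strictly increasing in $i$ on the relevant domain. By (\ref{optimal income s}), $z_i = w_i^{\varepsilon+1}(1-s\tau)^{\varepsilon}$. Remark \ref{domain remark} restricts attention to $\tau<1/s$, so $1-s\tau>0$, and $(1-s\tau)^\varepsilon$ is a positive constant across agents. Since $w_i>0$ is strictly increasing in $i$ and $\varepsilon+1>0$, it follows that $z_i$ is strictly increasing in $i$.

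With these two facts, the equivalence follows.
\begin{itemize}
\item If $A>0$, then $\hat{c}_i$ is strictly increasing in $i$.
\item If $A=0$, then $\hat{c}_i$ is constant in $i$.
\item If $A<0$, then $\hat{c}_i$ is strictly decreasing in $i$.
\end{itemize}
In the last two cases $\hat{c}_i$ is not strictly increasing. Hence $\tau$ is order-preserving at $s$ if and only if $A(\tau,s)>0$. Because $1+\varepsilon>0$, this holds if and only if $1-\tau(1+\varepsilon(1-s))>0$. Because $1+\varepsilon(1-s)>0$ (as $s\le 1$), that condition is equivalent to $\tau<\frac{1}{1+\varepsilon(1-s)}$.

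There is no real obstacle here. The only points needing care are to use the strict monotonicity of $w_i$ and the domain restriction $1-s\tau>0$, and to handle the boundary case $A=0$, where $\hat{c}_i$ is constant, so that the inequality comes out strict.
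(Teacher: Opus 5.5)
Your proposal is correct and follows the paper's own argument. The paper reads the condition directly off the affine representation $\hat{c}_i = \frac{1-\tau(1+\varepsilon(1-s))}{1+\varepsilon}\, z_i + \tau\bar{z}$, together with strict monotonicity of $z_i$ in $i$. Your explicit treatment of the three sign cases (including the boundary case where the coefficient is zero) and of the domain restriction $1-s\tau>0$ spells out what the paper leaves implicit.
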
    Moreover, note that we can write $\hat{c}_i= a\l(\tau,s\r) z_i + b\l(\tau,s\r)$ for positive numbers $a\l(\tau,s\r), b\l(\tau,s\r)$ that depend on $s$ and $\tau$ but not on $i$.

\begin{rem}
\label{domain restriction remark 2}Because $\frac{1}{1+\varepsilon\l(1-s\r)} < \frac{1}{s}$, every order-preserving tax rate satisfies the domain restriction $s\tau <1$.  Likewise, every revenue efficient tax rate satisfies the domain restriction.
\end{rem}

\subsubsection{Expressing $\hat{c}_i$ in terms of $w_i$}
It is also useful to express $\hat{c}_i$ in terms of $w_i$ rather than $z_i$.  $w_i$, as opposed to $z_i$, is exogenous.

Using (\ref{optimal income s}) and (\ref{simplification of an expression s}), we have:
\begin{align}\label{hat c i in terms of w i}
\hat{c}_i= \l(1-s\tau\r)^\varepsilon \l[ \frac{1-\tau\l(1+\varepsilon\l(1-s\r)\r)}{1+\varepsilon} w^{1+\varepsilon}_i +\tau \int w_j^{1+\varepsilon}\dd j\r]. 
\end{align}

\subsubsection{Positivity of equivalent consumption}
\begin{prop}\label{prop: positivity}
 If $\tau$ is order-preserving at $s$, then $\hat{c}_i\l(\tau,s\r) > 0$, for all $i$.  
 \end{prop}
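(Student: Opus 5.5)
The plan is to use the linear representation $\hat{c}_i = a(\tau,s) z_i + b(\tau,s)$ from (\ref{simplification of an expression s}). Here $a(\tau,s) = \frac{1-\tau(1+\varepsilon(1-s))}{1+\varepsilon}$ and $b(\tau,s)=\tau\bar{z}$. By Proposition \ref{prop: order preserving condition}, order preservation at $s$ is equivalent to $\tau < \frac{1}{1+\varepsilon(1-s)}$, and this is exactly the condition $a(\tau,s)>0$. Since $\tau\in[0,1/s)$, we have $\bar{z} = (1-s\tau)^\varepsilon\int w_j^{1+\varepsilon}\dd j>0$, so $b(\tau,s)=\tau\bar{z}\geq 0$. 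We also have $z_i = w_i^{1+\varepsilon}(1-s\tau)^\varepsilon>0$ for every $i$, because $w_i>0$ and $1-s\tau>0$. Remark \ref{domain restriction remark 2} guarantees $1-s\tau>0$ for order-preserving rates.

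Combining these facts, each $\hat{c}_i$ is the sum of a strictly positive term $a z_i$ and a nonnegative term $\tau\bar{z}$. Hence $\hat{c}_i>0$ for all $i$. The case $\tau=0$ needs no separate treatment: there $a=1/(1+\varepsilon)>0$ and $\hat{c}_i = z_i/(1+\varepsilon)>0$.

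Equivalently, one could work directly from (\ref{hat c i in terms of w i}). The prefactor $(1-s\tau)^\varepsilon$ is positive, and the bracket is a positive multiple of $w_i^{1+\varepsilon}$ plus a nonnegative constant.

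There is no real obstacle. The only point needing care is to confirm that order preservation forces $1-s\tau>0$, so that the powers are well defined and positive. That is handled by Remark \ref{domain restriction remark 2}, which rests on the inequality $\frac{1}{1+\varepsilon(1-s)}<\frac1s$. If needed, I would verify this inequality directly: it rearranges to $s< 1+\varepsilon(1-s)$, which holds because $s\le 1$ and $\varepsilon(1-s)\ge 0$, with strictness coming from the case $s=1$ giving $1<1+0$ false, so I would instead note that when $s=1$ the bound reads $\tau<1=1/s$ directly.
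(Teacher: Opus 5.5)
Your proof is correct and is essentially the paper's argument. Order preservation makes the coefficient on $z_i$ in (\ref{simplification of an expression s}) positive, $s\tau<1$ gives $z_i>0$ (the paper phrases this through (\ref{hat c i in terms of w i}) and $w_i>0$), and the rebate term $\tau\bar{z}$ is nonnegative. Your final paragraph is muddled but reaches the right conclusion: the strict inequality $\frac{1}{1+\varepsilon(1-s)}<\frac{1}{s}$ in Remark \ref{domain restriction remark 2} is in fact an equality at $s=1$, but since $\frac{1}{1+\varepsilon(1-s)}\le\frac{1}{s}$ always holds and the order-preserving bound on $\tau$ is strict, $s\tau<1$ follows for every $s\in(0,1]$ without a separate case.
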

 \textit{Proof.}  Assume that $\tau$ is order-preserving.   Then, by Proposition \ref{prop: order preserving condition}, the coefficient on $z_i$ in (\ref{simplification of an expression s}), $\frac{1-\tau\l(1+\varepsilon\l(1-s\r)\r)}{1+\varepsilon}$, is positive.  Because $s\tau < 1$ (see Remarks \ref{domain remark} and \ref{domain restriction remark 2}), it follows from (\ref{hat c i in terms of w i}) and the assumption that $w_i > 0, \forall i$, that $\hat{c}_i > 0, \forall i$. $\square$

 \subsubsection{The case $\rho=1$}

When $\rho = 1$, the formula $u\l(\hat{c}\r)=\frac{\hat{c}^{1-\rho}}{1-\rho}$ is undefined.  However, the equivalent representation $u\l(\hat{c}\r) = \l(\hat{c}^{\,1-\rho} - 1\r)/\l(1-\rho\r)$, which differs from $u$ only by an additive constant, converges pointwise to $\log \hat{c}$ as $\rho \rightarrow 1$.  We therefore use $u\l(\hat{c}\r) = \log \hat{c}$ at $\rho = 1$, as is standard.  The derivatives
\begin{align}\label{derivative functional form}
u'\l(\hat{c}\r) = \hat{c}^{-\rho}, \qquad u''\l(\hat{c}\r) = -\rho\, \hat{c}^{-\rho-1}
\end{align}
are well-defined for all $\rho > 0$, including $\rho = 1$.  Applying the general definition of the Atkinson index (\ref{equality definition}) to $u\l(\hat{c}\r) = \log \hat{c}$ yields
\begin{align}\label{Atkinson log}
E = \frac{1}{\mu} \exp \l(\int_0^1 \log \hat{c}_i \dd i\r),
\end{align}
which retains the scale invariance we appeal to elsewhere.

All of our results extend to $\rho = 1$.  Many arguments do not depend on the specific functional form of $u$ beyond monotonicity and concavity.  Where the form does enter, it enters either through the derivatives $u'$ and $u''$, given by (\ref{derivative functional form}) for all $\rho > 0$, or through the Atkinson index, whose $\rho = 1$ form is given by (\ref{Atkinson log}).  In neither case does $\rho = 1$ pose a difficulty.

\subsection{Optimal tax characterization}

\subsubsection{Revenue efficiency}

\begin{prop}\label{revenue efficiency prop}
The revenue function $R\l(\tau\r) = \tau \bar{z}(\tau)$ is single-peaked on $[0, \frac{1}{s})$, with unique maximum at $\tau^{\mathrm{max}} = \frac{1}{s\l(1+\varepsilon\r)}$.
\end{prop}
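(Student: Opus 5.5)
The plan is to compute the revenue function explicitly and reduce the question to a one-variable calculus problem. By (\ref{zi expression}), $z_i = w_i^{\varepsilon+1}(1-s\tau)^\varepsilon$, so that $\bar{z}(\tau) = K(1-s\tau)^\varepsilon$ with $K := \int_0^1 w_i^{1+\varepsilon}\dd i > 0$. Hence
\begin{align*}
R(\tau) = K\,\tau\,(1-s\tau)^{\varepsilon}, \qquad \tau \in \l[0,\tfrac{1}{s}\r).
\end{align*}
The constant $K$ plays no role in the location of the maximum or in the shape of $R$, so everything reduces to studying $f(\tau)=\tau(1-s\tau)^\varepsilon$.

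Next I would differentiate, either directly or via (\ref{tax derivatives}) using $\pdv{\bar{z}}{\tau} = -\frac{s\varepsilon\bar{z}}{1-s\tau}$. This gives
\begin{align*}
R'(\tau) = \bar{z}\l[1 - \frac{s\tau\varepsilon}{1-s\tau}\r] = K(1-s\tau)^{\varepsilon-1}\l[1 - s\tau(1+\varepsilon)\r].
\end{align*}
On $[0,\frac1s)$ the factor $K(1-s\tau)^{\varepsilon-1}$ is strictly positive. The sign of $R'$ is therefore the sign of the affine, strictly decreasing function $1-s\tau(1+\varepsilon)$. That function is positive for $\tau<\frac{1}{s(1+\varepsilon)}$ and negative for $\tau>\frac{1}{s(1+\varepsilon)}$. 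So $R$ is strictly increasing and then strictly decreasing, which is single-peakedness, and the unique maximizer is $\tau^{\max}=\frac{1}{s(1+\varepsilon)}$. This point lies inside $[0,\frac1s)$ because $\varepsilon>0$.

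As a by-product, the same computation shows that $\tau\le\tau^{\max}$ is equivalent to $\frac{s\tau\varepsilon}{1-s\tau}\le 1$. This is the characterization of revenue efficiency used in Section \ref{sec: comparative statics of income}, and I would record it here.

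There is no real obstacle. The only point needing care is that $(1-s\tau)^{\varepsilon-1}$ may blow up as $\tau\to 1/s$ when $\varepsilon<1$. This is harmless: positivity holds throughout the open domain, and it is only the sign of $R'$ that is used.
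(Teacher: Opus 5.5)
Your proposal is correct and follows essentially the same route as the paper: you factor $R'(\tau)$ as a strictly positive term times the strictly decreasing affine function $1-s\tau(1+\varepsilon)$, then read off single-peakedness and the unique maximizer $\tau^{\mathrm{max}}=\frac{1}{s(1+\varepsilon)}$ from its sign change. The only difference is cosmetic: you write $\bar{z}=K(1-s\tau)^{\varepsilon}$ explicitly, whereas the paper keeps $\frac{\bar{z}}{1-s\tau}$ as the positive prefactor.
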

\textit{Proof.}  Differentiating revenue:
\begin{align*}
R'\l(\tau\r) = \bar{z} + \tau \pdv{\bar{z}}{\tau} = \bar{z} - \frac{\varepsilon s \tau}{1-s\tau}\bar{z} = \frac{\bar{z}}{1-s\tau}\l[1 - s\tau\l(1+\varepsilon\r)\r].
\end{align*}
Since $\bar{z} > 0$ and $1-s\tau > 0$ on the domain $\l[0, \frac{1}{s}\r)$, the sign of $R'\l(\tau\r)$ is determined by
\begin{align*}
\phi\l(\tau\r) := 1 - s\tau\l(1+\varepsilon\r).
\end{align*}
We have $\phi'\l(\tau\r) = -s\l(1+\varepsilon\r) < 0$, so $\phi$ is strictly decreasing. Moreover, $\phi(0) = 1 > 0$ and $\phi\l(\frac{1}{s}\r) = -\varepsilon < 0$. By continuity and the fact that $\phi$ is strictly decreasing, there exists a unique $\tau^{\mathrm{max}} \in \l(0, \frac{1}{s}\r)$ such that $\phi\l(\tau^{\mathrm{max}}\r) = 0$. Solving yields $\tau^{\mathrm{max}} = \frac{1}{s\l(1+\varepsilon\r)}$.

Again because $\phi$ is strictly decreasing, $\phi\l(\tau\r) > 0$ for $\tau < \tau^{\mathrm{max}}$ and $\phi\l(\tau\r) < 0$ for $\tau > \tau^{max}$. Hence $R'\l(\tau\r) > 0$ for $\tau < \tau^{\mathrm{max}}$ and $R'\l(\tau\r) < 0$ for $\tau > \tau^{\mathrm{max}}$, establishing single-peakedness. $\square$

\begin{prop}\label{revenue efficiency Proposition} Any $s$-optimal tax rate must be revenue efficient. 
\end{prop}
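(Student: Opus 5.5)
The plan is a direct argument that any tax rate above $\tau^{\max}=\frac{1}{s(1+\varepsilon)}$ is strictly dominated in the Pareto sense by $\tau^{\max}$ itself. The dominance comes from two forces working together. Lowering $\tau$ toward $\tau^{\max}$ raises the rebate, by Proposition \ref{revenue efficiency prop}. It also raises each agent's private payoff $z_i(1-\tau)-v_i(z_i)$.

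First, fix $s$ and any $\tau\in(\tau^{\max},\frac{1}{s})$. Write $\hat{c}_i(\tau)=\pi_i(\tau)+R(\tau)$, where $\pi_i(\tau)=z_i(1-\tau)-v_i(z_i)$ and $z_i=z_i(s\tau)$.

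Second, differentiate $\pi_i$ in $\tau$. Using the first-order condition (\ref{first order condition internality}), $v_i'(z_i)=1-s\tau$, we obtain
\[
\pi_i'(\tau)=-z_i+(1-\tau-(1-s\tau))\pdv{z_i}{\tau}=-z_i-\tau(1-s)\pdv{z_i}{\tau}.
\]
Substituting $\pdv{z_i}{\tau}=-\frac{\varepsilon s}{1-s\tau}z_i$ gives
\[
\pi_i'(\tau)=-z_i\l[1-\frac{\varepsilon s\tau(1-s)}{1-s\tau}\r].
\]
This is the step I expect to be the main obstacle. The internality term has the wrong sign, so $\pi_i$ need not be decreasing in $\tau$.

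To deal with this, I will combine $\pi_i$ with the rebate and work with $\hat{c}_i$ directly. Using $R'(\tau)=\frac{\bar z}{1-s\tau}\l[1-s\tau(1+\varepsilon)\r]$, the full derivative is
\[
\pdv{\hat c_i}{\tau}=\frac{1}{1-s\tau}\l[-z_i\l(1-s\tau-\varepsilon s\tau(1-s)\r)+\bar z\l(1-s\tau(1+\varepsilon)\r)\r].
\]
Above $\tau^{\max}$ the $\bar z$ term is negative. The coefficient on $-z_i$ is the one that may be negative. If it is negative, then, by Proposition \ref{prop: order preserving condition}, we have $\tau\ge\frac{1}{1+\varepsilon(1-s)}$, meaning $\tau$ is not order-preserving, and this case would be excluded by Lemma \ref{increasing normal lemma}. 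That would make the argument circular, however, if the proof of Lemma \ref{increasing normal lemma} itself relies on revenue efficiency.

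I will therefore avoid derivatives in this case and compare levels using (\ref{hat c i in terms of w i}). For each $i$, $\hat{c}_i$ is linear in the weights $w_i^{1+\varepsilon}$ and $\int w_j^{1+\varepsilon}\,\dd j$, with coefficients $A(\tau)=(1-s\tau)^\varepsilon\frac{1-\tau(1+\varepsilon(1-s))}{1+\varepsilon}$ and $B(\tau)=(1-s\tau)^\varepsilon\tau$. We have $B(\tau)=R(\tau)/\int w^{1+\varepsilon}$, so $B$ is decreasing above $\tau^{\max}$.

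It then suffices to show that $A(\tau)+B(\tau)\,\kappa$ is decreasing in $\tau$ on $(\tau^{\max},1/s)$ for every $\kappa\in(0,\infty)$ arising as $\int w^{1+\varepsilon}/w_i^{1+\varepsilon}$. This holds if both $A$ and $B$ are decreasing there. A direct computation gives
\[
A'\propto-(1+\varepsilon(1-s))(1-s\tau)-\varepsilon s\l(1-\tau(1+\varepsilon(1-s))\r),
\]
and this should be negative whenever $1-\tau(1+\varepsilon(1-s))\ge0$. When that bracket is negative, $A<0$, and I would instead exploit $A'+\kappa B'<0$ using $\kappa\ge\int w^{1+\varepsilon}/w_1^{1+\varepsilon}$. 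This is the delicate verification.

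Once every $\hat c_i$ is shown to be strictly decreasing on $(\tau^{\max},1/s)$, moving $\tau$ down to $\tau^{\max}$ gives a strict Pareto improvement, so $W(\tau^{\max},s)>W(\tau,s)$. Hence any $s$-optimal $\tau$ satisfies $\tau\le\tau^{\max}$, which is revenue efficiency.
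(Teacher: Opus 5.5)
\textbf{There is a genuine gap.} Your derivative formulas are correct. You are also right that appealing to Lemma \ref{increasing normal lemma} would be circular, because its proof uses this proposition. The failure is the claim that every $\hat c_i$ is strictly decreasing on $(\tau^{\max},1/s)$, and behind it the choice of $\tau^{\max}$ itself as the dominating rate. Both fail at low salience. With $k=1+\varepsilon(1-s)$, you have $A'(\tau)+\kappa B'(\tau)=(1-s\tau)^{\varepsilon-1}G(\tau)$, where
\[
G(\tau)=\frac{k s\tau(1+\varepsilon)-(s\varepsilon+k)}{1+\varepsilon}+\kappa\left[1-s\tau(1+\varepsilon)\right].
\]
$G$ is linear in $\tau$. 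It equals $-s\varepsilon/(1+\varepsilon)<0$ at $\tau^{\max}$, but at $\tau=1/s$ it equals $\varepsilon\left[(1-s)-\kappa\right]$. For the top type, $\kappa=\bar z/z_1$, so this endpoint value is positive exactly when $s<s^*=1-\bar z/z_1$. In that case $\hat c_1$ is strictly increasing in $\tau$ on an interval just below $1/s$.

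The level comparison fails as well, not only monotonicity. Take $\varepsilon=1$, $s=0.2$ and $w_i=\sqrt{1+3i^2}$, so $\int_0^1 w_j^2\,\dd j=2$, $w_1^2=4$ and $\tau^{\max}=2.5$. Then (\ref{hat c i in terms of w i}) gives $\hat c_1(\tau)=2(1-0.2\tau)(1-0.8\tau)$. Hence $\hat c_1(2.5)=-1<-0.52=\hat c_1(4.5)$, so $\tau^{\max}$ does not Pareto dominate $\tau=4.5$. No verification of $A'+\kappa B'<0$ can close your argument. Your route does work when $s\ge s^*$, since $G$ is then negative at $\tau^{\max}$ and nonpositive at $1/s$, hence negative throughout the interval.

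\textbf{The missing idea.} Compare $\tau_1>\tau^{\max}$ not with $\tau^{\max}$ but with the unique $\tau_0<\tau^{\max}$ that raises the same revenue; this is the paper's argument. The rebates then coincide, so $B(\tau_0)=B(\tau_1)>0$. Moreover
\[
A(\tau)=B(\tau)\,\frac{1-\tau(1+\varepsilon(1-s))}{\tau(1+\varepsilon)},
\]
and the second factor is strictly decreasing in $\tau>0$. So $A(\tau_0)>A(\tau_1)$, and hence $\hat c_i(\tau_0)>\hat c_i(\tau_1)$ for every $i$. In the example this gives $\tau_0=0.5$, with $\hat c_1(0.5)=1.08$. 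This comparison needs no derivative signs and no condition on the wage distribution, which is why it covers every $s\in(0,1]$.
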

\textit{Proof.}  Utility can be split into non-lumpsum and lumpsum parts, $\hat{c}_i^{\mathrm{NL}}$ and $\hat{c}^L$, respectively:
\begin{align}\label{lumpsum and nonlumpsum parts}
\hat{c}_i\l(\tau,s\r) =  \underbrace{z_i \l(1-\tau\r) - v_i\l(z_i\l(\tau,s\r)\r)}_{\hat{c}_i^{\mathrm{NL}}} + \underbrace{\tau \cdot \bar{z}\l(\tau,s\r)}_{\hat{c}^L}.
\end{align}
Using a derivation similar to that of (\ref{hat c i in terms of w i}), we have:
\begin{align}\label{cNL expression}
\begin{split}
\hat{c}^{\mathrm{NL}}_i\l(\tau,s\r)=&\; \l(1-s\tau\r)^\varepsilon  \frac{1-\tau\l(1+\varepsilon\l(1-s\r)\r)}{1+\varepsilon} w^{1+\varepsilon}_i\\
=&\; \l(1-s\tau\r)^\varepsilon \tau \underbrace{\frac{1-\tau\l(1+\varepsilon\l(1-s\r)\r)}{\tau \l(1+\varepsilon\r)}}_{\psi\l(\tau\r)} w^{1+\varepsilon}_i
\end{split}
 \end{align}
 Now consider any $\tau_1$ that is greater than the Laffer rate $\tau^{\mathrm{max}}$.  It follows from the fact that a zero tax rate raises zero revenue, and the continuity and single-peakedness of the revenue function $R\l(\tau\r)$ (see Proposition \ref{revenue efficiency prop} for single-peakedness) that there exists a unique tax policy $\tau_0 < \tau^{\mathrm{max}}$ such that $R\l(\tau_0\r)=R\l(\tau_1\r)$.  It is immediate that $\hat{c}^{\mathrm{L}}\l(\tau_0,s\r) = \hat{c}^{\mathrm{L}}\l(\tau_1,s\r)$. 
 
Next observe that, by (\ref{optimal income s}),   
\begin{align*}
R\l(\tau\r) = \tau \bar{z} = \l(1-s\tau\r)^\varepsilon \tau \int w_i^{1+\varepsilon} \dd i.
\end{align*}
Because $R\l(\tau_0\r) = R\l(\tau_1\r)$, it follows that $\l(1-s\tau_0\r)^\varepsilon \tau_0= \l(1-s\tau_1\r)^\varepsilon \tau_1$. Given that $\psi(\tau) = \frac{1}{\tau(1+\varepsilon)} - \frac{1+\varepsilon(1-s)}{1+\varepsilon}$ is strictly decreasing in $\tau$, it follows from (\ref{cNL expression}) that $\hat{c}^{\mathrm{NL}}_i\l(\tau_0,s\r) > \hat{c}^{\mathrm{NL}}_i\l(\tau_1,s\r)$.  Since it is also the case that $\hat{c}^{\mathrm{L}}\l(\tau_0,s\r) = \hat{c}^{\mathrm{L}}\l(\tau_1,s\r)$, it follows that $\hat{c}_i\l(\tau_0,s\r) > \hat{c}_i\l(\tau_1,s\r)$, for all $i$.  So $\tau_0$ Pareto dominates $\tau_1$ at $s$, hence $\tau_1$ cannot be $s$-optimal. $\square$

\begin{prop}\label{prop: revenue maximizing not optimal} The revenue maximizing tax rate $\tau^{\mathrm{max}} = \frac{1}{s\l(1+\varepsilon\r)}$ cannot be $s$-optimal.
\end{prop}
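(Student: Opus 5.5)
We will show directly that $\partial W/\partial \tau < 0$ at $\tau^{\mathrm{max}} = \frac{1}{s(1+\varepsilon)}$. Then lowering $\tau$ slightly raises welfare, so $\tau^{\mathrm{max}}$ cannot maximize $W(\cdot,s)$. The natural tool is the first-order condition (\ref{social welfare first order condition}). Its three terms are the redistribution term, the internality term and the fiscal externality term, and we will evaluate them at the point where $\frac{s\tau\varepsilon}{1-s\tau} = 1$.

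That equality is immediate from the definition: at $\tau^{\mathrm{max}}$ we have $1 - s\tau = \frac{\varepsilon}{1+\varepsilon}$ and $s\tau = \frac{1}{1+\varepsilon}$. This also yields the equivalence used earlier, that revenue efficiency means $\frac{s\tau\varepsilon}{1-s\tau}\le 1$, since the left side is increasing in $\tau$. Substituting, the derivative becomes
\begin{align*}
\pdv{W}{\tau} = \bar{g}\bar{z} - \int g_i z_i \dd i + (1-s)\int g_i z_i \dd i - \bar{g}\bar{z} = -s\int g_i z_i \dd i .
\end{align*}
The redistribution term's $\bar g\bar z$ cancels exactly against the fiscal externality. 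What remains is strictly negative, because $g_i = u'(\hat c_i) > 0$, $z_i > 0$ and $s > 0$.

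One point needs care: the first-order-condition formula should be valid at $\tau^{\mathrm{max}}$ without assuming that $\tau^{\mathrm{max}}$ is order-preserving. The derivation of (\ref{social welfare first order condition}) only differentiates $\hat c_i$ in $\tau$ using the envelope-type identity from (\ref{first order condition internality}), so it holds at any interior $\tau$ with $s\tau<1$, which is the case here. To be safe, we will rederive $\partial \hat c_i/\partial\tau$ directly:
\begin{align*}
\pdv{\hat c_i}{\tau} = -z_i + \bar z + \big[(1-\tau) - v_i'(z_i)\big]\pdv{z_i}{\tau} + \tau\pdv{\bar z}{\tau}.
\end{align*}
Here $v_i' = 1-s\tau$ and $\pdv{z_i}{\tau} = -\frac{s\varepsilon z_i}{1-s\tau}$. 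Integrating against $g_i$ reproduces the three terms.

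There is one potential obstacle: welfare must be well defined, which requires $\hat c_i>0$ whenever $\rho\ge 1$. If $\tau^{\mathrm{max}}$ is not order-preserving, $\hat c_i$ could in principle be nonpositive. We expect to sidestep this. If $\tau^{\mathrm{max}}$ is not order-preserving, then Lemma \ref{increasing normal lemma} already rules it out as $s$-optimal. Otherwise Proposition \ref{prop: positivity} gives positivity, and the sign computation above applies. Since $\tau^{\mathrm{max}}$ is interior to $[0,1/s)$, a strictly negative derivative contradicts optimality.
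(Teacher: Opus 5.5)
Your proof is correct and is essentially the paper's argument: the cancellation of $\bar g\bar z$ you find is exactly the paper's observation that the lump-sum rebate $\tau\bar z$ has zero derivative at the Laffer peak, which leaves $\partial \hat c_i/\partial\tau = -s z_i$ for each agent. The only difference is that the paper stops at this per-agent derivative and concludes that lowering $\tau$ is a Pareto improvement, so it needs neither welfare weights nor your case split via Lemma~\ref{increasing normal lemma}; you instead integrate the same expression against $g_i$ to get $-s\int g_i z_i\,\mathrm{d}i<0$.
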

\textit{Proof.}  Define $\hat{c}^{\mathrm{NL}}_i$ and $\hat{c}^{\mathrm{L}}$ as in (\ref{lumpsum and nonlumpsum parts}). At $\tau^{\mathrm{max}}$, $\pdv{\hat{c}^{\mathrm{L}}}{\tau}=0$.  Note also that, at $\tau=\tau^{\mathrm{max}}$, $\frac{s \tau \varepsilon}{1-s\tau}=1$.  Hence,\footnote{We assume that $w_i > 0, \forall i$ (Section \ref{sec: taxes and salience}), and it follows from (\ref{optimal income s}) that $z_i > 0, \forall i$.} 
\begin{align*}
\pdv{\hat{c}_i^{\mathrm{NL}}}{\tau}
=
z_i\l[-1 + \l(1-s\r)\tau\frac{s\varepsilon}{1-s\tau}\r] = -sz_i  < 0.
\end{align*}
Hence, at $\tau^{\mathrm{max}}$, $\pdv{\hat{c}_i}{\tau}<0, \forall i$, and reducing the tax rate slightly leads to a Pareto improvement, so that $\tau^{\mathrm{max}}$ cannot be optimal. $\square$

\subsubsection{Proof of Lemma \ref{increasing normal lemma}: $s$-optimal tax rates are order-preserving}\label{app: lemma 1}
Here we prove that any $s$-optimal tax rate is order-preserving; uniqueness of the $s$-optimal tax rate is proven in Section \ref{uniqueness appendix}. 

We have:
 \begin{align}\label{expression for derivative}
 \begin{split}
\pdv{W}{\tau}=&\int g_i\l(\bar{z}-z_i\r) \dd i  +\tau\l(s-1\r)  \int g_i \pdv{z_i}{\tau} \dd i + \bar{g}\tau\pdv{\bar{z}}{\tau}\\=&\int g_i\l(\bar{z}-z_i\r) \dd i  +\tau\l(1-s\r)  \frac{s\varepsilon}{1-s\tau} \int g_i z_i \dd i - \bar{g}\tau\frac{s\varepsilon\bar{z}}{1-s\tau}\\
 =& \underbrace{\l[1-\frac{ s \tau\varepsilon}{1-s\tau }\r]}_A \int g_i\l(\bar{z}-z_i\r) \dd i \underbrace{- \frac{ s^2 \tau\varepsilon}{1-s\tau }\int g_i z_i \dd i}_B,
 \end{split}
 \end{align}
 where the first equality appeals to (\ref{first order condition internality}). Because $w_i$ is increasing in $i$ and $\hat{c}_i$ is an affine function of $w^{1+\varepsilon}_i$ for fixed $s$ and $\tau$ (see eq. \ref{hat c i in terms of w i}), it follows that, at any salience level and marginal tax rate, $\hat{c}_i$ is either increasing, decreasing, or constant in $i$; in other words, $i \mapsto \hat{c}_i$ cannot be non-monotonic. Next note that, by (\ref{optimal income s}), $i \mapsto z_i$ is increasing in $i$ and that $i \mapsto g_i$ is decreasing/constant/increasing in $i$ if and only if $i \mapsto \hat{c}_i$ is increasing/constant/decreasing in $i$.  It follows from the Chebyshev integral inequality that
\begin{align}\label{equivalences}
\int g_i \l(\bar{z}-z_i\r) \dd i \l\{\begin{array}{c} > 0,\\
=0,\\
< 0,
\end{array}\r\} \Leftrightarrow i \mapsto \hat{c}_i  \l\{\begin{array}{l}\textup{is increasing},\\
\textup{is constant},\\
\textup{is decreasing},
\end{array}\r\}. 
\end{align}
Note that, if $\tau=0$, then the last line of (\ref{expression for derivative}) reduces to $\int g_i \l(\bar{z}-z_i\r) \dd i$, which is positive because a zero tax rate is order-preserving.  Hence, welfare could be increased by increasing the tax rate slightly, so that a zero tax rate cannot be optimal.  It follows that, at any $s$-optimal tax rate, the first order condition $\pdv{W}{\tau}=0$ must hold.  Moreover, at any $s$-optimal tax rate $B< 0$. 

By Proposition \ref{revenue efficiency prop}, $A \geq 0$ if and only if $\tau$ is revenue efficient at $s$, and, by Proposition \ref{revenue efficiency Proposition}, any optimal tax rate must be revenue efficient.  So, if an $s$-optimal tax rate were not order-preserving, then by (\ref{equivalences}), $A \int g_i \l(\bar{z}-z_i\r) \dd i \leq 0$, and so the last line of (\ref{expression for derivative}) would be negative, which implies that the first-order condition is not satisfied, contradicting optimality. $\square$

\subsubsection{Uniqueness of the optimal tax rate}\label{uniqueness appendix}

\begin{lem}\label{increasing second derivative lemma}
At any revenue efficient tax rate $\tau$,    $\frac{\partial^2 \hat{c}_i}{\partial \tau^2}$ is increasing in $i$.
\end{lem}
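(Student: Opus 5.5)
We work from the affine representation of equivalent consumption, $\hat{c}_i = a(\tau,s)\,z_i + \tau\bar{z}$ with $a(\tau,s)=\frac{1-\tau(1+\varepsilon(1-s))}{1+\varepsilon}$ (equation (\ref{simplification of an expression s})), or equivalently the form in $w_i$ in (\ref{hat c i in terms of w i}). Write $\hat{c}_i = \hat{c}^{\mathrm{NL}}_i + \hat{c}^{\mathrm{L}}$ as in (\ref{lumpsum and nonlumpsum parts}). The lump-sum part $\hat{c}^{\mathrm{L}}=\tau\bar{z}$ does not depend on $i$, so its second derivative in $\tau$ is common to all agents and cannot affect monotonicity in $i$. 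It therefore suffices to show that $\partial^2 \hat{c}^{\mathrm{NL}}_i/\partial\tau^2$ is increasing in $i$.

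By (\ref{cNL expression}), $\hat{c}^{\mathrm{NL}}_i = \phi(\tau)\, w_i^{1+\varepsilon}$, where
\begin{align*}
\phi(\tau) = \frac{(1-s\tau)^\varepsilon\,\bigl[1-\tau(1+\varepsilon(1-s))\bigr]}{1+\varepsilon}.
\end{align*}
Hence $\partial^2 \hat{c}^{\mathrm{NL}}_i/\partial\tau^2 = \phi''(\tau)\, w_i^{1+\varepsilon}$. Since $w_i^{1+\varepsilon}$ is strictly increasing in $i$, the lemma reduces to the single scalar inequality $\phi''(\tau) > 0$ at every revenue efficient $\tau$.

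To compute $\phi''$, set $k=1+\varepsilon(1-s)$ and $x=1-s\tau$, so that $\phi \propto x^\varepsilon(1-k\tau)$. Differentiating with $x'=-s$ gives
\begin{align*}
(1+\varepsilon)\phi'' = \varepsilon(\varepsilon-1)s^2 x^{\varepsilon-2}(1-k\tau) + 2\varepsilon s k\, x^{\varepsilon-1}.
\end{align*}
Factoring out $\varepsilon s\, x^{\varepsilon-2}>0$, the sign of $\phi''$ is the sign of
\begin{align*}
Q := (\varepsilon-1)s(1-k\tau) + 2k(1-s\tau).
\end{align*}
The natural way to sign $Q$ is to rewrite it as $Q = 2k(1-s\tau) - s(1-k\tau) + \varepsilon s(1-k\tau)$. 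Since $k\geq 1$ and $s\le 1$, we have $k(1-s\tau) - s(1-k\tau) = k - s \ge 0$. Therefore
\begin{align*}
Q = (k-s) + k(1-s\tau) + \varepsilon s(1-k\tau).
\end{align*}
The first two terms are nonnegative and positive respectively, using the domain restriction $s\tau<1$, which revenue efficiency guarantees by Remark \ref{domain restriction remark 2}.

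The main obstacle is the last term when $1-k\tau<0$, i.e.\ when $\tau$ is revenue efficient but not order-preserving. In that case revenue efficiency must be used quantitatively, namely $s\tau(1+\varepsilon)\le 1$ (Proposition \ref{revenue efficiency prop}). I would bound $\varepsilon s(k\tau-1)$ above by $k(1-s\tau)+k-s$. Using $\varepsilon s\tau\le 1-s\tau$, we get $\varepsilon s k\tau \le k(1-s\tau)$. This yields $Q \ge (k-s) + \varepsilon s > 0$, closing this case. If $\varepsilon<1$ the first term of $(1+\varepsilon)\phi''$ could also be examined directly, but the regrouped form of $Q$ handles all $\varepsilon$ uniformly. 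Thus $\phi''>0$, and so $\partial^2\hat{c}_i/\partial\tau^2 = \phi''(\tau)w_i^{1+\varepsilon} + \partial^2(\tau\bar{z})/\partial\tau^2$ is strictly increasing in $i$.
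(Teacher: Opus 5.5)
Your proof is correct and follows the paper's route. You isolate the common lump-sum term, reduce the claim to convexity of the coefficient $(1-s\tau)^\varepsilon\frac{1-\tau(1+\varepsilon(1-s))}{1+\varepsilon}$ on $w_i^{1+\varepsilon}$, and sign the same bracket $Q$ using revenue efficiency. The only difference is the last step: you bound $Q\ge (k-s)+\varepsilon s>0$ directly from $\varepsilon s\tau\le 1-s\tau$, whereas the paper rewrites $Q>0$ as $\tau<\frac{2-s}{s(1+(1-s)\varepsilon)}$ and checks that this threshold exceeds the Laffer rate $\frac{1}{s(1+\varepsilon)}$. Your bound holds for every revenue efficient $\tau$, so the case split on the sign of $1-k\tau$ is unnecessary.
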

\textit{Proof.} It follows from (\ref{hat c i in terms of w i}) that 
\begin{align*}
\hat{c}_i 
= A\l(\tau\r) w_i^{\varepsilon+1} + \l(1-s\tau\r)^{\varepsilon}\tau \int w_j^{1+\varepsilon} \dd j,
\end{align*}
where 
\begin{align*}
A\l(\tau\r) = B\l(\tau\r)C\l(\tau\r), \qquad B\l(\tau\r) = \l(1-s\tau\r)^{\varepsilon}, \qquad C\l(\tau\r) = \frac{1-\tau\l(1+\varepsilon\l(1-s\r)\r)}{1+\varepsilon}.
\end{align*}
To show that $\frac{\partial^2 \hat{c}_i}{\partial \tau^2}$ is increasing in $i$ it is sufficient to show that $A''\l(\tau\r) >0$ because  $w_i^{\varepsilon+1}$ is increasing in $i$ and does not depend on $\tau$ while $\l(1-s\tau\r)^{\varepsilon}\tau \int w_j^{1+\varepsilon} \dd j$ does not depend on $i$. 

We have
\begin{align*}
A'\l(\tau\r) = B'\l(\tau\r)C\l(\tau\r) + B\l(\tau\r)C'\l(\tau\r).
\end{align*}
Using the fact that $C''\l(\tau\r) =0$, we have
\begin{align*}
A''\l(\tau\r) &= B''\l(\tau\r)C\l(\tau\r) + 2B'\l(\tau\r)C'\l(\tau\r).
\end{align*} 
We have
\begin{align*}
B'(\tau) = -s\varepsilon\l(1-s\tau\r)^{\varepsilon-1}, \qquad B''\l(\tau\r) = s^2\varepsilon\l(\varepsilon-1\r)\l(1-s\tau\r)^{\varepsilon-2}, \qquad C'\l(\tau\r) = -\frac{1 + (1-s)\varepsilon}{1+\varepsilon}.
\end{align*}
It follows that
\begin{align*}
A''\l(\tau\r) &= s^2\varepsilon\l(\varepsilon-1\r)\l(1-s\tau\r)^{\varepsilon-2} \cdot \frac{1-\tau\l(1+\varepsilon\l(1-s\r)\r)}{1+\varepsilon} + 2\l(-s\varepsilon\r)\l(1-s\tau\r)^{\varepsilon-1} \cdot \l(-\frac{1 + (1-s)\varepsilon}{1+\varepsilon}\r) \\
&= \frac{s\varepsilon\l(1-s\tau\r)^{\varepsilon-2}}{1+\varepsilon} \l[ s\l(\varepsilon-1\r)\l(1-\tau\l(1+\varepsilon\l(1-s\r)\r)\r) + 2\l(1-s\tau\r)\l(1 + \l(1-s\r)\varepsilon\r) \r].
\end{align*}
Since $\frac{s\varepsilon\l(1-s\tau\r)^{\varepsilon-2}}{1+\varepsilon} > 0$,\footnote{See Remark \ref{domain remark}.} we have $A''(\tau) > 0$ if and only if
\begin{align*}
 s\l(\varepsilon-1\r)\l(1-\tau\l(1+\varepsilon\l(1-s\r)\r)\r) + 2\l(1-s\tau\r)\l(1 + \l(1-s\r)\varepsilon\r)> 0.
\end{align*}
Expanding and collecting terms that multiply $\tau$:
\begin{align*}
s\l(\varepsilon-1\r) + 2\l(1 + \l(1-s\r)\varepsilon\r) &> \tau \l[ s\l(\varepsilon-1\r)\l(1 + \l(1-s\r)\varepsilon\r) + 2s\l(1 + \l(1-s\r)\varepsilon\r) \r].
\end{align*}
The left-hand side simplifies to $\l(2-s\r)\l(1+\varepsilon\r)$, and on the right-hand side we can factor out $s\l(1 + \l(1-s\r)\varepsilon\r)$:
\begin{align*}
\l(2-s\r)\l(1+\varepsilon\r) &> \tau \cdot s\l(1 + \l(1-s\r)\varepsilon\r)\l(\varepsilon - 1 + 2\r) =\tau \cdot s\l(1+\varepsilon\r)\l(1 + \l(1-s\r)\varepsilon\r).
\end{align*}
Dividing both sides by $s\l(1+\varepsilon\r)\l(1 + \l(1-s\r)\varepsilon\r) > 0$, it follows that $A''\l(\tau\r) > 0$ if and only if
\begin{align*}
\tau < \frac{2-s}{s(1 + (1-s)\varepsilon)}.
\end{align*}
Revenue efficiency is characterized by
\begin{align*}
\tau \leq \frac{1}{s\l(1+\varepsilon\r)}. 
\end{align*}
So to establish the result, it is sufficient to show that
\begin{align*}
\frac{1}{s\l(1+\varepsilon\r)} < \frac{2-s}{s(1 + \l(1-s\r)\varepsilon)}.
\end{align*} 
Since $1 < 2-s$ and $1+ \varepsilon > 1+\l(1-s\r) \varepsilon$, the numerator is larger and the denominator is smaller in the right hand side fraction.  $\square$

\begin{lem}\label{strictly concave W}
 At any tax policy $\tau$ that is admissible at $s$,  $\pdv[2]{}{\tau}W\l(\tau,s\r) < 0$.
 \end{lem}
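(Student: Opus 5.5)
We differentiate $W(\tau,s)=\int u(\hat{c}_i(\tau,s))\,\dd i$ twice in $\tau$ and split the result into two pieces:
\begin{align*}
\pdv[2]{W}{\tau} = \int u''(\hat{c}_i)\l(\pdv{\hat{c}_i}{\tau}\r)^2 \dd i + \int g_i \pdv[2]{\hat{c}_i}{\tau}\dd i .
\end{align*}
The first term is nonpositive because $u$ is concave. It is in fact strictly negative unless $\pdv{\hat{c}_i}{\tau}=0$ for almost all $i$. Since $\pdv{\hat{c}_i}{\tau}$ is affine in $w_i^{1+\varepsilon}$ with nonconstant slope generically, this degenerate case can be handled separately or ruled out. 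The work therefore lies in controlling the second term, which has no definite sign a priori: $\pdv[2]{\hat{c}_i}{\tau}$ may be positive for some agents.

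For the second term we use the structure established in Lemma \ref{increasing second derivative lemma}. Write $\hat{c}_i = A(\tau)w_i^{1+\varepsilon} + R(\tau)$, where $R(\tau)=(1-s\tau)^\varepsilon\tau\int w_j^{1+\varepsilon}\dd j$ is revenue. Then
\begin{align*}
\pdv[2]{\hat{c}_i}{\tau}=A''(\tau)w_i^{1+\varepsilon}+R''(\tau).
\end{align*}
By that lemma this is increasing in $i$ at revenue-efficient $\tau$. Order preservation (part of admissibility) gives that $g_i=u'(\hat{c}_i)$ is decreasing in $i$. The Chebyshev integral inequality then yields
\begin{align*}
\int g_i \pdv[2]{\hat{c}_i}{\tau}\dd i \le \bar{g}\int \pdv[2]{\hat{c}_i}{\tau}\dd i = \bar{g}\,\pdv[2]{\mu}{\tau}.
\end{align*}
The problem therefore reduces to signing $\mu_{\tau\tau}$. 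We will show $\mu_{\tau\tau}<0$, which is inequality (\ref{mutautau}) announced in the text.

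Using $\mu=\int[z_i-v_i(z_i)]\dd i$ and $1-v_i'(z_i)=s\tau$, we get
\begin{align*}
\mu_\tau = s\tau\int \pdv{z_i}{\tau}\dd i = -\frac{s^2\tau\varepsilon}{1-s\tau}\bar{z},
\end{align*}
in agreement with the proof sketch of Proposition \ref{prop: partials}. Substituting $\bar{z}=(1-s\tau)^\varepsilon\int w^{1+\varepsilon}$ gives
\begin{align*}
\mu_\tau = -s^2\varepsilon\,\tau(1-s\tau)^{\varepsilon-1}\int w^{1+\varepsilon},
\end{align*}
and differentiating once more,
\begin{align*}
\mu_{\tau\tau} = -s^2\varepsilon(1-s\tau)^{\varepsilon-2}\bigl[1-s\tau\varepsilon\bigr]\int w^{1+\varepsilon}.
\end{align*}
Revenue efficiency gives $\frac{s\tau\varepsilon}{1-s\tau}\le 1$, hence $s\tau\varepsilon<1$, so the bracket is positive and $\mu_{\tau\tau}<0$. (The same factor $1-s\tau\varepsilon$ appears in (\ref{second cross partial expression}).) Combining the pieces, the second term is strictly negative, the first is nonpositive, and so $\pdv[2]{W}{\tau}<0$.

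The step needing most care is the Chebyshev comparison. It requires both monotonicity conditions to hold simultaneously on the admissible set: order preservation for $g_i$, and revenue efficiency for the monotonicity of $\pdv[2]{\hat{c}_i}{\tau}$ via Lemma \ref{increasing second derivative lemma}. It also requires checking that the two orderings run opposite, so that the covariance is nonpositive. If the strictness of the first term proves awkward, that is not a problem, because strictness already comes from $\mu_{\tau\tau}<0$ together with $\bar{g}>0$.
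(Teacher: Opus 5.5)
Your proposal is correct and follows essentially the same route as the paper: split $\partial^2 W/\partial\tau^2$ into the $u''$ term and the $u'$ term, use order preservation, Lemma~\ref{increasing second derivative lemma} and the Chebyshev inequality to bound the latter by $\bar{g}\,\mu_{\tau\tau}$, and sign $\mu_{\tau\tau}<0$ via $s\tau\varepsilon<1$ from revenue efficiency. The differences are cosmetic. You get $\mu_{\tau\tau}$ by differentiating the closed form $\mu_\tau=-s^2\varepsilon\tau(1-s\tau)^{\varepsilon-1}\int w_j^{1+\varepsilon}\,\dd j$ rather than through the $v_i''$ and $\partial^2 z_i/\partial\tau^2$ terms, and you take strictness from $\bar{g}\,\mu_{\tau\tau}<0$, whereas the paper also obtains strict negativity of the $u''$ term from $\partial\hat{c}_i/\partial\tau$ being strictly decreasing in $i$.
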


\noindent \textit{Proof.}  We have 
\begin{align*}
\pdv[2]{W}{\tau}= \underbrace{\int u''\l(\hat{c}_i\r) \times \l(\pdv{\hat{c}_i}{\tau}\r)^2 \dd i}_A + \underbrace{\int  u'\l(\hat{c}_i\r) \pdv[2]{\hat{c}_i}{\tau} \dd i}_B
\end{align*}
Since $u''\l(\hat{c}_i\r) < 0$ and, for all but at most one $i$, $\pdv{\hat{c}_i}{\tau} \neq 0$,\footnote{We have $
\pdv{\hat c_i}{\tau}= \l(\bar{z} - z_i\r)
   + \l(1-s\r)\frac{s\tau\varepsilon}{1-s\tau} z_i
   - \frac{s\tau\varepsilon}{1-s\tau} \bar{z}
=
\l(
1-\frac{s\tau\varepsilon}{1-s\tau}
\r)\bar{z}
+
\l(
-1+(1-s)\frac{s\tau\varepsilon}{1-s\tau}
\r) z_i$. Observe that $\frac{s\tau\varepsilon}{1-s\tau} \leq 1 \Leftrightarrow \tau \leq \frac{1}{s\l(1+\varepsilon\r)}$, which, by Proposition \ref{revenue efficiency prop}, is equivalent to the condition that $\tau$ is revenue efficient.  Given our assumption that $s >0$, it follows that, when $\tau$ is admissible, and hence revenue efficient, $-1+(1-s)\frac{s\tau\varepsilon}{1-s\tau} <0$ and so $\pdv{\hat c_i}{\tau}$ is strictly decreasing in $i$, so that, since $z_i$ is strictly increasing in $i$, $\pdv{\hat c_i}{\tau}=0$ for at most one $i$.}  it follows that $A < 0$.  Next, observe that
\begin{align*}
B =&\; \l(\int u'\l(\hat{c}_i\r) \dd i \times \int \pdv[2]{\hat{c}_i}{\tau} \dd i\r) + \int_i \pdv[2]{\hat{c}_i}{\tau}\l(u'\l(\hat{c}_i\r)-\int u'\l(\hat{c}_j\r) \dd j\r) \dd i\\
 = &\; \l(\underbrace{\int u'\l(\hat{c}_i\r) \dd i}_+ \times \underbrace{\pdv[2]{\mu}{\tau}}_- \r)+ \underbrace{\int_i \pdv[2]{\hat{c}_i}{\tau}\l(u'\l(\hat{c}_i\r)-\int u'\l(\hat{c}_j\r) \dd j\r) \dd i}_- < 0.
\end{align*}
where we have used that (i) $u'\l(\hat{c}_i\r) > 0$, (ii) the fact that, for revenue efficient tax policies, $\pdv[2]{\mu}{\tau} <0$,\footnote{This is established by Lemma \ref{lem: effect on marginal efficiency cost of taxes}. (The hypothesis that $\tau > 0$ is not needed for the part of Lemma \ref{lem: effect on marginal efficiency cost of taxes} that establishes that $\pdv[2]{\mu}{\tau} <0$.)} and (iii) the Chebyshev integral inequality, appealing to the facts that $u'\l(\hat{c}_i\r)$ is decreasing in $i$ (because we are assuming an admissible and hence order-preserving tax rate), and that $\pdv[2]{\hat{c}_i}{\tau}$ is increasing in $i$ if $\tau$ is revenue-efficient (Lemma \ref{increasing second derivative lemma}). This completes the proof. $\square$

\begin{prop}\label{prop: uniqueness}
The $s$-optimal tax rate is unique and strictly positive.
\end{prop}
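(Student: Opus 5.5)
Since $\tau\mapsto W(\tau,s)$ is continuous on the relevant domain, the plan is to reduce the search for a maximizer to the admissible set at $s$ and then use the strict concavity supplied by Lemma \ref{strictly concave W} on that set. The admissible tax rates at $s$ form an interval, because order-preservation is the condition $\tau<\frac{1}{1+\varepsilon(1-s)}$ (Proposition \ref{prop: order preserving condition}) and revenue efficiency is $\tau\le \tau^{\max}=\frac{1}{s(1+\varepsilon)}$. So the admissible set is $\mathcal{A}_s=[0,\bar\tau_s)$ or $[0,\bar\tau_s]$, with $\bar\tau_s=\min\{\frac{1}{1+\varepsilon(1-s)},\frac{1}{s(1+\varepsilon)}\}$.

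\emph{Existence.} On $[0,\tau^{\max}]$, $W(\cdot,s)$ is continuous, so it attains a maximum there. Any $\tau_1>\tau^{\max}$ in $[0,1/s)$ is Pareto dominated by some $\tau_0<\tau^{\max}$, as in the proof of Proposition \ref{revenue efficiency Proposition}. It follows that a maximizer over $[0,\tau^{\max}]$ is a global maximizer, and an $s$-optimal rate exists. (Rates $\tau\ge 1/s$ are disposed of by Remark \ref{domain remark}.)

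\emph{Positivity and interiority.} The proof of Lemma \ref{increasing normal lemma} shows that at $\tau=0$ we have $\pdv{W}{\tau}=\int g_i(\bar z-z_i)\,di>0$, since a zero tax is order-preserving. Hence $0$ is not optimal and every $s$-optimal rate is strictly positive. By Propositions \ref{revenue efficiency Proposition} and \ref{prop: revenue maximizing not optimal}, any $s$-optimal rate is strictly below $\tau^{\max}$, and by Lemma \ref{increasing normal lemma} it is order-preserving. So every $s$-optimal rate lies in the interior of $\mathcal{A}_s$ and satisfies $\pdv{W}{\tau}=0$.

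\emph{Uniqueness.} By Lemma \ref{strictly concave W}, $\pdv[2]{W}{\tau}<0$ everywhere on $\mathcal{A}_s$. Therefore $\pdv{W}{\tau}$ is strictly decreasing on this interval and has at most one zero there. Since every $s$-optimal rate is a zero of $\pdv{W}{\tau}$ in $\mathcal{A}_s$, there is at most one such rate. Combined with existence, the $s$-optimal tax rate is unique.

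The main subtlety is that strict concavity is only available on $\mathcal{A}_s$, so the argument must first confine all candidate maximizers to $\mathcal{A}_s$ and only then apply concavity. That confinement is exactly what Lemma \ref{increasing normal lemma} and Proposition \ref{revenue efficiency Proposition} provide. I would also check that Lemma \ref{strictly concave W} does not itself depend on $\tau>0$, which its footnote indicates, so that $\mathcal{A}_s$ can legitimately include $\tau=0$.
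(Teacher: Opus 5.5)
Your proposal is correct and follows essentially the same route as the paper: confine every $s$-optimal rate to the admissible interval via Proposition \ref{revenue efficiency Proposition} and Lemma \ref{increasing normal lemma}, get uniqueness from the strict concavity in Lemma \ref{strictly concave W} on that interval, and get strict positivity from the positive derivative at $\tau=0$ shown in the proof of Lemma \ref{increasing normal lemma}. Two parts of your write-up go beyond the paper's proof, which leaves them implicit: the explicit existence argument (a maximum over $[0,\tau^{\max}]$ combined with the Pareto-domination step from Proposition \ref{revenue efficiency Proposition}) and the first-order-condition phrasing of uniqueness; both rest on the same standing continuity and smoothness assumptions on $W$ that the paper uses.
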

\textit{Proof.} By Proposition \ref{revenue efficiency Proposition} and Lemma \ref{increasing normal lemma} the optimal tax rate is revenue efficient and order-preserving, hence admissible.  By Lemma \ref{strictly concave W} the utilitarian objective is strictly concave on the admissible domain.  Hence, the optimal tax rate is unique.  During the course of the proof of Lemma \ref{increasing normal lemma}, it was shown that,  starting at $\tau=0$, a slight increase in the marginal tax rate increases utilitarian social welfare. $\square$

\subsubsection{Proof of Proposition \ref{optimal tax formula proposition}}
At a fixed level of salience, the social optimization problem is to choose $\tau$ to maximize:
\begin{align*}
W=\int_i u\l(z_i\l(1-\tau\r)-v_i\l(z_i\r)+\bar{z}\tau\r) \dd i,
\end{align*}
By Proposition \ref{prop: uniqueness}, the optimal tax rate is positive.  As in (\ref{expression for derivative}), the  first order condition is:
\begin{align*}
\pdv{W}{\tau} = \int g_i \l(\bar{z}-z_i\r)\dd i+\tau\l(1-s\r)
\frac{s \varepsilon}{1-s\tau}\int g_i z_i \dd i - \bar{g}\tau \frac{ s\varepsilon \bar{z}}{1-s\tau}   = 0.
\end{align*}
Collecting the terms involving $\frac{s\tau\varepsilon}{1-s\tau}$, this becomes
\begin{align*}
\int g_i \l(\bar{z}-z_i\r)\dd i - \frac{s\tau \varepsilon}{1-s \tau}\l[\bar{g}\bar{z}-\l(1-s\r)\int g_i z_i \dd i \r] =0.
\end{align*}
Recall that $\bar{h} = \frac{\int g_i z_i \dd i}{\bar{g}\bar{z}}$. 
Then, dividing through by $\bar{g}\bar{z}$, and solving for $\frac{s\tau\varepsilon}{1-s\tau}$, we have
\begin{align*}
\frac{s\tau \varepsilon }{1-s\tau} = \frac{1-\bar{h}}{1-\l(1-s\r)\bar{h}}.
\end{align*}
 $\square$

\subsubsection{When $\rho \leq 2$, the optimal marginal tax rate is decreasing in salience at $s=1$}

\begin{prop}\label{rho less than 2 proposition}
Suppose that $\rho \leq 2$.  Then the optimal tax rate is decreasing in salience at $s=1$.  That is, $\tau'\l(1\r) < 0$. 
\end{prop}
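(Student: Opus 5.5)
The plan is to differentiate the first-order condition implicitly. By Proposition \ref{prop: uniqueness} and Lemma \ref{strictly concave W}, $\tau(s)$ is the unique interior solution of $W_\tau(\tau,s)=0$, and $W_{\tau\tau}<0$ there. The implicit function theorem then gives $\tau'(1)=-W_{\tau s}/W_{\tau\tau}$, so it suffices to show $W_{\tau s}(\tau(1),1)<0$. Write $k(\tau,s)=\frac{s\tau\varepsilon}{1-s\tau}$. By revenue efficiency (Proposition \ref{revenue efficiency Proposition}) and Proposition \ref{prop: revenue maximizing not optimal}, $k<1$ at the optimum. From the footnote in Lemma \ref{strictly concave W},
\begin{equation*}
\pdv{\hat c_i}{\tau}=(1-k)\bar z+\l(-1+(1-s)k\r)z_i .
\end{equation*}
We then split
\begin{equation*}
W_{\tau s}=\int u''(\hat c_i)\,\pdv{\hat c_i}{s}\,\pdv{\hat c_i}{\tau}\,\dd i+\int g_i\,\frac{\partial^2\hat c_i}{\partial\tau\,\partial s}\,\dd i .
\end{equation*}

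\textbf{Second term (the cross-partial).} At $s=1$ I will differentiate $\pdv{\hat c_i}{\tau}$ in $s$. The $z_i$ coefficient is $-k z_i-\pdv{z_i}{s}$, and this vanishes because $\pdv{z_i}{s}=-\frac{\varepsilon\tau}{1-\tau}z_i=-kz_i$ at $s=1$ by (\ref{dvzi}). What remains is $\frac{\partial^2\hat c_i}{\partial\tau\partial s}=-\l[k_s+(1-k)k\r]\bar z$. This is independent of $i$ and strictly negative, since $k_s>0$ and $k<1$. So the second term equals $-\bar g\,[k_s+(1-k)k]\,\bar z<0$.

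\textbf{First term.} By (\ref{ci derivative s}), at $s=1$ we have $\pdv{\hat c_i}{s}=-\frac{\varepsilon\tau^2}{1-\tau}\bar z$, which is common to all $i$. Set $x_i=(1-k)\bar z-z_i$. The first term becomes $\frac{\rho\varepsilon\tau^2\bar z}{1-\tau}P$ with $P:=\int\hat c_i^{-\rho-1}x_i\,\dd i$, and the $s=1$ first-order condition reads $\int\hat c_i^{-\rho}x_i\,\dd i=0$. Since $x_i$ is decreasing and $\hat c_i^{-1}$ is decreasing in $i$ (Lemma \ref{increasing normal lemma}), Chebyshev with weights $\hat c_i^{-\rho}$ gives $P>0$. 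So this term has the wrong sign, and the whole proof reduces to the inequality
\begin{equation*}
\frac{\rho\,\varepsilon\tau^2}{1-\tau}\,P<\bar g\,\l[k_s+(1-k)k\r].
\end{equation*}

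\textbf{Bounding $P$ (the main obstacle).} To bound $P$, I will use the affine representation (\ref{simplification of an expression s}) at $s=1$: $\hat c_i=a z_i+b$ with $a=\frac{1-\tau}{1+\varepsilon}$ and $b=\tau\bar z$, so that $\hat c_i\ge b$. Let $i^*$ be the agent with $x_{i^*}=0$. The first-order condition allows subtracting $1/\hat c_{i^*}$, giving $P=\int\hat c_i^{-\rho}x_i\l(\hat c_i^{-1}-\hat c_{i^*}^{-1}\r)\dd i$, where every integrand is nonnegative. Substituting $x_i=(\hat c_{i^*}-\hat c_i)/a$ yields
\begin{equation*}
P=\frac1a\int\hat c_i^{-\rho-1}\,\frac{(\hat c_i-\hat c_{i^*})^2}{\hat c_{i^*}}\,\dd i .
\end{equation*}
I then want to bound $\hat c^{-\rho-1}(\hat c-\hat c^*)^2$ by a multiple of $\hat c^{-\rho}$ (to produce $\bar g$), using $\hat c\ge b$ and the fact that $(\hat c-\hat c^*)^2/(\hat c\,\hat c^*)$ is controlled when the weight $\hat c^{1-\rho}$ does not grow too fast. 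This is exactly where I expect $\rho\le2$ to enter: for $\rho\le 2$, the function $\hat c^{1-\rho}(\hat c-\hat c^*)^2/\hat c^2$ is dominated via the first-order condition. Finally, I will compare constants using $k=\frac{\tau\varepsilon}{1-\tau}$, $k_s=\frac{\tau\varepsilon}{(1-\tau)^2}$ and the optimal-tax formula (\ref{useful optimal tax formula}) at $s=1$, $k=1-\bar h$. This last inequality is the step I expect to be delicate, and it may require a sharper Jensen/Hölder argument than the crude bound $\hat c_i\ge b$.
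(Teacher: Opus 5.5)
Your reduction is correct and equivalent to the paper's. With $\bar g=\overline{\hat c^{-\rho}}$, your target inequality $\rho\tau P<\bar g\left[\frac{1}{1-\tau}+1-k\right]$ is exactly condition (\ref{wts for proof}). The paper reaches it by differentiating $\log(-W_s)$ in $\tau$ at $s=1$, using $-W_s=\bar g\,\varepsilon\tau^2\bar z/(1-\tau)$, rather than splitting $W_{\tau s}$ into two terms.

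The gap is that you stop at the step that carries all the content, and the route you sketch would not close it. After your FOC-based rewriting, $\hat c^{-\rho-1}(\hat c-\hat c^{*})^2$ cannot be bounded pointwise by a multiple of $\hat c^{-\rho}$ with a constant independent of the wage distribution, because $(\hat c-\hat c^{*})^2/\hat c$ grows without bound in $\hat c$. Neither the formula $k=1-\bar h$ nor a Jensen/H\"older refinement supplies the missing bound.

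The missing idea is to compute $P$ exactly rather than bound its integrand. Substitute $z_i=(\hat c_i-\tau\bar z)/a$, with $a=\frac{1-\tau}{1+\varepsilon}$, directly into $P=(1-k)\bar z\,\overline{\hat c^{-\rho-1}}-\int\hat c_i^{-\rho-1}z_i\,\dd i$. No FOC is needed, and expanding your squared form gives the same result:
\begin{equation*}
P=\frac{1}{1-\tau}\left[\bar z\,\overline{\hat c^{-\rho-1}}-(1+\varepsilon)\,\bar g\right].
\end{equation*}
Everything now hinges on the ratio $\overline{\hat c^{-\rho-1}}/\overline{\hat c^{-\rho}}$. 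The crude bound $\hat c_i\ge\tau\bar z$, which you already noted, gives $\overline{\hat c^{-\rho-1}}\le\bar g/(\tau\bar z)$.

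Set $T:=(1+\varepsilon)\tau$. Then $\rho\tau P\le\frac{\rho\bar g}{1-\tau}(1-T)$, while your right-hand side equals $\frac{\bar g}{1-\tau}(2-T)$. At $s=1$, $k<1$ is equivalent to $T<1$, and $T>0$. So for every $\rho\le 2$ you get $\rho(1-T)\le 2(1-T)<2-T$, which finishes the proof.

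Thus $\rho\le2$ enters only in this final scalar comparison, and the crude bound suffices. This is, in substance, the paper's argument.
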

\textit{Proof.}  It follows from the implicit function theorem applied to the first order conditions of the utilitarian welfare maximization problem and Lemma \ref{strictly concave W} that, to establish that $\tau'\l(1\r)< 0$, it is sufficient to show that $\displaystyle \l.\pdv{W}{s}{\tau}\r|_{\tau=\tau\l(1\r),s=1}< 0$. 

Using (\ref{ci derivative s}), which shows that, when $s=1$, $\pdv{\hat{c}_i}{s}$ does not depend on $i$, we have:
\begin{align}\label{expression for -W}
\l.\pdv{W}{s}\r|_{s=1} = \int u'\l(\hat{c}_i\r) \pdv{\hat{c}_i}{s} \dd i = -\overline{\hat{c}^{-\rho}} \frac{\varepsilon \tau^2 \bar{z}}{1-\tau},  
\end{align}
where for any $k$, $\overline{\hat{c}^{-k}} = \int \hat{c}_i^{-k} \dd i$. By Theorem \ref{salience vs welfare proposition}, $\l.\pdv{W}{s}\r|_{\tau=\tau\l(s\r)}< 0$.\footnote{The proof shows that $\pdv{W}{s}< 0$, and not just that $W$ is decreasing in $s$.} Because $\pdv{\tau}\log\l(-\pdv{W}{s}\r)=\pdv{W}{s}{\tau}\big/\pdv{W}{s}$, it follows that $\l.\pdv{W}{s}{\tau}\r|_{\tau=\tau\l(1\r),s=1}$ and $\l.\pdv{\tau}\log\l(-\pdv{W}{s}\r)\r|_{\tau=\tau\l(1\r),s=1}$ have opposite signs. Thus, to complete the proof, it is sufficient to show that $\l.\pdv{\tau}\log\l(-\pdv{W}{s}\r)\r|_{\tau=\tau\l(1\r),s=1}>0$.

Using (\ref{expression for -W}), we have
\begin{align}\label{derivative of log Ws}
\l.\pdv{\tau}\log\l(-\pdv{W}{s}\r)\r|_{\tau=\tau\l(1\r),s=1} =\pdv{\tau} \log\overline{\hat{c}^{-\rho}}+ \frac{2}{\tau} +\frac{1-\varepsilon}{1-\tau},
\end{align}
where we have used the fact that, by (\ref{tax derivatives}), $\l.\pdv{\tau} \log \bar{z}\r|_{s=1}=\l.\l.\l(\pdv{\bar{z}}{\tau}\r)\r/\bar{z}\r|_{s=1} = -\frac{\varepsilon}{1-\tau}$. 

To analyze $\pdv{\tau} \log\overline{\hat{c}^{-\rho}}$, it is useful to do some preliminary calculations.  Observe that, by (\ref{simplification of an expression s}), at $s=1$, $\hat{c}_i=\frac{1-\tau}{1+\varepsilon}z_i +\tau \bar{z}$.  So, $z_i = \frac{1+\varepsilon}{1-\tau}\l(\hat{c}_i -\tau \bar{z}\r) $. It follows that 
\begin{align}\label{preliminary calculation}
\int \frac{\hat{c}^{-\rho-1}_i}{\overline{\hat{c}^{-\rho}}} z_i \dd i= \frac{1+\varepsilon}{1-\tau}\l(1-\frac{\overline{\hat{c}^{-\rho-1}}}{\overline{\hat{c}^{-\rho}}}\tau\bar{z}\r). 
\end{align}
Appealing to (\ref{money-metric utility}) and (\ref{tax derivatives}), and applying the envelope theorem (which is appropriate at $s=1$), we have $\l.\pdv{\hat{c}_i}{\tau}\r|_{s=1} = \l(\bar{z}-z_i\r) - \frac{\varepsilon\tau}{1-\tau}\bar{z}= \frac{1 - \tau\l(1+\varepsilon\r)}{1-\tau}\bar{z} - z_i$.
It follows that
\begin{align}\label{preliminary calculation 2}
   \pdv{\tau} \log \overline{\hat{c}^{-\rho}}= \frac{\pdv{\tau}\overline{\hat{c}^{-\rho}}}{\overline{\hat{c}^{-\rho}}}=  \rho\l[\int \frac{\hat{c}^{-\rho-1}_i}{\overline{\hat{c}^{-\rho}}} z_i \dd i -\frac{1-\tau\l(1+\varepsilon\r)}{1-\tau}\frac{\overline{\hat{c}^{-\rho-1}}}{\overline{\hat{c}^{-\rho}}} \bar{z} \r]= \frac{\rho}{1-\tau} \l[1+\varepsilon-\frac{\overline{\hat{c}^{-\rho-1}}}{\overline{\hat{c}^{-\rho}}}\bar{z}\r],
\end{align}
where the last equality follows from (\ref{preliminary calculation}). Multiplying both sides of (\ref{derivative of log Ws}) by $\tau\l(1-\tau\r)$ and appealing to (\ref{preliminary calculation 2}), it follows that $\pdv{\tau}\log \l(-\pdv{W}{s}\r)  > 0$ if and only if 
\begin{align}\label{wts for proof}
\rho \l[\l(1+\varepsilon\r)\tau-\frac{\overline{\hat{c}^{-\rho-1}}}{\overline{\hat{c}^{-\rho}}}\bar{z}\tau\r] + 2\l(1-\tau\r) +\tau\l(1-\varepsilon\r) > 0.\end{align}
Next observe that
$
\frac{\overline{\hat{c}^{-\rho}}}{\overline{\hat{c}^{-\rho-1}}} = \int \frac{\hat{c}_i^{-\rho-1}}{\overline{\hat{c}^{-\rho-1}}} \hat{c}_i \dd i \geq \tau\bar{z}$, where the last inequality follows from (\ref{simplification of an expression s}), evaluated at $s=1$.  So $
\frac{\overline{\hat{c}^{-\rho-1}}}{\overline{\hat{c}^{-\rho}}} \leq \frac{1}{\tau \bar{z}}$. Hence to establish (\ref{wts for proof}), it is sufficient to establish 
$\rho \l[\l(1+\varepsilon\r)\tau-1\r] + 2\l(1-\tau\r) +\tau\l(1-\varepsilon\r) > 0$, or equivalently, 
\begin{align}\label{final wts}
2 + \l(\rho-1\r) \l(1+\varepsilon\r)\tau > \rho. 
\end{align}
The above inequality always holds when $\rho \in \l[1,2\r]$.  Suppose that $\rho < 1$.  
By (\ref{useful optimal tax formula}), applied at $s=1$, the optimal tax rate satisfies $\frac{\tau \varepsilon}{1-\tau} = \l(1-\bar{h}\r)$. So $\tau \varepsilon = \l(1-\bar{h}\r)\l(1-\tau\r)$.  So $\tau\l(1+\varepsilon\r) =  \l(1-\bar{h}\r)\l(1-\tau\r) + \tau < \l(1-\tau\r) + \tau = 1$.  So to establish (\ref{final wts}), it is sufficient to establish that $2+\l(\rho-1\r) > \rho$, which always holds. This implies that $\l.\pdv{\tau}\log\l(-\pdv{W}{s}\r)\r|_{\tau=\tau\l(1\r),s=1}>0$, and so completes the proof. $\square$

\subsection{Equality, efficiency, and the price of equality}

\subsubsection{Lemmas supporting Proposition \ref{prop: partials}}
The results in this section connecting Lorenz dominance to inequality are standard in the inequality literature.  We include proofs for completeness and to establish the results in our setting and notation.  

Let $\hat{C}=\l(\hat{c}_i\r)_{i\in \l[0,1\r]}$ be a profile of equivalent consumptions such that $i \mapsto \hat{c}_i$ is increasing; we will refer to $\hat{C}$ as an \textit{increasing} profile.  Let $\mu\l(\hat{C}\r)= \int_0^1 \hat{c}_i \dd i$ be mean equivalent consumption.  Define a parameterized family $\hat{C}^\theta=\l(\hat{c}_i\l(\theta\r)\r)_{i \in \l[0,1\r]}$ of increasing equivalent consumption profiles, and let $\mu\l(\theta\r)=\mu\l(\hat{C}^\theta\r)$.  Define 
\begin{align}\label{Lorenz definition}
L\l(i,\theta\r)= \frac{\int_0^i \hat{c}_j\l(\theta\r) \dd j}{\mu\l(\theta\r)}. 
\end{align}
Note that the map $i \mapsto L\l(i,\theta\r)$ is the Lorenz curve at $\theta$.  

\begin{lem}\label{effect on welfare proposition}
Let $\l(\hat{C}^\theta\r)_\theta$ be a parameterized family of increasing equivalent consumption profiles with constant mean equivalent consumption $\mu$, let $u$ be a strictly concave and strictly increasing function and let
\begin{align*}
W\l(\theta\r)= \int_0^1 u\l(\hat{c}_i\l(\theta\r)\r) \dd i.
\end{align*}
If, for all $i$, $\pdv{\theta}L\l(i,\theta\r) < 0$, then $W'\l(\theta\r) < 0$.  
\end{lem}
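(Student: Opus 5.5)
Since the profiles are increasing and the mean $\mu$ is constant in $\theta$, I will differentiate $W$ directly and integrate by parts to bring in the Lorenz curve. Write $\hat c_i(\theta)=\mu\,\partial_i L(i,\theta)$, which holds by definition (\ref{Lorenz definition}) because $\mu$ does not depend on $\theta$. Then
\begin{align*}
W'(\theta)=\int_0^1 u'\l(\hat c_i(\theta)\r)\pdv{\hat c_i}{\theta}\dd i ,\qquad \pdv{\hat c_i}{\theta}=\mu\,\pdv{i}\pdv{\theta}L(i,\theta).
\end{align*}
Set $D(i):=\mu\,\pdv{\theta}L(i,\theta)=\pdv{\theta}\int_0^i \hat c_j(\theta)\dd j$. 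Then $D(0)=0$, and $D(1)=\pdv{\theta}\mu=0$ because the mean is constant. By hypothesis $D(i)<0$ for interior $i$; at the endpoints we have $D=0$, so strict negativity of $\partial_\theta L$ must be read as applying on $(0,1)$.

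Next I integrate by parts in $i$:
\begin{align*}
W'(\theta)=\Big[u'(\hat c_i)D(i)\Big]_0^1-\int_0^1 D(i)\,\dd u'(\hat c_i).
\end{align*}
The boundary term vanishes because $D(0)=D(1)=0$. Since $i\mapsto\hat c_i$ is increasing and $u$ is strictly concave, $i\mapsto u'(\hat c_i)$ is decreasing, so the measure $\dd u'(\hat c_i)$ is nonpositive. With $D<0$ on the interior, this gives $-\int D\,\dd u'\le 0$.

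The main obstacle is upgrading this to a strict inequality. The integral is strictly negative provided $u'(\hat c_i)$ is not constant in $i$. If the profile were constant, so that $\hat c_i=\mu$ for all $i$, then $L(i,\theta)=i$, which is the maximal Lorenz curve. A Lorenz curve cannot lie strictly below this in a way that is strictly decreasing in $\theta$ starting from equality, unless nearby profiles lie below it. I would handle this as follows. Because $L(i,\theta)\le i$ always holds, a strict decrease in $\theta$ at perfect equality is still compatible, but strictness then comes from the profile at that point being strictly increasing on a set of positive measure. In our application the profiles are strictly increasing, since $\hat c_i$ is affine in $w_i^{1+\varepsilon}$ with a positive slope under order preservation. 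Hence $u'(\hat c_i)$ is strictly decreasing, and the Stieltjes integral against a strictly negative $D$ is strictly negative.

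For smoothness, I would do the integration by parts with $\dd u'(\hat c_i)=u''(\hat c_i)\,\partial_i\hat c_i\,\dd i$ when $\hat c$ is differentiable in $i$. Otherwise I would use the Riemann–Stieltjes form, which is valid since $u'(\hat c_\cdot)$ is monotone and $D$ is continuous. If needed, I would note explicitly that the lemma is applied to strictly increasing profiles, so that $W'(\theta)<0$ follows.
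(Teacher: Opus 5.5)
Your proposal is correct and follows essentially the same route as the paper: integrate by parts against $D(i)=\int_0^i \partial_\theta \hat c_j(\theta)\,di=\mu\,\partial_\theta L(i,\theta)$, drop the boundary term because the mean is constant, and sign the remaining term $-\mu\int_0^1 u''(\hat c_i)\,\partial_i \hat c_i\,\partial_\theta L(i,\theta)\,di<0$; the paper, like you, simply takes $\partial_i \hat c_i>0$. Your detour on strictness is unnecessary, and your remark that a strict decrease at perfect equality is ``compatible'' is not right. Since $L(i,\theta')\le i$ for every increasing profile, a profile constant on $(0,1)$ would make $\theta$ a maximizer of $L(i,\cdot)$ and force $\partial_\theta L(i,\theta)=0$, so the hypothesis itself makes $u'(\hat c_\cdot)$ nonconstant on $(0,1)$, and this together with $D<0$ there already gives the strict inequality.
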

\textit{Proof.}  It is convenient to write $\hat{c}\l(i,\theta\r)=\hat{c}_i\l(\theta\r)$. Using integration by parts, we have
\begin{align*}
W'\l(\theta\r)=&\;\int_0^1 u'\l(\hat{c}\l(i,\theta\r)\r) \pdv{\theta}\hat{c}\l(i,\theta\r) \dd i \\= &\; u'\l(\hat{c}\l(1,\theta\r)\r)  \int_0^1 \pdv{\theta} \hat{c}\l(i,\theta\r) \dd i-\underbrace{\int_0^1 u''\l(\hat{c}\l(i,\theta\r)\r) \pdv{i} \hat{c}\l(i,\theta\r) \int_0^i \pdv{\theta} \hat{c}\l(j,\theta\r) \dd j \dd i}_A.
\end{align*}
Next observe that
\begin{align*}
 \int_0^1 \pdv{\theta} \hat{c}\l(i,\theta\r) \dd i = \pdv{\theta} \int_0^1  \hat{c}\l(i,\theta\r) \dd i = \mu'\l(\theta\r)=0.
\end{align*}
Again, using the fact that $\mu'\l(\theta\r) = 0$, 
\begin{align*}
\pdv{\theta} L\l(i,\theta\r) = \frac{\pdv{\theta} \int_0^i \hat{c}\l(j,\theta\r) \dd j}{\mu}.
\end{align*}
So,
\begin{align*}
A=& \int_0^1 u''\l(\hat{c}\l(i,\theta\r)\r) \pdv{i} \hat{c}\l(i,\theta\r) \l[\pdv{\theta}\int_0^i  \hat{c}\l(j,\theta\r) \dd j\r] \dd i\\
=& \mu \int_0^1 \underbrace{u''\l(\hat{c}\l(i,\theta\r)\r)}_- \underbrace{\pdv{i} \hat{c}\l(i,\theta\r)}_+ \underbrace{\pdv{\theta}L\l(i,\theta\r)}_- \dd i >0.
\end{align*}
So $W'\l(\theta\r) < 0$.  $\square$

Unlike in the preceding lemma, the following lemma does not assume that the family $\l(\hat{C}^\theta\r)_\theta$ of increasing equivalent consumption profiles has constant mean.

\begin{lem}\label{useful inequality lemma}
Let equality be measured by the Atkinson index (\ref{Atkinson inequality index}). If, for all $i$, $L\l(i,\theta\r)$ is $\begin{array}{l} \textit{increasing}\\\textit{constant}\\ \textit{decreasing} \end{array}$ in $\theta$, then equality is $\begin{array}{l} \textit{increasing}\\\textit{constant}\\ \textit{decreasing} \end{array}$ in $\theta$.
\end{lem}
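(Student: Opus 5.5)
The plan is to reduce the statement to the constant-mean case already handled by Lemma \ref{effect on welfare proposition}, using the scale invariance of the Atkinson index. Write $\hat{C}^\theta$ for the profile and $\mu(\theta)$ for its mean. Then define the normalized profile $\tilde{c}_i(\theta) = \hat{c}_i(\theta)/\mu(\theta)$, which has constant mean $1$. It is still increasing in $i$, and its Lorenz curve coincides with that of $\hat{C}^\theta$, since $L(i,\theta)$ in (\ref{Lorenz definition}) is invariant to rescaling the profile. By (\ref{Atkinson inequality index}), for $\rho\neq 1$,
\begin{align*}
E(\theta) = \frac{1}{\mu(\theta)}\l(\int_0^1 \hat{c}_i(\theta)^{1-\rho}\dd i\r)^{\frac{1}{1-\rho}} = \l(\int_0^1 \tilde{c}_i(\theta)^{1-\rho}\dd i\r)^{\frac{1}{1-\rho}} = u^{-1}\l(\tilde{W}(\theta)\r),
\end{align*}
where $\tilde{W}(\theta)=\int_0^1 u(\tilde{c}_i(\theta))\dd i$ and $u(\hat c)=\hat c^{1-\rho}/(1-\rho)$. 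For $\rho=1$ the same identity holds using (\ref{Atkinson log}) with $u=\log$. Because $u^{-1}$ is strictly increasing, $E(\theta)$ moves in the same direction as $\tilde{W}(\theta)$.

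The decreasing case then follows directly. If $L(i,\theta)$ is decreasing in $\theta$ for all $i$, Lemma \ref{effect on welfare proposition} applies to the constant-mean family $\tilde{C}^\theta$ with the strictly concave, strictly increasing $u$. It gives $\tilde{W}'(\theta)<0$, hence $E'(\theta)<0$.

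The increasing case follows from the same integration-by-parts computation in the proof of Lemma \ref{effect on welfare proposition}. There the boundary term vanishes because the mean is constant, and the remaining term $A$ has the sign of $-\pdv{\theta}L$. So if $\pdv{\theta}L>0$ for all $i$, then $A<0$ and $\tilde{W}'>0$. Alternatively, reparametrize by $-\theta$ and apply Lemma \ref{effect on welfare proposition} directly. If $L$ is constant in $\theta$, then $\int_0^i \tilde{c}_j(\theta)\dd j = L(i,\theta)$ is constant in $\theta$ for every $i$. Differentiating in $i$ shows that $\tilde{c}_i(\theta)$ is constant, at least almost everywhere, so $\tilde{W}$ and therefore $E$ are constant.

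The main point requiring care is the interpretation of ``increasing/decreasing in $\theta$'' as strict pointwise derivative signs, matching the hypothesis of Lemma \ref{effect on welfare proposition}. At the endpoints $i=0,1$ we have $L\equiv 0,1$, so the strict inequality can only hold on the interior $(0,1)$. I would note that the integral $A$ is unaffected by the endpoint values, so the lemma's conclusion still holds there. I would also verify the smoothness needed to differentiate $\tilde{c}$ under the integral, which follows from the standing smoothness assumptions together with $\mu(\theta)>0$, since increasing profiles arising here are positive by Proposition \ref{prop: positivity}.
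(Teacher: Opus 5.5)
Your proposal is correct and follows essentially the paper's argument. You rescale to a constant-mean family (you take $\mu^*=1$, the paper an arbitrary $\mu^*$), observe that the Lorenz curve is unchanged, apply Lemma \ref{effect on welfare proposition}, and transfer the result back via scale invariance of the Atkinson index; your explicit identity $E(\theta)=u^{-1}(\tilde W(\theta))$ and your separate handling of the increasing case, the constant case, and the endpoints $i=0,1$ just fill in details the paper leaves as ``similar.''
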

\textit{Proof.}  Choose some $\mu^* >0$, and define the profile $\hat{C}^\theta_* =\l(\hat{c}^*_i\l(\theta\r)\r)_{i \in \l[0,1\r]}$ by: 
\begin{align*}
\hat{c}^*_i\l(\theta\r) = \hat{c}_i\l(\theta\r) \frac{\mu^*}{\mu\l(\hat{C}^\theta\r)}, \;\;\; \forall i, \forall \theta.
\end{align*}  
Then, by construction, $\mu\l(\hat{C}^\theta_*\r)=\mu^*, \forall \theta$.  
So $\l(\hat{C}_*^\theta\r)_\theta$, unlike $\l(\hat{C}^\theta\r)_\theta$, has a constant mean.  Define:
\begin{align*}
L^*\l(i,\theta\r) = \frac{\int_0^i \hat{c}^*_j\l(\theta\r) \dd j }{\mu^*},
\end{align*}  
so that $i \mapsto L^*\l(i,\theta\r)$ is the Lorenz curve for $\hat{C}^\theta_*$ whereas $i \mapsto L\l(i,\theta\r) $ is the Lorenz curve for $\hat{C}^\theta$.  It is easy to see that $L^*\l(i,\theta\r)=L\l(i,\theta\r), \forall i, \forall \theta$.

Now suppose that we measure the equality $E\l(\hat{C}\r)$ in any profile $\hat{C}=\l(\hat{c}_i\r)_{i \in \l[0,1\r]}$ by the Atkinson index
\begin{align*}
E\l(C\r) = \frac{1}{\mu\l(C\r)} \l(\int_0^1\hat{c}_i^{1-\rho} \dd i\r)^{\frac{1}{1-\rho}}.
\end{align*} 
The Atkinson index is scale invariant in the sense that, for all $C$ and $\lambda  > 0$, $E\l(\lambda C\r)=E\l(C\r)$, where $\lambda C= \l(\lambda c_i\r)_{i\in \l[0,1\r]}$.  This is the inequality index that corresponds to the utility function $u\l(\hat{c}\r)= \frac{\hat{c}^{1-\rho}}{1-\rho}$ in the decomposition developed in Section \ref{sec: decomposition derivation}.  Define $W^*\l(\theta\r)= \int_0^1\frac{ \l(\hat{c}^*_i\l(\theta\r)\r)^{1-\rho}}{1-\rho} \dd i$. Lemma \ref{effect on welfare proposition} implies that, if $L\l(i,\theta\r)$ is decreasing in $\theta$, $W^*\l(\theta\r)$ is decreasing in $\theta$.  Since $W^*\l(\theta\r)$ is ordinally equivalent to $\xi\l(\theta\r) :=  E\l(C^\theta_*\r)\mu\l(\hat{C}^\theta_*\r)$, it follows that $\xi\l(\theta\r)$ is decreasing in $\theta$ as well.  Since $\mu\l(\hat{C}^\theta_*\r)=\mu^*$ is constant in $\theta$, it follows that $E\l(\hat{C}_*^\theta\r)$ is decreasing in $\theta$.  Since $E$ is scale invariant and $\hat{C}_*^\theta = \frac{\mu^*}{\mu\l(C^\theta\r)} C^\theta, \forall \theta$, $E\l(\hat{C}_*^\theta\r)=E\l(\hat{C}^\theta\r), \forall \theta$.  So $E\l(\hat{C}^\theta\r)$ is decreasing in $\theta$ as well.  The argument in the case that $L\l(i,\theta\r)$ is increasing or constant in $\theta$ is similar.  This completes the proof of Lemma \ref{useful inequality lemma}.  $\square$

\begin{lem}\label{Lorenz lemma}
Let $\hat{C}^\theta=\l(\hat{c}_i\l(\theta\r)\r)_{i \in \l[0,1\r]}$ be a family of increasing equivalent consumption profiles.  Let inequality be measured by the Atkinson index (\ref{Atkinson inequality index}). Then, if $\pdv{\hat{c}_i}{\theta}/ \hat{c}_i$ is $\begin{array}{l}\textit{increasing}  \\ \textit{decreasing}\end{array}$  in $i$, equality is $\begin{array}{l} \textit{decreasing}\\ \textit{increasing} \end{array}$ in $\theta$.  
\end{lem}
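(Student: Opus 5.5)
The plan is to reduce the statement to Lemma \ref{useful inequality lemma}: we show that the Lorenz curve $L(i,\theta)$ moves in the right direction for every $i$. Consider the case in which $\pdv{\hat{c}_i}{\theta}/\hat{c}_i$ is increasing in $i$; the decreasing case is symmetric, with all inequalities reversed. Write $\gamma_i(\theta)=\pdv{\hat{c}_i}{\theta}/\hat{c}_i$ for the growth rate of agent $i$'s equivalent consumption. Differentiating the definition (\ref{Lorenz definition}) gives
\begin{align*}
\pdv{\theta}L(i,\theta)=\frac{\int_0^i \hat{c}_j\gamma_j \dd j}{\mu}-\frac{\int_0^i \hat{c}_j \dd j}{\mu}\cdot\frac{\int_0^1 \hat{c}_j\gamma_j \dd j}{\mu}
= L(i,\theta)\l[\bar{\gamma}(0,i)-\bar{\gamma}(0,1)\r],
\end{align*}
where $\bar{\gamma}(a,b)=\int_a^b \hat{c}_j\gamma_j \dd j\big/\int_a^b \hat{c}_j\dd j$ is the $\hat{c}$-weighted average growth rate over $[a,b]$. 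This uses $\hat{c}_j>0$, which holds by Proposition \ref{prop: positivity} in our application; in the abstract lemma it is implicit, since the Atkinson index requires positive consumption.

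The key step is to show that $\bar{\gamma}(0,i)<\bar{\gamma}(0,1)$ for every $i\in(0,1)$. The overall average $\bar{\gamma}(0,1)$ is a convex combination of $\bar{\gamma}(0,i)$ and $\bar{\gamma}(i,1)$, with weights proportional to $\int_0^i\hat{c}_j\dd j$ and $\int_i^1\hat{c}_j\dd j$, both strictly positive. Because $\gamma$ is increasing, every value on $[0,i]$ is at most $\gamma_i$, and every value on $[i,1]$ is at least $\gamma_i$. Hence $\bar{\gamma}(0,i)\le\gamma_i\le\bar{\gamma}(i,1)$, and the inequality is strict if $\gamma$ is strictly increasing. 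It follows that $\bar{\gamma}(0,i)<\bar{\gamma}(0,1)$, so $\pdv{\theta}L(i,\theta)<0$ for all interior $i$. At the endpoints the derivative is trivially zero, because $L(0,\theta)=0$ and $L(1,\theta)=1$. Lemma \ref{useful inequality lemma} then gives that equality is decreasing in $\theta$.

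The main obstacle is fitting this conclusion to the hypothesis of Lemma \ref{useful inequality lemma}, which asks for monotonicity \emph{for all} $i$. That lemma rests on Lemma \ref{effect on welfare proposition}, whose integral argument needs only $\pdv{\theta}L(i,\theta)<0$ for almost every $i$. The endpoints therefore do no harm, and strict monotonicity of $\gamma$ delivers the strict sign everywhere on the interior. If $\gamma$ is merely weakly increasing, I would require it to be nonconstant, since a constant $\gamma$ gives the constant case of Lemma \ref{useful inequality lemma}. A second point to check is the interchange of differentiation and integration in computing $\pdv{\theta}L$, which holds under the standing smoothness assumptions. In the application (Proposition \ref{prop: partials}) the growth rates are strictly monotone, so the strict version is the one I would state and use.
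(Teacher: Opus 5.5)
Your proposal is correct and follows the same route as the paper. Like the paper, you write $\partial L(i,\theta)/\partial\theta$ as $L(i,\theta)$ times the gap between the $\hat{c}$-weighted average growth rate on $[0,i]$ and on $[0,1]$, sign that gap from the monotonicity of the growth rate, and conclude with Lemma \ref{useful inequality lemma}; your convex-combination argument, the treatment of the endpoints $i=0,1$, and the strict-versus-weak monotonicity point fill in details the paper leaves implicit.
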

\textit{Proof.}  Define $k_i\l(\theta\r):=\pdv{\hat{c}_i}{\theta}/ \hat{c}_i$. We have
\begin{align*}
\pdv{\theta} L\l(i,\theta\r) = &\; \frac{\int_0^i \hat{c}'_j\l(\theta\r) \dd j}{\mu\l(\theta\r)}- \frac{\int_0^i \hat{c}_j\l(\theta\r) \dd j}{\l[\mu\l(\theta\r)\r]^2}\mu'\l(\theta\r)\\
=&\; \frac{\int_0^i \hat{c}_j\l(\theta\r) \dd j}{\mu\l(\theta\r)}\l[\frac{\int_0^i \hat{c}'_j\l(\theta\r) \dd j}{\int_0^i \hat{c}_j\l(\theta\r) \dd j}-\frac{\int_0^1 \hat{c}'_j\l(\theta\r) \dd j}{\int_0^1 \hat{c}_j\l(\theta\r) \dd j} \r]\\
=&\; \frac{\int_0^i \hat{c}_j\l(\theta\r) \dd j}{\mu\l(\theta\r)}\l[\int_0^i\frac{ \hat{c}_j\l(\theta\r)}{\int_0^i \hat{c}_j\l(\theta\r) \dd j}k_j\l(\theta\r) \dd j-\int_0^1\frac{ \hat{c}_j\l(\theta\r)}{\int_0^1 \hat{c}_j\l(\theta\r) \dd j}k_j\l(\theta\r) \dd j \r]
\end{align*}
Note that the expression in brackets is the difference between a weighted average of the $k_j$ terms, one over the interval $\l[0,i\r]$, and the other over the interval $\l[0,1\r]$, with common relative weights on the interval $\l[0,i\r]$.  It follows that if $k_i\l(\theta\r)$ is increasing (resp., decreasing) in $i$, then $L\l(i,\theta\r)$ is decreasing (resp., increasing) in $\theta$. Lemma \ref{useful inequality lemma} now completes the proof.  $\square$

\subsubsection{Proof of Proposition \ref{prop: partials}} \label{proof of prop 4}

With regard to efficiency, we have proved in the main text that $\pdv{\mu}{s} < 0$. Similarly,   
\begin{align*}
\pdv{\tau} \mu\l(\tau,s\r) = \underbrace{\int \l(\bar{z}-z_i\r) \dd i}_0 +\tau\l(1-s\r) \frac{s \varepsilon \bar{z}}{1-s\tau}  -\tau\frac{s \varepsilon \bar{z}}{1-s\tau}= - \frac{s^2 \tau \varepsilon \bar{z}}{1-s\tau} < 0.
\end{align*}
Turning to equality, and the assumption that we are at an admissible tax policy, and hence one that generates an increasing equivalent consumption profile, it follows from Lemma \ref{Lorenz lemma} that it is sufficient to show that $\pdv{\hat{c}_i}{s}/\hat{c}_i$ is increasing in $i$ and $\pdv{\hat{c}_i}{\tau}/\hat{c}_i$ is decreasing in $i$.

We have:
\begin{align*}
\l.\pdv{\hat c_i}{s}\r/\hat{c}_i \;& =
\phantom{\underbrace{\frac{\l(\bar{z} - z_i\r)}{\hat{c}_i}}_{\text{redistribution}}\;+\;}
\underbrace{(1-s)\frac{\tau^{2}\varepsilon }{1-s\tau}\frac{z_i}{\hat{c}_i}}_{\text{internality}}
\underbrace{-\frac{\tau^{2}\varepsilon }{1-s\tau}\frac{\bar{z}}{\hat{c}_i}}_{\text{fiscal externality}}
\\[0.7em]
\l.\pdv{\hat{c}_i}{\tau}\r/\hat{c}_i \;&=
\underbrace{\frac{\l(\bar{z} - z_i\r)}{\hat{c}_i}}_{\text{redistribution}}\;+\;
\underbrace{(1-s)\frac{s\tau \varepsilon }{1-s\tau}\frac{z_i}{\hat{c}_i}}_{\text{internality}}
\underbrace{-\frac{s\tau \varepsilon }{1-s\tau}\frac{\bar{z}}{\hat{c}_i}}_{\text{fiscal externality}}\\
&=  \l[\l(1-s\r) \frac{s\tau \varepsilon}{1-s\tau} - 1\r] \frac{z_i}{\hat{c}_i}+\l[1 - \frac{s\tau \varepsilon}{1-s\tau}\r] \frac{\bar{z}}{\hat{c}_i}.
\end{align*}
It follows from calculations in Section \ref{relation c z subsection} that we can write $\hat{c}_i =a z_i + b$ for positive constants $a$ and $b$ that depend on $\tau$ and $s$.  So, since $z_i$ is increasing in $i$, for fixed $\tau$ and $s$, $\frac{z_i}{\hat{c}_i}$, is increasing in $i$.  On the other hand, because $\hat{c}_i$ is increasing in $i$, $\frac{\bar{z}}{\hat{c}_i}$ is decreasing in $i$. In the expression for $\pdv{\hat c_i}{s}/\hat{c}_i$, Remark \ref{domain remark} implies that $s\tau < 1$, and hence that, when $\tau >0$, the coefficient on $\frac{z_i}{\hat{c}_i}$ is positive while that on $\frac{\bar{z}}{\hat{c}_i}$ is negative, so that $\pdv{\hat c_i}{s}/\hat{c}_i$ is increasing in $i$.  Revenue efficiency implies that $\frac{s\tau \varepsilon}{1-s\tau} \leq 1$, so that, in the expression for $\pdv{\hat c_i}{\tau}/\hat{c}_i$, the coefficient on $\frac{z_i}{\hat{c}_i}$ is negative and that on $\frac{\bar{z}}{\hat{c}_i}$ is non-negative, so that $\pdv{\hat c_i}{\tau}/\hat{c}_i$ is decreasing in $i$.  This completes the proof. $\square$

\subsubsection{Properties of the optimum of the $E$-version of the welfare maximization problem}

\begin{prop}\label{prop: E-version properties}
The optimal level of the equality in the $E$-version of the welfare maximization problem (Section \ref{sec: change of variables}) is unique, lies in the open interval $\l(0,1\r)$, and is interior to the set of admissible equality levels at $s$.  Moreover, at the optimum, the second-order condition holds with strict inequality.  
\end{prop}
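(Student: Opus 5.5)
The plan is to transfer every property from the $\tau$-version of the problem, using the reparametrization $\tau \mapsto E(\tau,s)$. On the admissible set of tax rates this map is a $C^1$ bijection onto the admissible equality levels, with $E_\tau>0$ by Proposition \ref{prop: partials}.

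\textbf{Uniqueness.} By Proposition \ref{prop: uniqueness}, the $s$-optimal tax rate $\tau(s)$ is unique and strictly positive. By Proposition \ref{revenue efficiency Proposition} and Lemma \ref{increasing normal lemma}, it is admissible. Since $\check{\xi}(E,s)=\xi(\check{\tau}(E,s),s)$, a maximizer of the $E$-version corresponds exactly to an admissible maximizer of the $\tau$-version. So $E(s)=E(\tau(s),s)$ is the unique optimum.

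\textbf{$E(s)\in(0,1)$.} By Proposition \ref{prop: positivity}, $\hat{c}_i>0$ for all $i$, so $\xi>0$ and $\mu>0$, hence $E>0$. Order-preservation makes $\hat{c}_i$ strictly increasing in $i$, hence nonconstant. Strict concavity of $u$ and Jensen then give $\xi<\mu$, so $E<1$.

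\textbf{Interiority.} The admissible tax set at $s$ is $[0,\bar\tau]$ with $\bar\tau=\min\{\frac{1}{1+\varepsilon(1-s)},\frac{1}{s(1+\varepsilon)}\}$. The order-preserving bound is an open constraint, and the upper endpoint is included only if it is the revenue bound. I must show $\tau(s)$ lies strictly inside this set.
\begin{itemize}
\item Lower endpoint: $\tau(s)>0$ by Proposition \ref{prop: uniqueness}.
\item Revenue bound: $\tau(s)\neq \tau^{\max}$ by Proposition \ref{prop: revenue maximizing not optimal}.
\item Order-preserving bound: this is open, and $\tau(s)$ satisfies it strictly by Lemma \ref{increasing normal lemma}.
\end{itemize}
Continuity of $E(\cdot,s)$ and strict monotonicity of $\tau\mapsto E$ then make $E(s)$ an interior point of the image interval.

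\textbf{Strict second-order condition.} Write $\check{\xi}(E,s)=\xi(\check\tau(E,s),s)$ and use the chain rule:
\[
\check{\xi}_{EE}=\xi_{\tau\tau}\check\tau_E^2+\xi_\tau\check\tau_{EE}.
\]
At the optimum $\xi_\tau=0$, so $\check{\xi}_{EE}=\xi_{\tau\tau}\check\tau_E^2$, with $\check\tau_E=1/E_\tau>0$. Differentiating $\xi=u^{-1}(W)$ twice and using $W_\tau=0$ gives
\[
\xi_{\tau\tau}=\frac{W_{\tau\tau}}{u'(\xi)}<0
\]
by Lemma \ref{strictly concave W}. Hence $\check{\xi}_{EE}<0$.

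\textbf{Main obstacle.} The delicate step is the interiority boundary bookkeeping. The admissible set mixes an open constraint (order-preserving) with a closed one (revenue efficiency), so I will handle each endpoint separately as above. I also need $\check\tau$ to be $C^2$ near $E(s)$, which follows from the inverse function theorem since $E_\tau>0$.
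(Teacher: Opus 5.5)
Your proposal is correct and follows essentially the same route as the paper. Uniqueness is transferred from Proposition \ref{prop: uniqueness} via the one-to-one map between admissible tax rates and admissible equality levels, and $E(s)\in(0,1)$ follows from positivity and order-preservation. Interiority follows from $\tau(s)>0$ and Proposition \ref{prop: revenue maximizing not optimal}, and the strict second-order condition follows from $W_\tau=0$ together with Lemma \ref{strictly concave W}. The only cosmetic difference is that you pass through $\xi_{\tau\tau}=W_{\tau\tau}/u'(\xi)$, whereas the paper differentiates $\check{\xi}_E=W_\tau\check{\tau}_E/u'(\check{\xi})$ directly in $E$.
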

\textit{Proof.}  The domain for the $E$-version of the welfare maximization problem is the set of admissible equality levels, which correspond one-to-one with the admissible tax rates.  Since there is a unique optimal tax rate (Proposition \ref{prop: uniqueness}) in the $\tau$-version of the problem, there is a unique optimal equality level in the $E$-version.  

At the optimum, $\hat{c}_i$ is increasing in $i$ because the tax rate must be order-preserving (Lemma~\ref{increasing normal lemma}). Thus, consumption equivalents must be unequally distributed, so $\xi < \mu$. It follows that $E =\frac{\xi}{\mu} < 1$. On the other hand, since $\hat{c}_i >0, \forall i$ (Proposition \ref{prop: positivity}), $\xi >0$, and hence $E>0$. Let $\tau^{\mathrm{constant}}= \frac{1}{1+\varepsilon\l(1-s\r)}$ be the tax rate at which all agents attain the same utility, and recall that $\tau^{\mathrm{max}}=\frac{1}{s\l(1+\varepsilon\r)}$ is the revenue-maximizing tax rate.  Then the set of admissible equality levels at $s$ is the interval $\l[E\l(0,s\r),E\l(\tau^{\mathrm{constant}},s\r)\r)$ if $\tau^{\mathrm{constant}}\leq\tau^{\mathrm{max}} $, and $\l[E\l(0,s\r),E\l(\tau^{\mathrm{max}},s\r)\r]$ if $\tau^{\mathrm{max}}< \tau^{\mathrm{constant}}$. It follows from Proposition \ref{prop: uniqueness} that a zero tax rate cannot be optimal, and hence $E\l(0,s\r)$ cannot be the $s$-optimal equality level.  Proposition \ref{prop: revenue maximizing not optimal} implies that $E\l(\tau^{\mathrm{max}},s\r)$ cannot be $s$-optimal.  Thus, the optimal level of equality is interior to the set of admissible equality levels. 

Next observe that:
\begin{align*}
\check{\xi}\l(E,s\r) = u^{-1}\l(W\l(\check{\tau}\l(E,s\r),s\r)\r).
\end{align*}
Then we have $\l.\pdv{\check{\xi}}{E} = \pdv{W}{\tau} \pdv{\check{\tau}}{E}\r/u'\l(\check{\xi}\r)$.  Hence, 
\begin{align*}
\pdv[2]{\check{\xi}}{E} = \pdv[2]{W}{\tau} \frac{ \l(\pdv{\check{\tau}}{E}\r)^2} {u'\l(\check{\xi}\r)}+ \pdv{W}{\tau} \times \pdv{E} \l[\frac{\pdv{\check{\tau}}{E}}{u'\l(\check{\xi}\r)}\r].
\end{align*}
We have (i) $\frac{ \l(\pdv{\check{\tau}}{E}\r)^2} {u'\l(\check{\xi}\r)}>0$, (ii) by Lemma \ref{strictly concave W}, $\pdv[2]{W}{\tau}<0$, and (iii) when $E$ is optimal, and hence also $\check{\tau}\l(E,s\r)$ is optimal, then $\pdv{W}{\tau}=0$, hence also, $\pdv{W}{\tau} \times \pdv{E} \l[\frac{\pdv{\check{\tau}}{E}}{u'\l(\check{\xi}\r)}\r]=0$.  So, at the $s$-optimal $E$, $\pdv[2]{\check{\xi}}{E} <0$. $\square$

\subsubsection{Proof of Lemma \ref{lem:decomposition}: Income and substitution effects}
Differentiating equation (\ref{compensated definition}), we have:
\begin{align*}
\l.\pdv{E^\mathrm{c}}{s_1}\r|_{s_1=s_0} =&\l.\pdv{E^{\mathrm{rel}}}{s}\r|_{\substack{s = s_0 \\ \delta = 0}}  + \l.\pdv{E^{\mathrm{rel}}}{\delta}\r|_{\substack{s = s_0 \\ \delta = 0}}   \times \l.\pdv{\delta}{s_1}\r|_{s_1=s_0} \\
= & \l. \dv{E}{s}\r|_{s =s_0} -\l.\pdv{E^{\mathrm{rel}}}{\delta}\r|_{\substack{s = s_0 \\ \delta = 0}}\times \l.\pdv{\check{\mu}}{s}\r|_{\substack{E= E\l(s_0\r) \\ s=s_0}},
\end{align*}
where the second equality follows from the fact that $E\l(s\r) = E^{\mathrm{rel}}\l(s,0\r)$ and (\ref{exact compensation}).  The result now follows by rearranging terms to solve for $\dv{E}{s}$. $\square$ 

\subsubsection{Monotonicity of equality in frontier expansion}
\begin{lem}
$\displaystyle\l.\pdv{E^{\mathrm{rel}}}{\delta}\r|_{\delta =0} >0$.
\end{lem}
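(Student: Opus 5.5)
The plan is to treat the relaxed problem as a one-dimensional maximization in $E$ and apply the implicit function theorem to its first-order condition. Since $\xi=E\mu$ is strictly increasing in $\mu$ for $E>0$, the constraint $\mu\le\check{\mu}(E,s)+\delta$ binds at any optimum. The relaxed problem therefore reduces to choosing admissible $E$ to maximize
\begin{align*}
\Phi(E,\delta):=E\left[\check{\mu}(E,s)+\delta\right].
\end{align*}
At $\delta=0$ we have $\Phi(E,0)=\check{\xi}(E,s)$. By Proposition \ref{prop: E-version properties}, the maximizer $E(s)$ is unique, lies in $(0,1)$, is interior to the set of admissible equality levels, and satisfies the strict second-order condition $\Phi_{EE}(E(s),0)<0$.

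The first-order condition is
\begin{align*}
\Phi_E(E,\delta)=\check{\mu}(E,s)+\delta+E\,\check{\mu}_E(E,s)=0.
\end{align*}
The cross partial is $\Phi_{E\delta}=1>0$, and $\Phi_{EE}$ does not depend on $\delta$. Because the optimum is interior and the second-order condition is strict, the implicit function theorem applies near $\delta=0$. It gives a locally unique, continuously differentiable solution $E^{\mathrm{rel}}(s,\delta)$ with
\begin{align*}
\left.\pdv{E^{\mathrm{rel}}}{\delta}\right|_{\delta=0}=-\frac{\Phi_{E\delta}}{\Phi_{EE}}=-\frac{1}{\Phi_{EE}(E(s),0)}>0.
\end{align*}
Equivalently, in Cobb-Douglas terms, the marginal rate of substitution $\mu/E$ rises with $\mu$ at fixed $E$. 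An upward shift of the frontier therefore makes equality relatively more valuable at the old point, and the tangency moves rightward.

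The main obstacle is confirming that this locally defined critical point is the global optimum of the relaxed problem for small $\delta$, so that $E^{\mathrm{rel}}$ really is the function the implicit function theorem produced. I would handle this with a standard argument. The admissible set does not depend on $\delta$, the objective $\Phi$ is continuous in $(E,\delta)$, and the $\delta=0$ maximizer is unique. Berge's maximum theorem then gives continuity of the argmax at $\delta=0$, using compactness of the closure of the admissible set and the boundary exclusions in the proof of Proposition \ref{prop: E-version properties}. For small $\delta$ the global maximizer is therefore near $E(s)$, hence interior, hence a critical point, and so it coincides with the implicit-function solution.

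Where the admissible set is half-open near $\tau^{\mathrm{constant}}$, I would note that $E$ approaches a value at which $\xi$ does not exceed the optimum by a margin. Perturbing by a small $\delta$ only changes $\Phi$ by $E\delta\le\delta$, so the uniform gap persists and no boundary point can become optimal.
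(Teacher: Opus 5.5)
Your proposal is correct and matches the paper's proof: you reduce the relaxed problem to maximizing $E\left(\check{\mu}(E,s)+\delta\right)$, note that the cross partial in $(E,\delta)$ equals $1$, and apply the implicit function theorem to the first-order condition, using the strict second-order condition from Proposition~\ref{prop: E-version properties} to get $-1/\Phi_{EE}>0$. The paper takes for granted that the implicit-function branch is the global relaxed optimum for small $\delta$; your extra argument for this, via the uniform bound $|\Phi(E,\delta)-\Phi(E,0)|\le|\delta|$ and the boundary exclusions, is sound.
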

\textit{Proof.}  We can equivalently reformulate the relaxed welfare maximization problem of Section \ref{Sec: Decomposing} as follows:

\vspace{0.8em}
\noindent \textbf{Welfare Maximization (reformulation of relaxed version) { }}\\ Choose $E$ to maximize $\hat{\xi}\l(E,s,\delta\r):=  E \times \l(\check{\mu}\l(E,s\r) + \delta\r)$.
\vspace{0.8em}

\noindent As this is an equivalent version of the problem, the optimal level of equality remains $E^{\mathrm{rel}}\l(s,\delta\r)$. 

We have:
\begin{align*}
\pdv{\hat{\xi}}{\delta} =&\; E,
\\
\pdv{\hat{\xi}}{E}{\delta} =&\; 1. 
\end{align*}
Note that, when $\delta =0$, $E^{\mathrm{rel}}\l(s,\delta\r)=E\l(s\r)$.  By the implicit function theorem, applied to the first order condition of the relaxed version of the welfare maximization problem,
\begin{align*}
\l.\pdv{E^{\mathrm{rel}}}{\delta}\r|_{\delta =0} = \l.-\frac{\pdv{\hat{\xi}}{\delta}{E} }{\pdv[2]{\hat{\xi}}{E}}\r|_{\substack{E= E\l(s\r) \\ \delta = 0}} = -\frac{1}{\l.\pdv[2]{\hat{\xi}}{E}\r|_{\substack{E= E\l(s\r) \\ \delta = 0}}}> 0,
\end{align*}
where we have appealed to Proposition \ref{prop: E-version properties} for $\l.\pdv[2]{\hat{\xi}}{E}\r|_{\substack{E= E\l(s\r) \\ \delta = 0}}< 0$, as the relaxed and unrelaxed versions of the problem coincide when $\delta =0$. $\square$

\subsubsection{Proof of Lemma \ref{lem:substitution}: The substitution effect and the price of equality}

Define the following optimization problem:
  
 \vspace{0.8em}
\noindent \textbf{Compensated welfare maximization problem.}\\ 
Choose $E$ to maximize $\xi^\circ\l(E,s_0,s_1\r) =: E \times \l(\check{\mu}\l(E,s_1\r) + \delta\l(s_0,s_1\r)\r)$.  
\vspace{0.8em}

\noindent It follows from (\ref{compensated definition}) that the optimal solution to this compensated problem is just the compensated demand for equality $E^{\mathrm{c}}\l(s_0,s_1\r)$.  Applying the implicit function theorem to the first order condition of the compensated welfare maximization problem, and noting that, when $s_1=s_0$, $E^\mathrm{c}\l(s_0,s_1\r) = E\l(s_0\r)$, we have 
\begin{align*}
\l.\pdv{E^\mathrm{c}}{s_1}\r|_{s_1=s_0} = \l.-\frac{\pdv{\xi^\circ}{s_1}{E} }{\pdv[2]{\xi^\circ}{E}}\r|_{\substack{E= E\l(s\r) \\ s_1=s_0}}.
\end{align*}
Because the compensated welfare maximization problem coincides with the $E$-version of the welfare maximization problem when $s_1=s_0$, Proposition \ref{prop: E-version properties} implies that $\l.\pdv[2]{\xi^\circ}{E}\r|_{\substack{E= E\l(s\r) \\ s_1=s_0}}< 0$. Thus,
\begin{align}\label{equality of two signs}
\textup{sign}\l(\l.\pdv{E^c}{s_1}\r|_{s_1=s_0}\r)= \textup{sign}\l(\l.\pdv{\xi^\circ}{s_1}{E}\r|_{\substack{E= E\l(s_0\r) \\ s_1=s_0}}\r).
\end{align}
  Observe that $\pdv{\xi^\circ}{E} =  \l(\check{\mu}+ \delta\r)+ E  \pdv{\check{\mu}}{E}$.  So, 
\begin{align}\label{cross partial derivation}
\begin{split}
\l.\pdv{\xi^\circ}{s_1}{E}\r|_{\substack{E= E\l(s_0\r) \\ s_1=s_0}} = &\;\underbrace{\l.\pdv{\check{\mu}}{s}\r|_{\substack{E= E\l(s_0\r) \\ s=s_0}} +\l. \pdv{\delta}{s_1}\r|_{s_1=s_0}}_0 +E\l(s_0\r) \times \l.\pdv{\check{\mu}}{s}{E}\r|_{\substack{E= E\l(s_0\r) \\ s=s_0}}\\
=&\; E\l(s_0\r) \times \l.\pdv{\check{\mu}}{s}{E}\r|_{\substack{E= E\l(s_0\r) \\ s=s_0}},\\
=&\; E\l(s_0\r) \times \l(- \pdv{p_E}{s}\l(E\l(s_0\r),s_0\r)\r),
\end{split}
\end{align}
where the fact that two terms indicated above sum to zero follows from (\ref{exact compensation}), and the last equality follows from (\ref{definition of pE}), the definition of $p_E$. Proposition \ref{prop: E-version properties} implies that $E\l(s_0\r) >0$.  It now follows from (\ref{equality of two signs}) and (\ref{cross partial derivation}) that $\textup{sign}\l(\l.\pdv{E^c}{s_1}\r|_{s_1=s_0}\r)= \textup{sign}\l(- \pdv{p_E}{s}\l(E\l(s_0\r),s_0\r)\r)$. 
 $\square$

 \subsubsection{$-\mu_{s\tau}>0$ and $-\mu_{\tau\tau}>0$.}

\begin{lem}\label{lem: effect on marginal efficiency cost of taxes}
Assume that $\tau$ is admissible and $\tau > 0$.  Then $-\mu_{s\tau}>0$ and $-\mu_{\tau\tau}>0$.
\end{lem}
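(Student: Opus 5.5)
The plan is to compute $-\mu_\tau$ in closed form as a function of $(\tau,s)$ and then differentiate it directly; no envelope or Chebyshev argument should be needed. From the proof of Proposition~\ref{prop: partials} we already have $\mu_\tau = -\frac{s^2\tau\varepsilon\bar{z}}{1-s\tau}$. By (\ref{optimal income s}), $\bar{z} = (1-s\tau)^\varepsilon \bar{W}$, where $\bar{W} := \int_0^1 w_i^{1+\varepsilon}\dd i > 0$ does not depend on $(\tau,s)$. Substituting gives
\begin{align*}
-\mu_\tau(\tau,s) = \varepsilon \bar{W}\, s^2 \tau\, (1-s\tau)^{\varepsilon-1}.
\end{align*}
This is a product of explicit powers, and by Remarks~\ref{domain remark} and~\ref{domain restriction remark 2} we have $1-s\tau>0$ on the relevant domain, so differentiation is routine.

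First, differentiate with respect to $s$ and factor out the common positive term $\varepsilon\bar{W}\tau s(1-s\tau)^{\varepsilon-2}$. The remaining bracket should be $2(1-s\tau) - (\varepsilon-1)s\tau = 2 - s\tau(1+\varepsilon)$. Revenue efficiency (Proposition~\ref{revenue efficiency prop}) gives $s\tau(1+\varepsilon)\le 1$, so the bracket is at least $1$. Since $\tau>0$ and $s>0$, this yields $-\mu_{s\tau}>0$. This is the only place where the hypothesis $\tau>0$ enters, which is consistent with the earlier footnote that it is not needed for the $\mu_{\tau\tau}$ part.

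Second, differentiate with respect to $\tau$ and factor out $\varepsilon\bar{W}s^2(1-s\tau)^{\varepsilon-2}$. The bracket should be $(1-s\tau) - (\varepsilon-1)s\tau = 1-\varepsilon s\tau$. Revenue efficiency means $\frac{s\tau\varepsilon}{1-s\tau}\le 1$, and as noted in Section~\ref{sec: comparative statics of income} this implies $s\tau\varepsilon<1$. Hence the bracket is strictly positive and $-\mu_{\tau\tau}>0$.

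I do not anticipate a genuine obstacle. The one point needing care is to differentiate the closed form of $\mu_\tau$, in which $\bar{z}$ has been replaced by its explicit dependence on $(\tau,s)$, rather than treating $\bar{z}$ as fixed. As a cross-check, the same brackets can be recovered by differentiating $\mu_\tau$ with the product rule while using $\partial\bar{z}/\partial s = -\frac{\varepsilon\tau\bar{z}}{1-s\tau}$ and $\partial\bar{z}/\partial\tau = -\frac{\varepsilon s\bar{z}}{1-s\tau}$ from (\ref{dvzi}) and (\ref{tax derivatives}). Order-preservation is not used in either step; only revenue efficiency, positivity of $s$ and $\tau$, and the domain restriction $s\tau<1$ are.
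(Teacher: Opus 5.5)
Your proposal is correct, and it reaches the same formulas as the paper by a somewhat different route.

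The paper works agent by agent. It writes $-\mu_\tau=-\int (1-v_i'(z_i))\,\frac{\partial z_i}{\partial \tau}\,di$ and differentiates under the integral. For $-\mu_{s\tau}$ it never assembles a closed form. Instead it signs the two resulting terms separately:
\begin{itemize}
\item one comes from convexity of effort costs: $v_i''>0$ multiplied by $\frac{\partial z_i}{\partial s}\frac{\partial z_i}{\partial \tau}>0$;
\item the other comes from the cross-partial $\frac{\partial^2 z_i}{\partial s\,\partial\tau}<0$, which is where revenue efficiency ($s\tau\varepsilon<1$) enters.
\end{itemize}
Only for $-\mu_{\tau\tau}$, where $\frac{\partial^2 z_i}{\partial\tau^2}$ has ambiguous sign, does the paper substitute explicit expressions. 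It arrives at $\frac{\varepsilon s^2}{(1-s\tau)^2}(1-s\tau\varepsilon)\bar z$, which is exactly your second bracket.

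You instead collapse everything to the aggregate closed form $-\mu_\tau=\varepsilon s^2\tau(1-s\tau)^{\varepsilon-1}\int_0^1 w_i^{1+\varepsilon}\,di$ and differentiate a product of powers. This buys you several things:
\begin{itemize}
\item it is shorter and treats both inequalities uniformly;
\item it yields the sharper statement that $-\mu_{s\tau}$ is proportional to $2-s\tau(1+\varepsilon)\geq 1$;
\item it makes transparent that $\tau>0$ is needed only for the cross-partial.
\end{itemize}

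What the paper's termwise argument buys is interpretability. Its two terms are precisely the two channels described in Section~\ref{sec:marginal efficiency cost}: lower income reduces the marginal cost savings through convexity, and income becomes more responsive to the tax rate. It also relies only on sign information about the comparative statics of $z_i$, not on the full closed form of $\mu_\tau$. Note, though, that the sign of $\frac{\partial^2 z_i}{\partial s\,\partial\tau}$ is itself derived from the isoelastic specification.
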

\textit{Proof.}  We have
 \begin{align*}
 \mu =&\; \int \hat{c}_i \dd i =  \int\l[ z_i\l(1-\tau\r) - v_i\l(z_i\r) +\tau \bar{z}\r]\dd i = \int \l[z_i - v_i\l(z_i\r) \r]\dd i,
  \end{align*}
where $\int \l[z_i - v_i\l(z_i\r) \r]\dd i$ can be interpreted as the sum of agent money-metric utilities prior to redistribution.  That redistributive transfers can be eliminated from the expression for $\mu$ reflects that transfers do not affect efficiency.  It follows that the marginal efficiency cost of the tax rate is the sum of the marginal reductions in pre-redistribution money-metric utility across agents due to behavioral responses to the tax rate: 
\begin{align*}
    -\mu_\tau = -\int \l(1-v'_i\l(z_i\r)\r) \pdv{z_i}{\tau}\dd i.
\end{align*}
Differentiating with respect to $s$, it follows that 
\begin{align*}
-\mu_{s\tau} =   \int \underbrace{v_i''\l(z_i\r)}_+ \underbrace{\pdv{z_i}{s}}_-\underbrace{\pdv{z_i}{\tau}}_- \dd i - \int \underbrace{\l(1-v'_i\l(z_i\r)\r)}_+ \underbrace{\pdv{z_i}{s}{\tau}}_- \dd i >0.
\end{align*}
All these signs are derived in Section \ref{sec: comparative statics of income}. (The only one which is not obvious is $\pdv{z_i}{s}{\tau}$.)  Similarly, 
\begin{align*}
-\mu_{\tau\tau} =&\;  \int \underbrace{v_i''\l(z_i\r)}_+\underbrace{\l( \pdv{z_i}{\tau}\r)^2}_+ \dd i -  \int\underbrace{\l(1- v'_i\l(z_i\r)\r)}_+\underbrace{\pdv[2]{z_i}{\tau}}_? \dd i\\
=&\; \int \frac{1}{\varepsilon} \frac{1-s\tau}{z_i} \l(- \frac{\varepsilon s}{1-s\tau}z_i\r)^2 \dd i - \int s\tau\l( -\frac{\varepsilon(1-\varepsilon)s^2}{(1-s\tau)^2}z_i\r) \dd i\\
= &\; \frac{\varepsilon s^2}{\l(1-s\tau\r)^2}\l[ \l(1-s\tau\r)+ s\tau \l(1-\varepsilon\r)\r]\bar{z}\\
=&\; \frac{\varepsilon s^2}{\l(1-s\tau\r)^2} \l(1-s\tau \varepsilon \r) \bar{z} > 0.
\end{align*}
Unlike $\mu_{s\tau}$, $\mu_{\tau\tau}$ could not be signed on the basis of the signs of the individual terms alone because the sign of $\pdv[2]{z_i}{\tau}$ is ambiguous (see Section \ref{sec: comparative statics of income} for the signs).  So it is necessary to plug in specific expressions from Section \ref{sec: comparative statics of income} and compute the overall sign of the expression.  We use the fact that revenue efficiency implies that $s\tau \varepsilon < 1$ and our domain restriction implies $s\tau < 1$ (see Remarks \ref{domain remark} and \ref{domain restriction remark 2}). $\square$

\subsubsection{Along an iso-equality path, the marginal equality benefit of the tax rate falls as salience increases}\label{sec: marginal equality benefit}

We would like to show that 

\begin{align}\label{E tau along the path}\dv{s}E_\tau\l(\check{\tau}\l(E,s\r),s\r) < 0.\end{align}  By Proposition \ref{prop: E-version properties}, for the broader argument to establish Theorem \ref{equality theorem}, it is sufficient to restrict attention to equality levels $E$ that are interior to the set of admissible equality levels at $s$.\footnote{See the proof of Proposition \ref{prop: E-version properties} for a discussion of the interior set.} We explain in Section \ref{sec:marginal equality benefit} why, in order to establish (\ref{E tau along the path}), it is sufficient to establish the following ordinary differential equation.
\begin{lem}
For all $s \in \l(0,1\r]$ and all equality levels $E$ that are interior to the set of admissible equality levels at $s$,
\begin{align}\label{ODE}
    \check{\tau}_s\l(E,s\r) = \varepsilon \l[\check{\tau}\l(E,s\r)\r]^2.
    \end{align}
\end{lem}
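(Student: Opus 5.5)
The plan is to exploit the scale invariance of the Atkinson index (\ref{Atkinson inequality index}) together with the explicit representation (\ref{hat c i in terms of w i}) of equivalent consumption in terms of the exogenous wages. By (\ref{hat c i in terms of w i}),
\begin{align*}
\hat{c}_i\l(\tau,s\r) = \l(1-s\tau\r)^\varepsilon \tau \l[ \psi\l(\tau,s\r) w_i^{1+\varepsilon} + \int w_j^{1+\varepsilon}\dd j\r], \qquad \psi\l(\tau,s\r) := \frac{1-\tau\l(1+\varepsilon\l(1-s\r)\r)}{\tau\l(1+\varepsilon\r)},
\end{align*}
for $\tau>0$. This is the same $\psi$ that appears in (\ref{cNL expression}), now viewed as a function of both arguments. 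The prefactor $\l(1-s\tau\r)^\varepsilon\tau$ is common to all $i$. Because $E$ is scale invariant (as used in the proof of Lemma \ref{useful inequality lemma}), $E\l(\tau,s\r)$ depends on $\l(\tau,s\r)$ only through the single number $\psi\l(\tau,s\r)$. Write $E\l(\tau,s\r)=\Phi\l(\psi\l(\tau,s\r)\r)$, where $\Phi\l(\psi\r)$ is the Atkinson equality of the profile $\l(\psi w_i^{1+\varepsilon}+\int w_j^{1+\varepsilon}\dd j\r)_i$. Order preservation corresponds exactly to $\psi>0$, consistent with Proposition \ref{prop: order preserving condition}.

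The next step is to show that holding $E$ fixed is equivalent to holding $\psi$ fixed. For fixed $s$, $\psi$ is strictly decreasing in $\tau$, since $\psi_\tau=-1/\l(\tau^2\l(1+\varepsilon\r)\r)$. By Table \ref{tab:partials} (Proposition \ref{prop: partials}), $E_\tau>0$ at admissible $\tau>0$. Hence $\Phi$ is strictly monotone on the relevant range: $\Phi'\l(\psi\r)=E_\tau/\psi_\tau<0$. One can also see this directly from Lemma \ref{Lorenz lemma}, since raising $\psi$ raises the growth rate of top incomes relative to bottom incomes. Therefore the iso-equality condition $E\l(\check{\tau}\l(E,s\r),s\r)=E$ is equivalent to $\psi\l(\check{\tau}\l(E,s\r),s\r)=\Phi^{-1}\l(E\r)$, a constant independent of $s$.

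I also need differentiability of $\check{\tau}$ in $s$ at interior admissible $E$. This follows from the implicit function theorem applied to (\ref{definition of tilde tau}), using $E_\tau>0$. Interiority guarantees that nearby $s$ keep the path inside the admissible set, where $\tau>0$ and both $E$ and $\psi$ are smooth.

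The final step is to differentiate the identity $\l(1+\varepsilon\r)\psi = 1/\check{\tau} - 1 - \varepsilon + \varepsilon s = \text{const}$ in $s$. This gives $-\check{\tau}_s/\check{\tau}^2+\varepsilon=0$, which is exactly (\ref{ODE}). I expect the main obstacle to be the clean justification that $E$ is a function of $\psi$ alone and strictly monotone in it; this relies on the factorization above, on scale invariance, and on $E_\tau>0$ from Proposition \ref{prop: partials}. Once that is in place, the ODE is immediate.
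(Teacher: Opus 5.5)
Your proof is correct, and it reaches the paper's key invariant by a different route. Your $\psi$ is exactly the paper's ratio $a(\tau,s)/\tau$, since $a=\frac{1-\tau(1+\varepsilon(1-s))}{1+\varepsilon}$. The paper's pivotal step (\ref{equal rates of growth}) says precisely that this ratio is constant along an iso-equality path.

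The paper argues locally. It writes $\hat{c}_i=\bar{z}[k_i a+\tau]$ and differentiates $\log\hat{c}_i$ along the path. It then shows the growth rate is monotone in $i$, with sign governed by $\dv{a}{s}\check{\tau}-a\,\check{\tau}_s$, and invokes Lemma \ref{Lorenz lemma} by contraposition to force that sign to be zero. Some algebra then turns $\dv{s}\log a=\dv{s}\log\check{\tau}$ into (\ref{ODE}).

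You instead factor out the common scalar $(1-s\tau)^\varepsilon\tau$ and use scale invariance once to get the global identity $E(\tau,s)=\Phi(\psi(\tau,s))$. Iso-equality paths are then literally level sets of $\psi$, and the ODE follows in one line. Your route also buys more than the ODE: it gives the closed form $\check{\tau}(E,s)=1/\bigl(C_E-\varepsilon s\bigr)$ with $C_E=(1+\varepsilon)\bigl(1+\Phi^{-1}(E)\bigr)$. This is the analogue of the paper's iso-efficiency paths $\tau=C/s$, and it makes differentiability of $\check{\tau}$ in $s$ immediate, so no implicit-function step is needed. The paper's version, in exchange, keeps the economic interpretation in view: the common rebate and the individual-specific part grow at the same rate, so equivalent consumption moves proportionately.

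One tightening would help. Rather than deducing $\Phi'<0$ from $E_\tau/\psi_\tau$ at admissible points, note directly that by scale invariance $\Phi(\psi)$ is the equality of the fixed increasing profile $w_i^{1+\varepsilon}$ plus the constant $\int w_j^{1+\varepsilon}\dd j/\psi$. Lemma \ref{Lorenz lemma} then gives strict monotonicity on all of $(0,\infty)$, so injectivity of $\Phi$ along the whole path is automatic.
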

\textit{Proof.}  It follows from (\ref{simplification of an expression s}) that
\begin{align}\label{equivalent consumption simplification}
\hat{c}_i=\l[\l(1-\tau\r)-\frac{\varepsilon}{1+\varepsilon}\l(1-s\tau\r)\r] z_i +\tau \bar{z}.
\end{align}
We can split equivalent consumption (or money-metric utility) $\hat{c}_i$ into \textit{common} and \textit{individual specific} parts.  The common part is the tax rebate $\tau \bar{z}$ received by all agents.  The individual specific part is $a\l(\tau,s\r) z_i$, where
\begin{align}\label{average utility per income}
a\l(\tau,s\r)=\l(1-\tau\r)-\frac{\varepsilon}{1+\varepsilon}\l(1-s\tau\r)
\end{align}
is the average individual specific utility per unit of income, equal to the retention rate $\l(1-\tau\r)$ less the average effort cost per unit of income $\frac{v_i\l(z_i\r)}{z_i}= \frac{\varepsilon}{1+\varepsilon}\l(1-s\tau\r)$. As agents optimize in response to perceptions, this average does not depend on $i$ (see Section \ref{relation c z subsection}).  

As $s$ changes,  by definition, the tax rate $\check{\tau}\l(E,s\r)$ must vary so as to keep equality fixed.  The proof that follows argues that, in order for equality to be maintained along an iso-equality path, average individual-specific utility per income $a$ and the tax rate $\tau$ must grow at the same rate as salience changes.  The proof appeals to the fact that $z_i$ and $\bar{z}$ must grow at the same rate, but $z_i$ is increasing in $i$.  If $a$ grows faster than $\tau$, the individual-specific part of utility increases more quickly than the common part and inequality increases (because $a$ multiplies $z_i$ in $\hat{c}_i$).  If $\tau$ increases faster, the common part increases more quickly, and equality increases.  The condition that $a$ and $\tau$ grow at the same rate yields (\ref{ODE}).

First, we explain why $z_i$ and $\bar{z}$ must grow at the same rate.  By (\ref{optimal income s}), when agents best respond to perceptions, $z_i\l(\tau,s\r) = w_i^{1+\varepsilon}\l(1-s\tau\r)^\varepsilon$. It follows that $z_i\l(\tau,s\r) = k_i\bar{z}\l(\tau,s\r)$, where $k_i =\frac{w_i^{1+\varepsilon}}{\int w_j^{1+\varepsilon}\dd j}$. Note that $k_i$ is increasing in $i$.

Appealing to (\ref{average utility per income}), equation (\ref{equivalent consumption simplification}) can now be rewritten as $\hat{c}_i = \bar{z}\l[k_ia + \tau\r]$.  Expressing this relation as a function of $s$, along an iso-equality path for a fixed level of equality $E$, we have
\begin{align*}
\hat{c}_i\l(\check{\tau}\l(E,s\r),s\r) = \bar{z}\l(\check{\tau}\l(E,s\r),s\r) \l[k_i a\l(\check{\tau}\l(E,s\r),s\r) +\check{\tau}\l(E,s\r)\r].  
\end{align*}

Next we examine the growth rate of $\hat{c}_i$ in $s$ as we move along an iso-equality path.  
\begin{align}\label{growth ci path}
\dv{s}\log \hat{c}_i\l(\check{\tau}\l(E,s\r),s\r) = \underbrace{\dv{s}\log \bar{z}\l(\check{\tau}\l(E,s\r),s\r)}_A+ \underbrace{\dv{s}\log \l[k_i a\l(\check{\tau}\l(E,s\r),s\r) +\check{\tau}\l(E,s\r)\r]}_{B_i}.  
\end{align}
We now argue that this growth rate is independent of $i$, which can only happen if $\dv{s} \log a= \dv{s} \log \check{\tau}$.  

We focus on the term $B_i$ because $A$ is independent of $i$.  $B_i$ can be analyzed as follows:
\begin{align*}
\dv{s}\log \l[k_i a +\check{\tau}\r] = \frac{k_i \dv{a}{s}+\dv{\check{\tau}}{s}}{k_i a +\check{\tau}}=:\phi\l(k_i\r).
\end{align*}
We have 
\begin{align*}
\phi'\l(k_i\r) = \frac{\dv{a}{s} \l(k_i a +\check{\tau}\r) - a \l(k_i \dv{a}{s}+\dv{\check{\tau}}{s}\r) }{\l(k_i a +\check{\tau}\r)^2}= \frac{\dv{a}{s} \check{\tau} - a \dv{\check{\tau}}{s} }{\l(k_i a +\check{\tau}\r)^2}.
\end{align*}
Since $k_i$ is increasing in $i$, it follows that $\dv{s}\log \l[k_i a +\check{\tau}\r]$, and hence also $\dv{s} \log \hat{c}_i$, is either increasing, constant, or decreasing in $i$.  Specifically, $\dv{s} \log \hat{c}_i$ is increasing/decreasing in $i$ if and only if  $\dv{a}{s} \check{\tau} - a \dv{\check{\tau}}{s}$ is positive/negative. In particular, note that $\dv{s} \log \hat{c}_i$ cannot be non-monotonic in $i$ (increasing in $i$ at some values of $i$ and decreasing at others). It now follows from Lemma \ref{Lorenz lemma},\footnote{Lemma \ref{Lorenz lemma} requires that $\hat{c}_i$ be increasing in $i$, which holds because $E$ is admissible, and hence $\check{\tau}\l(E,s\r)$ is order-preserving.} by contraposition, that in order to maintain the equality required along an iso-equality path, $\dv{s}\log \hat{c}_i\l(\check{\tau}\l(E,s\r),s\r)$ must be constant in $i$---here we invoke the definition of $\check{\tau}\l(E,s\r)$. That is, \textit{along an iso-equality path, equivalent consumption changes proportionately for all agents}.  Since $\dv{s}\log \hat{c}_i$ is constant in $i$, $\dv{a}{s} \check{\tau} - a \dv{\check{\tau}}{s}$ must be zero. Equivalently, 
\begin{align}\label{equal rates of growth}
\dv{s} \log a\l(\check{\tau}\l(E,s\r),s\r)   = \dv{s} \log \check{\tau}\l(E,s\r).
\end{align}

We can rewrite (\ref{equal rates of growth}) as:
\begin{align*}
\frac{\pdv{a}{s}}{a} + \frac{\pdv{a}{\tau} }{a} \check{\tau}_s =\frac{\check{\tau}_s}{\check{\tau}},
\end{align*}
Rearranging terms, and appealing to the definition of $a\l(\tau,s\r)$ in  (\ref{equivalent consumption simplification}), we have 
\begin{align*}
\check{\tau}_s = &\;\frac{\l.\pdv{a}{s}\r/a}{\frac{1}{\check{\tau}}- \l.\pdv{a_i}{\tau}\r/a}\\
=&\; \frac{\frac{\frac{\varepsilon}{1+\varepsilon}\tau}{\l(1-\tau\r)-\frac{\varepsilon}{1+\varepsilon}\l(1-s\tau\r)}}{\frac{1}{\check{\tau}}-\frac{-1 +\frac{\varepsilon}{1+\varepsilon} s}{\l(1-\tau\r)-\frac{\varepsilon}{1+\varepsilon}\l(1-s\tau\r)}}\\
=&\; \frac{\frac{\varepsilon \check{\tau}}{\l(1+\varepsilon\r)\l(1-\check{\tau}\r) - \varepsilon\l(1-s\check{\tau}\r)}}{\frac{1}{\check{\tau}}-\frac{-\l(1+\varepsilon\r) +\varepsilon s}{\l(1+\varepsilon\r)\l(1-\check{\tau}\r) - \varepsilon\l(1-s\check{\tau}\r)}},
\end{align*}
where the last equality is derived by multiplying the numerator and denominator of the two embedded fractions by $\l(1+\varepsilon\r)$.  Multiplying the numerator and denominator of the last expression in the derivation by $\check{\tau}\l[\l(1+\varepsilon\r)\l(1-\check{\tau}\r) - \varepsilon\l(1-s\check{\tau}\r)\r]$, and noting that the denominator then simplifies to $\l(1+\varepsilon\r)\l(1-\check{\tau}\r) - \varepsilon\l(1-s\check{\tau}\r)+\check{\tau}\l[\l(1+\varepsilon\r) -\varepsilon s\r]=1$, we obtain (\ref{ODE}). $\square$

\section{Illustrative numerical examples}\label{app:simulations}

We present here a simple economy to build intuition and illustrate possible outcomes and comparative statics. One important result is that increasing tax salience has an ambiguous impact on efficiency at the optimum, in sharp contrast to the robust trade-off between honesty and equality established by Theorem \ref{equality theorem}.

We assume that, at $s=1$ and $\tau = 0.25$, income is log-normally distributed with mean $\$114,500$ and mean log deviation of income of $0.616$, truncated below at \$1,000 and above at \$2,000,000 to match our assumption of a bounded wage distribution. While purely illustrative, the mean and mean log deviation were chosen to match the United States household income distribution in 2023 \cite{guzman2024income}. As in our model setup (Sections \ref{sec: model} and \ref{sec: decomposition derivation}), utility is of the form:
\vspace{-0.3em}\begin{equation*}\vspace{-0.3em}
    U_i\l(c_i,z_i\r) = u\l(c_i - \frac{\l(\frac{z_i}{w_i}\r)^{1+\frac{1}{\varepsilon}}}{1+\frac{1}{\varepsilon}}\r),
\end{equation*}
where:
  \begin{equation*}u\l(\hat{c}\r)= \frac{\hat{c}^{1-\rho}}{1-\rho},\;\;\; \rho > 0.\end{equation*}
We consider $\varepsilon =0.25$ and $\varepsilon = 0.5$, which are in line with the range of estimates summarized by \citeasnoun{chetty2012bounds}. Using different values of $\varepsilon$ yields similar figures.  We back out a wage distribution using the assumed income distribution, utility function, $\varepsilon$, and initial values of $s$ and $\tau$. Wages are held fixed throughout the simulation, while the income distribution varies as $s$ and $\tau$ change.

We define a grid of values of $s$ and $\rho$. For each combination of $s$ and $\rho$, we calculate the optimal tax, $\tau$. This is the tax that maximizes utilitarian welfare, holding $s$ constant. Mechanically we repeatedly use the first order condition for the optimal tax to define a direction to move the tax rate that raises social welfare. We iterate this process until convergence, and check that the second order condition holds for the utilitarian optimum.

Once this tax rate has been obtained, the value of $s$ directly implies the perceived tax rate, $\tilde{\tau}=s\tau$. It is also straightforward to calculate statistics such as inequality and efficiency as defined in the paper. In every case, the fundamentals such as the utility function and wage distribution are held constant across all combinations of $s$ and $\rho$.

\subsection{Simulation results}\label{app:simresults}

We present the results of these simulations below. Figure \ref{fig:efficiency_equality} shows how equality and efficiency change at the optimal tax rate as $s$ varies. Figure \ref{fig:actual_perceived} shows the actual and perceived tax rates at the optimum. In both cases, we present results for low inequality aversion ($\rho=0.1$) and  high inequality aversion ($\rho=3$), with two different values of the elasticity, $\varepsilon$.

The maximum level of efficiency is attained when the tax rate is set to zero, and depends on the elasticity ($\varepsilon$) because this changes each individual's labor effort costs. With $\varepsilon = 0.25$, the maximum efficiency is approximately \$96,000, and, with $\varepsilon = 0.5$, it is around \$86,000. Our calibration holds the income distribution fixed at a specific value of $\l(\tau,s\r)$ independently of $\varepsilon$ and backs out the wage distribution from (\ref{zi expression}). It follows that, in our calibration, equality at $\tau=0$ depends only on $\rho$ but not $\varepsilon$.\footnote{To see why equality at $\tau=0$ does not depend on $\varepsilon$, observe that (\ref{zi expression}) and (\ref{simplification of an expression s}) together imply that $\frac{\hat{c}_i\l(0,s'\r)}{\hat{c}_j\l(0,s'\r)} = \frac{z_i\l(\tau,s\r)}{z_j\l(\tau,s\r)}$ for all agents $i$ and $j$, salience levels $s, s'$, and tax rates $\tau$. By scale invariance of the equality measure, it follows that, for any fixed $\rho$, the level of equality of equivalent consumption when $\tau=0$ is pinned down by the income distribution at \emph{any} value of $\l(\tau,s\r)$.  In our calibration, the income distribution at $\l(\tau,s\r) =\l(0.25,1\r)$ is held fixed, independently of $\varepsilon$. The level of equality of equivalent consumption when $\tau =0$ is therefore pinned down, regardless of $\varepsilon$.  Note that it is not generally the case that  $\frac{\hat{c}_i\l(\tau',s'\r)}{\hat{c}_j\l(\tau',s'\r)} = \frac{z_i\l(\tau,s\r)}{z_j\l(\tau,s\r)}$ for $\tau' \neq 0$, so equality away from $\tau=0$ is not held fixed as $\varepsilon$ varies.} In our case, setting $\tau=0$ yields $E = 0.94$ when $\rho = 0.1$ and $E = 0.17$ when $\rho = 3$. In this sense, there is much more at stake from a utilitarian perspective when $\rho$ is higher, which leads to more variation in efficiency and equality at the optimum as $s$ changes.

In line with Theorem \ref{equality theorem}, equality falls as $s$ increases. However, efficiency may rise or fall, depending on the values of the parameters. In the case where $\rho=0.1$, efficiency falls with $s$ until $s\approx 0.55$ and then rises as $s$ increases and approaches one. In contrast, when $\rho=3$, efficiency falls uniformly with $s$. Since efficiency depends only on the perceived tax rate (Section \ref{efficiency ambiguous}), the minimum level of efficiency and the maximum perceived tax rate must occur at the same value of $s$. Figure \ref{fig:actual_perceived} illustrates this: both occur at $s \approx 0.55$ when $\rho = 0.1$ and $\varepsilon = 0.25$, and similarly coincide when $\varepsilon = 0.5$.  Proposition \ref{rho less than 2 proposition} establishes that $\dv{\tau}{s} < 0$ at $s=1$ when $\rho \leq 2$. Away from $s=1$, however, the sign can reverse: in Figure~\ref{fig:actual_perceived}, panels~(b) and (d), $\dv{\tau}{s} > 0$ for low values of $s$, and we have found similar reversals for smaller values of $\rho$ (e.g., $\rho=0.5$ or $\rho=1$).

\begin{landscape}
\begin{figure}[p]
  \centering

  \begin{subfigure}{0.48\linewidth}
    \caption{Low preference for redistribution ($\rho=0.1$, $\varepsilon=0.25$)}
    \label{fig:muxi_low_left}
    \centering
    \includegraphics[width=\linewidth, trim=1pt 1pt 1pt 1pt, clip]{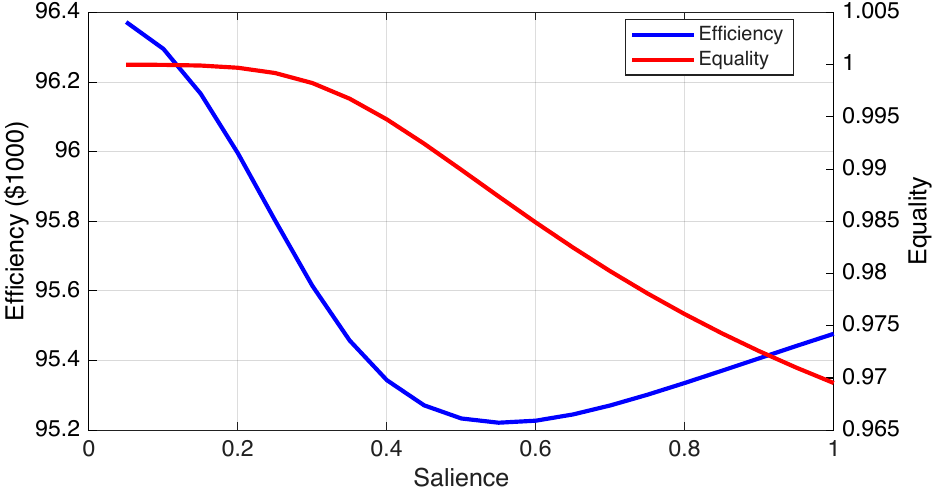}
  \end{subfigure}
  \hfill
  \begin{subfigure}{0.48\linewidth}
    \caption{Low preference for redistribution ($\rho=0.1$, $\varepsilon=0.5$)}
    \label{fig:muxi_low_right}
    \centering
    \includegraphics[width=\linewidth, trim=1pt 1pt 1pt 1pt, clip]{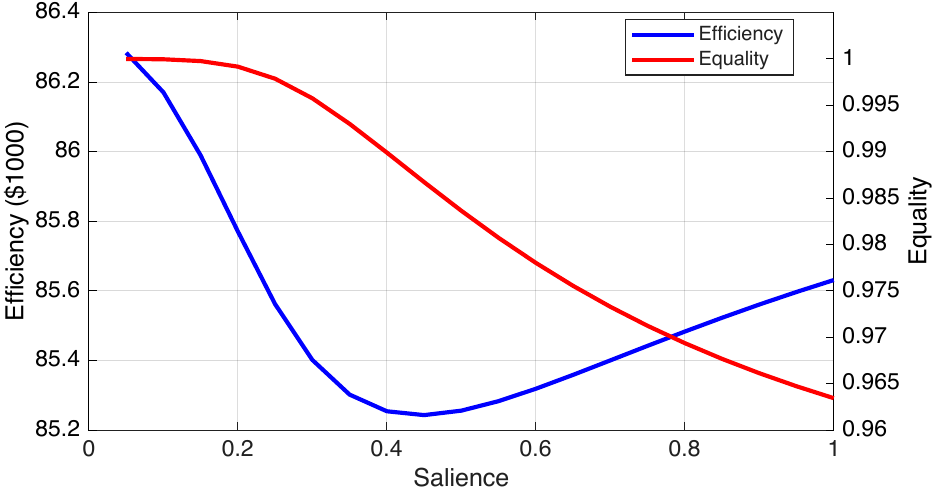}
  \end{subfigure}

  \vspace{0.8em}

  \begin{subfigure}{0.48\linewidth}
    \caption{High preference for redistribution ($\rho=3$, $\varepsilon=0.25$)}
    \label{fig:muxi_high_left}
    \centering
    \includegraphics[width=\linewidth, trim=1pt 1pt 1pt 1pt, clip]{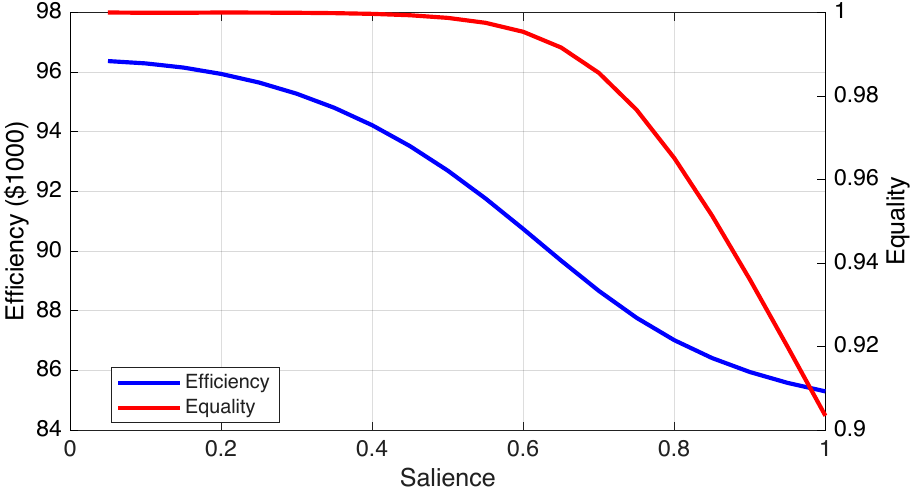}
  \end{subfigure}
  \hfill
  \begin{subfigure}{0.48\linewidth}
    \caption{High preference for redistribution ($\rho=3$, $\varepsilon=0.5$)}
    \label{fig:muxi_high_right}
    \centering
    \includegraphics[width=\linewidth, trim=1pt 1pt 1pt 1pt, clip]{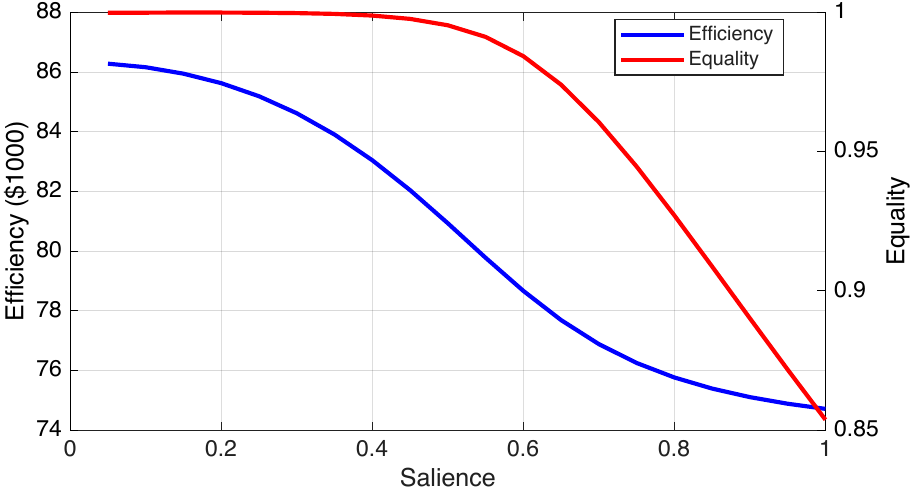}
  \end{subfigure}

  \caption{Efficiency and equality at the optimum}
  \label{fig:efficiency_equality}

  \vspace{0.5em}
  \begin{minipage}{0.95\linewidth}
    \footnotesize \textbf{Note:}
    This figure plots efficiency and equality at the optimum as a function of $s$.
    Panels (a) and (b) show the results for a utilitarian objective with low preference for redistribution ($\rho=0.1$).
    Panels (c) and (d) increase inequality aversion ($\rho=3$).
    As $s$ rises, inequality always rises (Theorem \ref{equality theorem}).
    Efficiency may rise or fall, depending on whether the income effect or substitution effect dominates
    (see Section \ref{efficiency ambiguous}).
  \end{minipage}

\end{figure}
\end{landscape}

\begin{landscape}
\begin{figure}[p]
  \centering

  \begin{subfigure}{0.48\linewidth}
    \caption{Low preference for redistribution ($\rho=0.1$, $\varepsilon=0.25$)}
    \label{fig:actual_perceived_low_left}
    \centering
    \includegraphics[width=\linewidth, trim=1pt 1pt 1pt 1pt, clip]{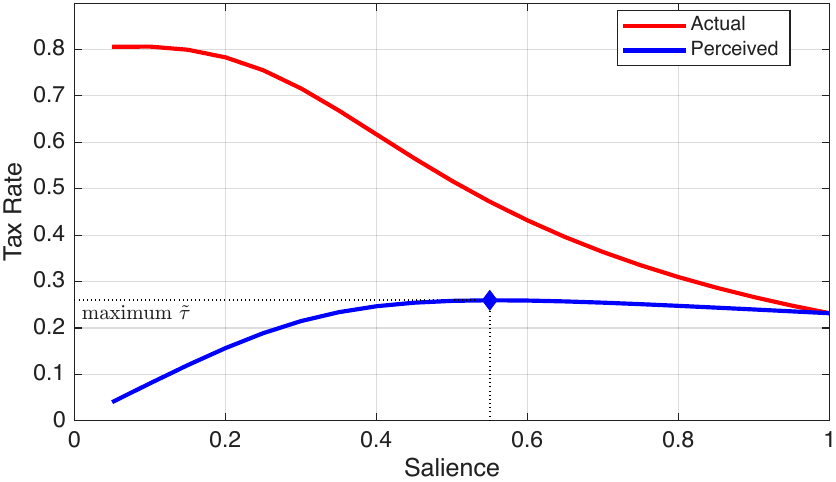}
  \end{subfigure}
  \hfill
  \begin{subfigure}{0.48\linewidth}
    \caption{Low preference for redistribution ($\rho=0.1$, $\varepsilon=0.5$)}
    \label{fig:actual_perceived_low_right}
    \centering
    \includegraphics[width=\linewidth, trim=1pt 1pt 1pt 1pt, clip]{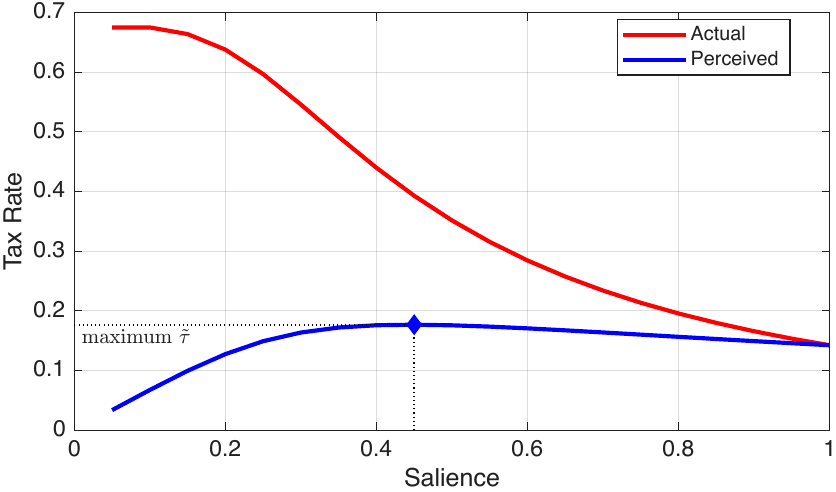}
  \end{subfigure}

  \vspace{0.8em}

  \begin{subfigure}{0.48\linewidth}
    \caption{High preference for redistribution ($\rho=3$, $\varepsilon=0.25$)}
    \label{fig:actual_perceived_high_left}
    \centering
    \includegraphics[width=\linewidth, trim=1pt 1pt 1pt 1pt, clip]{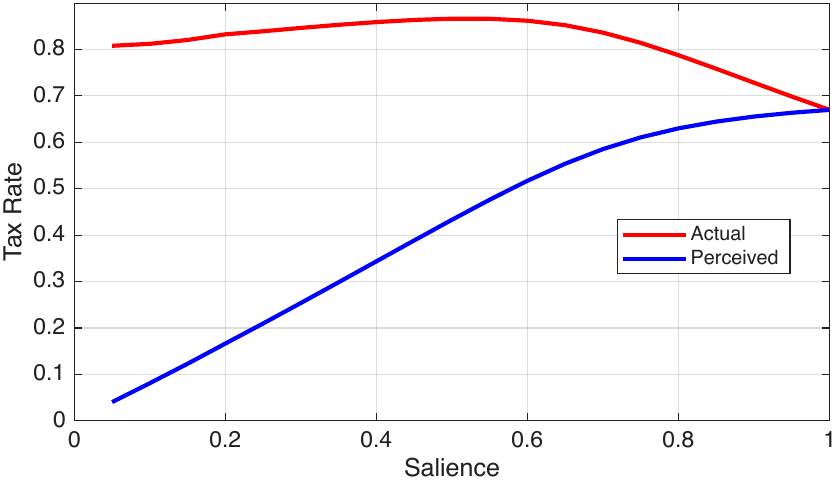}
  \end{subfigure}
  \hfill
  \begin{subfigure}{0.48\linewidth}
    \caption{High preference for redistribution ($\rho=3$, $\varepsilon=0.5$)}
    \label{fig:actual_perceived_high_right}
    \centering
    \includegraphics[width=\linewidth, trim=1pt 1pt 1pt 1pt, clip]{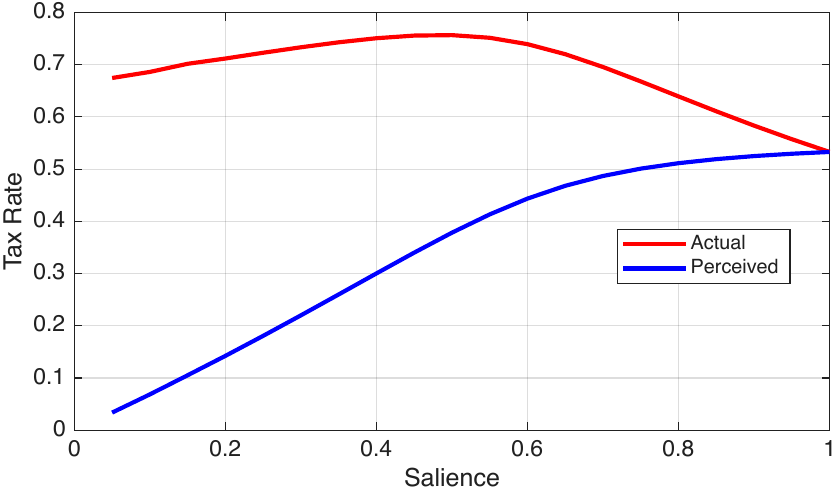}
  \end{subfigure}

  \caption{Actual and perceived tax rates at the optimum}
  \label{fig:actual_perceived}

  \vspace{0.5em}
  \begin{minipage}{0.95\linewidth}
    \footnotesize \textbf{Note:}
    This figure plots the actual and perceived optimal tax rates as a function of $s$.
    Panels (a) and (b) show the results for a utilitarian objective with low inequality aversion ($\rho=0.1$).
    Panels (c) and (d) show the case of high inequality aversion ($\rho=3$).
    Both the actual and perceived tax rates can increase or decrease as $s$ rises.
  \end{minipage}

\end{figure}
\end{landscape}

\section{Alternative interpretation of the model}\label{app: alternative interpretation section}

As outlined in Section \ref{alternative interpretation section}, one interpretation of our model is that it represents a choice over a mix between a salient income tax and a less-salient consumption tax.

As in Section \ref{sec: model}, agents are uniformly distributed on the interval $\l[0,1\r]$.  To formally map the two-tax setup to our model, we suppose that the agent pays a fraction $\tau^{L}$ of their income to the government as tax.  The agent also  pays a tax of $\tau^{C}$ on anything she consumes. Finally, she receives a transfer from the government, $G$. So, the agent faces a budget constraint:
\begin{align}\label{agent budget constraint}
\l(1+\tau^C\r)c_i = \l(1-\tau^L\r) z_i +G,
\end{align}
and, as in Section \ref{sec: model}, the agent's utility is:
\begin{align}\label{agent utility}
U_i = u\l(c_i-v_i\l(z_i\r)\r).
\end{align}

The government uses its tax revenue to cover the universal transfer $G$
and any exogenous expenditures. However, as in our original model of Section \ref{sec: model}, we assume for simplicity that all revenue is rebated lump-sum, and 
the government budget constraint is:
\begin{align}\label{government budget constraint} 
 \tau^L \bar{z} + \tau^C \bar{c} = G,
 \end{align}
where $\bar{c}=\int c_i \dd i$ is aggregate consumption. 

Aggregating individual budget constraints (\ref{agent budget constraint}), we have $\l(1+\tau^C\r)\bar{c} = \l(1-\tau^L\r)\bar{z} +G$. Then eliminating $G$ from this equation via the government budget constraint (\ref{government budget constraint}), it follows that:
\begin{align}\label{c equals z}
\bar{c} = \bar{z}.
\end{align}
This is just resource feasibility: with no outside expenditure, aggregate consumption must equal aggregate output.  

Define $\tau$ by:
\begin{align}\label{retention rate LC}
1-\tau = \frac{1-\tau^L}{1+\tau^C}, 
\end{align}
so that $1-\tau$ is the effective retention rate, in light of both taxes. Equivalently, the combined taxes correspond to facing a marginal tax rate on income of:
\begin{align}\label{tau tauL tauC}
\tau=\frac{\tau^L+\tau^C}{1+\tau^C}.
\end{align} 
without an additional consumption tax.
 
Appealing to (\ref{government budget constraint})-(\ref{tau tauL tauC}), it follows that
\begin{align}\label{ci simplification}
\begin{split}
c_i =& \frac{\l(1-\tau^L\r)}{1+\tau^C} z_i + \frac{G}{1+\tau^C} = \frac{\l(1-\tau^L\r)}{1+\tau^C} z_i + \frac{\tau^L \bar{z} +\tau^C \bar{c}}{1+\tau^C} = \frac{\l(1-\tau^L\r)}{1+\tau^C} z_i + \frac{\tau^L +\tau^C}{1+\tau^C} \bar{z} \\ =&\l(1-\tau\r) z_i + \tau \bar{z}.
\end{split}
\end{align}
Plugging (\ref{ci simplification}) into (\ref{agent utility}), it follows that:
\begin{align*}
U_i=u\l(\l(1-\tau\r) z_i  + \tau \bar{z} - v_i\l(z_i\r)\r),
\end{align*} 
which has the same form as utility in our original model, without a consumption tax.

Note that the two taxes $\tau^L$ and $\tau^C$ are equivalent instruments if both are perfectly salient.  But suppose the taxes are not equally salient. Specifically, suppose agents fully perceive the income tax $\tau^L$ but only partially perceive the consumption tax, treating it as if its rate were only $s^C \tau^C$, for some fixed salience level $s^C \in \l(0,1\r)$. Then the perceived effective retention rate is $\frac{1-\tau^{L}}{1+s^C\tau^{C}}$ rather than (\ref{retention rate LC}) because agents misperceive the consumption tax.   Now define $s$ by:
\begin{align*}
1-s\tau =  
 \frac{1-\tau^{L}}{1+s^C\tau^{C}}
\end{align*} 
That is, $s$ is the salience level in our original model that produces the same perceived incentives as with both a consumption and income tax, where the consumption tax has salience $s^C$.  Since the perceived effective retention rate exceeds the actual effective retention rate, $s \in \l(0,1\r)$.    

We can also go in the other direction.  Consider a consumption tax salience level $s^C \in \l(0,1\r)$ and imagine that this is fixed and cannot be altered.     
Now choose a marginal tax rate $\tau$ in our original model  and a salience level $s$.  Then it is possible to create a situation that is equivalent to one with marginal tax rate $\tau$ and a salience level $s$ in our original model by choosing $\tau^L$ and $\tau^C$ in the model of the current section according to the following equations:
\begin{align*}
\tau^L =&\;\tau \l[1-\frac{\l(1-\tau\r) \l(1-s\r)}{\l(1-\tau\r)-s^C\l(1-s \tau\r)}\r]&
\tau^C =&\; \frac{\tau-s\tau}{\l(1-\tau\r)-s^C\l(1-s \tau\r)}.
\end{align*}
For a fixed effective marginal tax rate $\tau$, for both $\tau^L$ and $\tau^C$ to be non-negative requires that
\begin{align}\label{salience inequality}
s \geq \frac{s^C}{1-\tau+s^C \tau}.
\end{align}
The binding constraint is that $\tau^L \geq 0$ because to minimize salience subject to a fixed effective tax rate $\tau$, one wants to set $\tau^L =0$.  Note that $\frac{s^C}{1-\tau+s^C \tau} = \frac{s^C\l(1-\tau\r) + s^C \tau}{1-\tau+s^C \tau}< 1$, implying that a non-trivial range of salience levels $s$ is achievable.
  
Thus, we can think of the government as achieving a given pair $\l(\tau,s\r)$ in our model satisfying (\ref{salience inequality}) by choosing a mix of a salient income tax and a less salient consumption tax, where the salience of the consumption tax $s^C$ is fixed.

\clearpage
\singlespacing
\bibliographystyle{aer}
\bibliography{salience.bib}

\end{document}